\numberwithin{equation}{section}
\theoremstyle{plain}
\newtheorem{thm}{Theorem}[section]
\newtheorem{prp}[thm]{Proposition}
\newtheorem{cor}[thm]{Corollary}
\newtheorem{lem}[thm]{Lemma}
\newtheorem*{conj*}{Conjecture}
\newtheorem*{fact*}{Fact}
\theoremstyle{definition}
\newtheorem{rem}[thm]{Remark}
\newtheorem{exa}[thm]{Example}
\newtheorem*{rem*}{Remark}
\newtheorem*{q1}{Problem 1}
\newtheorem*{q2}{Problem 2}
\newcommand{\dd}{\mathrm{d}}
\newcommand{\ee}{\mathrm{e}}
\newcommand{\ii}{\mathrm{i}}
\newcommand{\N}{\mathbb{N}}
\newcommand{\Z}{\mathbb{Z}}
\newcommand{\R}{\mathbb{R}}
\newcommand{\C}{\mathbb{C}}
\newcommand{\T}{\mathbb{T}}
\newcommand{\cT}{\mathcal{T}}
\newcommand{\cN}{\mathcal{N}}
\newcommand{\cA}{\mathcal{A}}
\newcommand{\cH}{\mathcal{H}}
\newcommand{\cF}{\mathcal{F}}
\newcommand{\cP}{\mathcal{P}}
\newcommand{\cI}{\mathcal{I}}
\newcommand{\fa}{\mathfrak{a}}
\newcommand{\fh}{\mathfrak{h}}
\DeclareMathOperator{\ran}{Ran}
\DeclareMathOperator{\supp}{supp}
\DeclareMathOperator{\id}{id}
\DeclareSymbolFont{extraup}{U}{zavm}{m}{n}
\DeclareMathSymbol{\varheart}{\mathalpha}{extraup}{86}
\DeclareMathSymbol{\vardiamond}{\mathalpha}{extraup}{87}
\title{Flat bands of periodic graphs}
\author{Mostafa Sabri, Pierre Youssef}
\address{Department of Mathematics, Faculty of Science, Cairo University, Giza 12613, Egypt.}
\address{Science Division, New York University Abu Dhabi, Saadiyat Island, Abu Dhabi, UAE.}
\email{mostafa.sabri@nyu.edu}
\address{Science Division, New York University Abu Dhabi, Saadiyat Island, Abu Dhabi, UAE \& Courant Institute of Mathematical Sciences, New York University, 251 Mercer st, New York, NY 10012, USA.}
\email{yp27@nyu.edu}
\subjclass[2020]{Primary 81Q10. Secondary 05C50}
\keywords{Flat bands, periodic graphs, periodic Schr\"odinger operators, lattices.}
\newlength{\temp@wc@width}
\newlength{\temp@wc@height}
\newcommand{\widecheck}[1]{%
  \setlength{\temp@wc@width}{\widthof{$#1$}}%
  \setlength{\temp@wc@height}{\heightof{$#1$}}%
  #1\hspace{-\temp@wc@width}%
  \raisebox{\temp@wc@height+2pt}[\heightof{$\widehat{#1}$}]%
     {\rotatebox[origin=c]{180}{\vbox to 0pt{\hbox{$\widehat{\hphantom{#1}}$}}}}%
}
\begin{document}

\begin{abstract}
We study flat bands of periodic graphs in a Euclidean space. These are infinitely degenerate eigenvalues of the corresponding adjacency matrix, with eigenvectors of compact support. We provide some optimal recipes to generate desired bands, some sufficient conditions for a graph to have flat bands, we characterize the set of flat bands whose eigenvectors occupy a single cell and we compute the list of such bands for small cells. We next discuss stability and rarity of flat bands in special cases. Additional folklore results are proved and many questions are still open.
\end{abstract}

\maketitle

\section{Introduction}

Consider a connected, locally finite graph $\Gamma$ which is invariant under translation by some linearly independent vectors $\fa_1,\dots,\fa_d$ (we say $\Gamma$ is $\Z^d$-periodic). Using Floquet theory (Section~\ref{sec:gen}), one sees that $\sigma(\cA_{\Gamma})$, the spectrum of the $0/1$ adjacency matrix of $\Gamma$, consists of bands. At least one of these bands will be non-degenerate (absolutely continuous) but curiously, degenerate bands do occur here, in contrast to the classical periodic Schr\"odinger operators in $\R^d$. These degenerate bands are infinitely degenerate eigenvalues for $\cA_{\Gamma}$, and they always have corresponding eigenvectors of compact support. See Figure~\ref{fig:boxnover} for an example. They are called \emph{flat bands for $\Gamma$}, the terminology coming from the fact that the corresponding Floquet eigenvalue $E_j(\theta)$ is constant in the quasimomentum $\theta$ and thus traces a single point as $\theta$ varies in $[0,1)^d$, instead of a nontrivial interval $[a,b]$. The graph of $\theta\mapsto E_j(\theta)$ is thus completely flat, see e.g. \cite[Fig.6]{RMS} for an illustration.

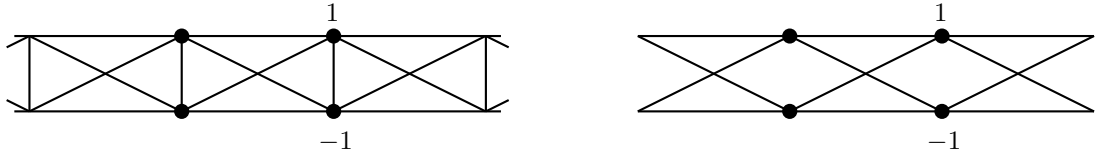
\begin{figure}[h!]
\begin{center}
\setlength{\unitlength}{1cm}
\thicklines
\begin{picture}(1.3,1.4)(-0.8,-1.2)
   \put(-7.2,0){\line(1,0){6.4}}
	 \put(-7.2,-1){\line(1,0){6.4}}
	 \put(-7,-1){\line(2,1){2}}
	 \put(-7,0){\line(2,-1){2}}
	 \put(-5,-1){\line(0,1){1}}
	 \put(-5,-1){\line(2,1){2}}
	 \put(-5,0){\line(2,-1){2}}
	 \put(-3,-1){\line(0,1){1}}
	 \put(-1,-1){\line(0,1){1}}
	 \put(-7,-1){\line(0,1){1}}
	 \put(-3,-1){\line(2,1){2}}
	 \put(-3,0){\line(2,-1){2}}
	 \put(-3,-1){\circle*{.2}}
	 \put(-3,0){\circle*{.2}}
	 \put(-5,-1){\circle*{.2}}
	 \put(-5,0){\circle*{.2}}
	 \put(-3.1,0.2){\small{$1$}}
	 \put(-3.2,-1.5){\small{$-1$}}
	 \put(-1,-1){\line(2,1){0.3}}
	 \put(-1,0){\line(2,-1){0.3}}
	 \put(-7,-1){\line(-2,1){0.3}}
	 \put(-7,0){\line(-2,-1){0.3}}
	 \put(1,0){\line(1,0){6}}
	 \put(1,-1){\line(1,0){6}}
	 \put(1,-1){\line(2,1){2}}
	 \put(1,0){\line(2,-1){2}}
	 \put(3,-1){\line(2,1){2}}
	 \put(3,0){\line(2,-1){2}}
	 \put(5,-1){\line(2,1){2}}
	 \put(5,0){\line(2,-1){2}}
	 \put(5,-1){\circle*{.2}}
	 \put(5,0){\circle*{.2}}
	 \put(3,-1){\circle*{.2}}
	 \put(3,0){\circle*{.2}}
	 \put(4.9,0.2){\small{$1$}}
	 \put(4.8,-1.5){\small{$-1$}}
\end{picture}
\caption{Flat bands $\lambda=-1$ (left) and $\lambda=0$ (right). The values of eigenvectors are given (they are extended by zero to the infinite $\Gamma$).}\label{fig:boxnover}
\end{center}
\end{figure}

There has been intensive activity in the physics community in recent years regarding these flat bands, as they have found applications in the contexts of superfluidity, topological phases of matter and many-body physics; see \cite{FLBMD,MRPS,RY19,RMS} and references therein.

Mathematically, the topic has been much less explored. The basic problem that one would want to consider is the following: ``Given a connected periodic $\Gamma$, by looking at the graph, decide whether or not $\Gamma$ has a flat band.''

More precisely, let $V_f$ be the vertices of the motif which is copied by integer translations. The translated motifs should not intersect one another and should span $\Gamma$. Let
\[
\nu = |V_f|\,,
\]
e.g. $\nu=2$ in Figure~\ref{fig:boxnover}. Several choices of $V_f$ are possible, fix one. Let $\cN_{v_i}$ be the set of neighbors of $v_i$ in the full graph $\Gamma$.

\begin{q1}
Given the data of $\nu$, $d$ and $\cN_{v_i}$ for $i=1,\dots,\nu$, conclude whether $\Gamma$ has a flat band or not.
\end{q1}

This problem is completely open, and it may be too ambitious as stated. Note that $v_i$ can hop to large lattice distances, at least from a mathematical standpoint. We reformulate this question algebraically in \S\ref{sec:alg}.

\subsection{Main results} 
In this paper we take some small steps towards understanding the emergence of flat bands. In Section~\ref{sec:gen}, we gather and prove some folklore results. In Section~\ref{sec:gens}, we provide recipes to generate graphs with or without flat bands. More precisely, we give a construction in Theorem~\ref{thm:eigenchar} to generate flat bands from eigenvalues of finite graphs, and we identify corresponding eigenvectors. Our generator is shown to be optimal in the size of $V_f$. In the special case where the eigenvalue comes from a regular graph, we provide a more efficient generator. On the other hand, we give simple operations to construct new graphs from old ones which preserve the lack of flat bands in \S\ref{sec:gennof}.

Next, we characterize the set of flat bands for which there exist eigenvectors localized within a fundamental cell in Theorem~\ref{thm:singlecellcriter}. These types of flat bands have been known to be more tractable in the physics community \cite{FLBMD}. We apply our criterion to deduce further structural results about these bands, and also to compute explicitly the set $\cF_\nu^s$ of such flat bands for $\nu\le 5$ in Section~\ref{scfb} (the superscript $s$ stands for ``single-cell''). In doing so, we notice a curious property about the growth of $\cF_\nu^s$ with $\nu$ and formulate an open question. 

Back to the main ``Problem 1'' of characterizing flat bands in general, we show in Section~\ref{sec:symm} that if $\Gamma$ is invariant under some symmetries, then it has flat bands. This approach appeared originally in \cite{RMS}, and we give a more general and precise statement, using a different proof. These symmetry conditions are not necessary however.

The converse, namely finding sufficient conditions for the absence of flat bands, seems a lot more challenging. There is the following:

\begin{fact*}
If $\nu=1$ and $d$ is arbitrary, then the spectrum consists of a single band of absolutely continuous spectrum.
\end{fact*}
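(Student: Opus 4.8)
The plan is to reduce everything to the scalar Floquet symbol and analyze the associated multiplication operator. When $\nu=1$ the fundamental motif is a single vertex, so after fixing one vertex per cell the vertex set of $\Gamma$ is identified with $\Z^d$, and $\Gamma$ becomes a Cayley graph of $\Z^d$ with a finite symmetric connection set $S=-S\subset\Z^d\setminus\{0\}$ (finiteness by local finiteness, symmetry because $\Gamma$ is undirected). Connectivity of $\Gamma$ forces $S$ to generate $\Z^d$, so in particular $S\neq\emptyset$. Specializing the Floquet theory of \S\ref{sec:gen} to $\nu=1$, the $\nu\times\nu$ Floquet matrix is the scalar
\[
E(\theta)=\sum_{n\in S}\ee^{2\pi\ii\langle n,\theta\rangle}=\sum_{n\in S}\cos\bigl(2\pi\langle n,\theta\rangle\bigr),\qquad \theta\in\T^d,
\]
which is real (since $S=-S$), real-analytic, and non-constant (since $S\neq\emptyset$). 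Equivalently, under the Fourier transform $\ell^2(\Z^d)\cong L^2(\T^d)$ the operator $\cA_\Gamma$ is unitarily equivalent to multiplication by $E$ on $L^2(\T^d)$.

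First I would record the band itself. The spectrum of the multiplication operator $M_E$ equals the essential range of $E$, which for a continuous function against Lebesgue measure on the connected compact torus is exactly $E(\T^d)$. Being the continuous image of a connected compact set, this is a single compact interval $[a,b]\subset\R$, with $a<b$ because $E$ is non-constant; this is the unique band.

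The substance of the statement is that this band is purely absolutely continuous, i.e.\ $M_E$ has neither point nor singular continuous spectrum. For any $f\in L^2(\T^d)$ the spectral measure is the pushforward $\mu_f=E_*(|f|^2\,\dd\theta)$, so it suffices to show that the preimage under $E$ of every Lebesgue-null set $A\subset\R$ is Lebesgue-null in $\T^d$; then $\mu_f(A)=\int_{E^{-1}(A)}|f|^2\,\dd\theta=0$, making every spectral measure absolutely continuous. To prove this, set $C=\{\theta:\nabla E(\theta)=0\}$. Since $E$ is real-analytic and non-constant on the connected $\T^d$, the function $|\nabla E|^2$ is real-analytic and not identically zero, so its zero set $C$ has Lebesgue measure zero. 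On $\T^d\setminus C$ the map $E$ is a submersion, so by the implicit function theorem each point has a neighborhood $U$ with coordinates in which $E$ is a single coordinate; Fubini then gives $\vol\bigl(E^{-1}(A)\cap U\bigr)=0$ for such charts, and covering $\T^d\setminus C$ by countably many of them yields $\vol\bigl(E^{-1}(A)\setminus C\bigr)=0$. Combined with $\vol(C)=0$ this gives $\vol\bigl(E^{-1}(A)\bigr)=0$, as required. Alternatively, the coarea formula exhibits $E_*\,\dd\theta$ directly as an absolutely continuous measure with density $t\mapsto\int_{E^{-1}(t)\setminus C}|\nabla E|^{-1}\,\dd\mathcal{H}^{d-1}$.

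The easy half is the absence of flat bands: an eigenvalue $c$ would require $\vol\bigl(E^{-1}(c)\bigr)>0$, which is impossible since $E-c$ is a nonzero real-analytic function and hence has a null zero set. The genuine obstacle is ruling out singular continuous spectrum, which is precisely the measure-theoretic claim that $E$ pulls null sets back to null sets; the real-analyticity of $E$, forcing the critical set $C$ to be negligible, is what makes this work and is the one place where more than soft functional analysis is used.
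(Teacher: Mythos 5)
Your proof is correct and follows essentially the same route as the paper's one-line argument: conjugate by the Fourier transform to the multiplication operator by the real-analytic, nonconstant symbol $E(\theta)=\sum_{k\in I_{11}}\ee^{2\pi\ii\theta\cdot k}$ and conclude from analyticity. You simply fill in the measure-theoretic details (essential range is a single interval; the critical set of $E$ is null, so $E$ pulls Lebesgue-null sets back to null sets) that the paper leaves implicit here and only sketches later in Section~\ref{sec:gen}.
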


Indeed, $\cA$ is then equivalent to multiplication by $E(\theta) = \sum_{p=1}^D 2\cos(2\pi\theta\cdot n^{(p)})$ in $L^2[0,1)^d$, for some $n^{(p)}\in\N^d\setminus\{0\}$. As $E(\theta)$ is analytic and nonconstant, the claim follows.

Even the simple case $\nu=2$ is not completely understood however. We investigate it in more details and give some partial results in Section~\ref{sec:2ver}. For $\nu>1$, the only general result we know concerning the absence of flat bands is the following one of \cite{HN}:

\begin{fact*}
Fix a finite multigraph $G$. If $G$ has a $2$-factor, then there is a recipe to create a $\Z$-periodic graph $\Gamma$ with quotient graph $G$, such that $\Gamma$ has no flat bands. In particular, if $G$ has a $2$-factor, then the maximal abelian cover of $G$ has no flat bands.
\end{fact*}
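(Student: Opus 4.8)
The plan is to pass to the Floquet description of the $\Z$-periodic graph. A voltage assignment $\alpha\colon E(G)\to\Z$ on the edges of $G$ determines $\Gamma$, and the adjacency operator is unitarily equivalent to multiplication by the Hermitian family $\cA(z)$, $z=\ee^{2\pi\ii\theta}\in\T$, whose $(i,j)$ entry is the Laurent polynomial $\sum_{e\colon i\to j}z^{\alpha(e)}$. As recalled in Section~\ref{sec:gen}, $\Gamma$ has a flat band at $\lambda_0$ iff $\lambda_0\in\sigma(\cA(z))$ for every $z\in\T$, i.e. iff the characteristic Laurent polynomial $P(\lambda_0,z):=\det(\lambda_0 I-\cA(z))$ vanishes identically in $z$ (a Laurent polynomial vanishing on all of $\T$ is $0$). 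So it suffices to produce an $\alpha$ for which, for every $\lambda_0\in\C$, the map $z\mapsto P(\lambda_0,z)$ is not identically zero. The recipe is to take a $2$-factor $F$ of $G$, orient each of its cycles, set $\alpha\equiv+1$ on the resulting forward $2$-factor edges and $\alpha\equiv 0$ on all remaining edges.

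The oriented $2$-factor is precisely a fixed-point-free permutation $\sigma_0$ of $\{1,\dots,\nu\}$ realised by edges of $G$. I would expand $P(\lambda,z)=\sum_{\sigma}\sgn(\sigma)\prod_i(\lambda I-\cA(z))_{i\sigma(i)}$ by the Leibniz formula and track the degree in $z$. The degree of the $\sigma$-term is $\sum_{i\colon\sigma(i)\ne i}w_{i\sigma(i)}$, where $w_{ij}=\deg_z\cA(z)_{ij}$ is the largest voltage of an edge $i\to j$, while each fixed point contributes a factor $\lambda-\cA(z)_{ii}$ of $z$-degree $0$. Under the recipe the only out-edge of $i$ with positive voltage is its forward $2$-factor edge (voltage $+1$), the backward one having voltage $-1$ and all others $0$; hence $w_{i\sigma(i)}\le 1$, with equality only when $\sigma(i)=\sigma_0(i)$. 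Summing, every term has $z$-degree at most $\nu$, and equality forces no fixed points and $\sigma=\sigma_0$. Thus $\sigma_0$ is the \emph{unique} permutation attaining the top degree $\nu$, and since it is fixed-point-free its contribution carries no factor of $\lambda$: the coefficient of $z^{\nu}$ in $P(\lambda,z)$ equals $\sgn(\sigma_0)(-1)^{\nu}\ne 0$, a nonzero constant independent of $\lambda$. Consequently $P(\lambda_0,\cdot)\not\equiv 0$ for every $\lambda_0$, so $\Gamma$ has no flat band.

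It remains to arrange that $\Gamma$ is a connected $\Z$-periodic graph with quotient $G$; the cover is connected iff the gcd of the cycle voltages of $G$ equals $1$, which the above $\alpha$ need not satisfy (e.g. when $G$ is a single cycle). I would handle this by passing to a connected component $\Gamma_0$ of $\Gamma$: the $\Z$-action permutes the components transitively, so they are pairwise isomorphic, whence $\sigma(\cA_{\Gamma_0})=\sigma(\cA_{\Gamma})$, and $\Gamma_0$ is again a connected $\Z$-periodic graph with quotient $G$ having, by the above, no flat band. For the final assertion, any voltage $\alpha\colon E(G)\to\Z$ factors through the universal abelian voltage $\beta\colon E(G)\to H_1(G;\Z)\cong\Z^{b}$ defining the maximal abelian cover, via a homomorphism $\phi\colon\Z^{b}\to\Z$; correspondingly $\cA_\alpha(t)=\cA_\beta\bigl(t^{\phi(\mathbf e_1)},\dots,t^{\phi(\mathbf e_b)}\bigr)$, so a flat band of the maximal abelian cover (meaning $\det(\lambda_0 I-\cA_\beta)\equiv 0$ on the $b$-torus) would specialize to a flat band of $\Gamma$. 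Since $\Gamma$ has none, neither does the maximal abelian cover. The main obstacle I anticipate is the combinatorial heart of the argument—establishing that the $2$-factor yields a unique dominant term of extreme $z$-degree and that this term is free of $\lambda$—together with the bookkeeping needed to accommodate multiple edges and loops (a loop of $F$ contributes its top monomial, not a $\lambda$, to the diagonal) and to reconcile the degree-maximizing recipe with the connectivity requirement.
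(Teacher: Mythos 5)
The paper does not prove this Fact itself; it quotes it from \cite{HN}, so there is no in-text proof to compare against. Your argument is correct and is essentially the known one: choose the voltage so that the oriented $2$-factor is the unique permutation attaining the top $z$-degree in the Leibniz expansion of $\det(\lambda I-\cA(z))$, making the coefficient of $z^{\nu}$ a nonzero constant independent of $\lambda$, so that $p(z;\lambda_0)$ can never vanish identically (Lemma~\ref{lem:triv}); your reductions for disconnected covers and for the maximal abelian cover (specialization of the $b$-variable characteristic Laurent polynomial under a monomial substitution, up to a diagonal gauge) are also sound. The only point to tighten is the sentence ``equality forces no fixed points and $\sigma=\sigma_0$'': if the $2$-factor contains a loop at $v_i$ then $\sigma_0(i)=i$ is a fixed point, and one must argue (as you indicate at the end) that the factor $\lambda-h_{ii}(z)=\lambda-(z+z^{-1}+c)$ still contributes its top monomial $-z$ rather than a $\lambda$, so the $z^{\nu}$-coefficient remains $\sgn(\sigma_0)(-1)^{\nu}$; with that caveat spelled out the proof is complete.
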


Recall that a graph $G$ is called \emph{Hamiltonian} if there is a cycle in $G$ which covers all vertices of $G$. The condition that $G$ has a $2$-factor is a generalization, which means that there is a collection of disjoint cycles which cover all vertices of $G$. This condition is satisfied in particular if $G$ is even-regular or bipartite odd-regular, and \cite{HN} conjecture that the maximal abelian cover of any regular multigraph should have no flat bands.

In the preceding discussion we were only interested in the adjacency matrix of $\Gamma$, so there are no weights on the edges and no potentials. This is in the hope of getting a completely geometric characterization, and this restriction makes some aspects of the mathematics more interesting. We do not expect the questions to be any different for the variant $D-\cA_\Gamma$ (the discrete Laplacian, where $D$ is the degree matrix). Physicists do endow their graphs with periodic weights. The Floquet theory is still applicable and it makes sense to study flat bands of $\cA_\Gamma+Q$ for instance, for a periodic potential $Q$. Looking at the physics literature, it is apparent that flat bands are ``rare'' and that they are a ``fine tuning'' phenomenon, but we only found examples rather than theorems in this respect. This motivates the following problem:

\begin{q2}
Show that for any periodic connected $\Gamma$, the set
\begin{equation}\label{e:pgamma}
\cP_\Gamma=\{(Q_1,\dots,Q_\nu)\in \R^\nu : \cA_\Gamma +Q \text{ has a flat band}\}
\end{equation}
has dimension at most $\nu-1$.
\end{q2}

We show in Section~\ref{sec:destro} that $\cP_\Gamma$ is a closed semialgebraic set. An answer to Problem 2 would imply that $\cP_{\Gamma}$ has Lebesgue measure zero, reflecting the rarity of flat bands, and is nowhere dense, reflecting the fine tuning (it would contain no balls: given $\cA_\Gamma+Q$, there would exist $Q'$ arbitrarily close to $Q$ such that $\cA_\Gamma+Q'$ has no flat bands). Showing these weaker properties would already be interesting for physical purposes. 

We also show in Section~\ref{sec:destro} that for any $\nu$ there exists some $\Gamma$ with $\mathrm{dim}(\cP_\Gamma) \ge \nu-1$, so the conjectured bound on the dimension cannot be improved in general. Moreover, we answer Problem~2 completely for $\nu=2$; it remains open for $\nu\ge 3$.

\medskip

For the reader's convenience, we summarize the results of this paper:
\begin{enumerate}[(1)]
\item For any strictly positive periodic weight $w$ on the edges and any periodic potential $Q$, the top band of $\cH=\cA_w+Q$ is never flat. There are examples of complex Hermitian weights $w$ such that the spectrum of $\cH$ is entirely flat.

$\cH$ has a localized eigenvector iff the Floquet matrix $H(\theta)$ has an eigenvector which is a trigonometric polynomial in each entry. There is a constructive recipe to derive one from the other. (Section~\ref{sec:gen}).
\item If $\Gamma$ is invariant under a permutation of $V_f$ consisting of $r$ cycles, which exchanges stars around vertices, then $\Gamma$ has at least $\nu-r$ flat bands. In particular, if $\cN_{v_i}\setminus\{v_j\}=\cN_{v_j}\setminus\{v_i\}$ for some $v_i,v_j\in V_f$, then $\Gamma$ has a flat band $\lambda\in \{0,-1\}$. (Section~\ref{sec:symm}).
\item We have a recipe to make all eigenvalues of any finite graph $G$ appear as flat bands, such that the eigenvectors of $\cA_\Gamma$ are entirely supported in $V_f$, with $|V_f|=2|G|$. We call these \emph{single-cell flat bands}. The recipe is optimal in the size of $V_f$. We also have recipes to create graphs with arbitrary $d,\nu$ which have no flat bands. (Section~\ref{sec:gens}).
\item We characterize the set $\cF_\nu^s$ of single-cell flat bands and apply our criterion to compute $\cF_\nu^s$ explicitly for $\nu\le 5$. In general $\cF_\nu^s\subseteq \cF_{\nu+1}^s$, we have a description of $\cF_{\nu+1}^s\setminus\cF_\nu^s$, and $\cup_\nu \cF_\nu^s$ is the set of all totally real algebraic integers  (Section~\ref{scfb}).
\item Flat bands for $\nu=2$ are necessarily integers. They are not necessarily in $\cF_2^s$. A sufficient condition is given for the absence of flat bands (Section~\ref{sec:2ver}).
\item We answer Problem 2 for $\nu=2$. For general $\nu$, we show that the set $\mathcal{P}_{\Gamma}$ in \eqref{e:pgamma} is closed, semialgebraic and can satisfy $\dim \cP_\Gamma\ge \nu-1$ for certain $\Gamma$. A stronger form of Problem~2 was conjectured in \cite{KorSa}, we show that version to be untrue (Section~\ref{sec:destro}).
\end{enumerate}

Point (1) is essentially folklore, while point (2) is inspired by \cite{RMS}; in both cases we provide more precise and general statements with simpler proofs. Flat band generators occupy a large part of physics literature, the one in (3), though very simple, is to our knowledge new. Points (4),(5),(6) are new.

We refer to \cite{KFSH,KS,RY19} for further results, in particular concerning spectral gaps and singularities in band crossings. Flat bands have also been investigated in the special framework of Archimedean tilings of the plane in \cite{PT}. There is a total of $11$ such tilings, with explicit Floquet matrices. By analyzing each of them, the authors show that only two of them have flat bands. The question of perturbation by edge weights was later studied in \cite{KTW} and it was found that for flat bands to survive, some explicit relations between the weights must be satisfied. This is in accord with Problem~2. We also mention the recent paper \cite{DEFM} where compactly supported eigenfunctions have been studied in the Penrose and Amman-Beenker aperiodic tilings of the plane.

To conclude, let us mention that ``flat bands'' also arise in the context of universal covers of finite undirected graphs. The spectra of these trees $\cT$ has a quite similar structure to those of the present $\Z^d$-periodic graphs, though the mathematical tools to analyze them are different (Floquet theory is not applicable). It is shown in \cite[\S 4.3]{AS} that the spectrum of $\cT$ consists of bands, some of which may be degenerate, but that absolutely continuous spectrum always exists if $\min \deg \cT \ge 2$. The papers \cite{BGM,S20} characterized the flat bands, completing earlier work of \cite{Ao}, and \cite{BGM} answered a variant of Problem 2 in that setting.

\subsection{Algebraic formulation}\label{sec:alg}
The question of flat band existence can be reduced to the following. Let $h_{ij}(z) = \sum_{k\in I_{ij}} z^k$ be Laurent polynomials on $\C^d\setminus\{0\}$ for $1\le i,j\le \nu$, where $I_{ij}\subset \Z^d$ is finite and $z^k = z_1^{k_1}\cdots z_d^{k_d}$. We assume $h_{ji}(z) = h_{ij}(z^{-1})$, so $I_{ji}=-I_{ij}$. We assume $0\notin I_{ii}$ for all $i$. Some $I_{ij}$ can be empty.

Now consider the $\nu\times \nu$ matrix $A(z) = (h_{ij}(z))$. What are all possible choices of $I_{ij}$ and $\lambda\in \R$ such that the characteristic polynomial $p_\lambda(z) = \det(A(z)-\lambda)$ vanishes identically as a Laurent polynomial? This is essentially Problem 1.

For example, if $\nu=2$, this becomes a question of factorization. Namely, what are all choices of $I_{ij}$ and $\lambda$ such that $(h_{11}(z)-\lambda)(h_{22}(z)-\lambda)=h_{12}(z)h_{12}(z^{-1})$? In the special case $h_{12}(z^{-1})=h_{12}(z)$, the question is, when can we have $P_\lambda(z)Q_\lambda(z)=H(z)^2$ for Laurent polynomials whose $z^k$ coefficients are $0$ or $1$, for $k\neq 0$? An obvious choice is $P_\lambda=Q_\lambda=H$, but there are other choices (see \S~\ref{sec:2bsc}), yet we do not know all of them.

Technically, not all choices of $I_{ij}$ and $\lambda$ will answer the question of flat band existence: one also needs to assume that the periodic graph generated by $A(z)$ is connected (see Section~\ref{sec:gen} for the connection to $\Gamma$). For example, while $(z^2+z^{-2}+2)(z^4+z^{-4}+2)=(z^3+z^{-3}+z+z^{-1})^2$ is a valid factorization, the corresponding $\Gamma$ is disconnected, so this choice of $I_{ij}$ and $\lambda$ has to be disregarded.

Another concept of algebraic flavor related to periodic operators is that of \emph{irreducibility}. The question is whether the characteristic polynomial $p_\lambda(z)$ is irreducible \emph{as a Laurent polynomial}, either for fixed $\lambda$ (Fermi irreducibility) or as a polynomial in both $\lambda$ and $z$ (Bloch irreducibility). In case of Bloch irreducibility, there are no flat bands (if $\lambda=c$ was flat, then $p_\lambda(z)=(\lambda-c)q_\lambda(z)$ would be a nontrivial factorization), but irreducibility is strictly stronger. So far irreducibility has been established only for graphs with $\nu=1$ like $\Z^d$ endowed with periodic potentials \cite{FLM,Wen} and planar graphs with $\nu=2$ \cite{LS}.

\section{Generalities}\label{sec:gen}
Let $\Gamma$ be an infinite graph in some Euclidean space, which is invariant under translation by linearly independent vectors $\fa_1,\dots,\fa_d$. The vertex set $V=V(\Gamma)$ satisfies
\begin{equation}\label{e:vertex}
V = V_f + \Z_\fa^d
\end{equation}
for some finite vertex set $V_f=\{v_1,\dots,v_\nu\}$ of a motif which is copied periodically under translations by $k_\fa:= \sum_{i=1}^dk_i\fa_i\in \Z_\fa^d$, where $\Z_\fa^d=\{k_\fa:k\in \Z^d\}$.

Endow $\Gamma$ with periodic potential $Q(v_p+k_\fa) = Q(v_p)$ and edge weights $w(v_p,v_q+k_\fa) = w(v_p-k_\fa,v_q)$ for $v_q+k_\fa\sim v_p$. We consider the Schr\"odinger operator
\[
\cH = \cA_w+Q
\]
on $\Gamma$, where $(\cA_w\psi)(v_i+r_\fa) = \sum_{u\sim v_i+r_\fa} w(v_i+r_\fa,u)\psi(u)$ for $v_i+r_\fa\in V_f+\Z_\fa^d$. Let 
\begin{equation}\label{e:index}
\mathcal{I}_{ij} = \{v_j+k_\fa:v_j+k_\fa\sim v_i\}\,, \qquad I_{ij} = \{k:v_j+k_\fa\in \cI_{ij}\}\,.
\end{equation}

The $I_{ij}$ dictate the hopping (if $k\in I_{ij}$, then there is an edge from $V_f$ to its $k$-th translate, $k\in \Z^d$). Note that $0\notin I_{ii}$ since $\Gamma$ has no self-loops. We have
\[
\cN_{v_i} = \cup_{j=1}^\nu \cI_{ij}
\]
and $(\cA_w\psi)(v_i+r_\fa) = \sum_{j=1}^\nu\sum_{k\in I_{ij}} w(v_i+r_\fa,v_j+r_\fa+k_\fa)\psi(v_j+r_\fa+k_\fa)$.

\smallskip

We henceforth identify $\ell^2(\Gamma) \equiv \ell^2(\Z^d)^\nu$ via $\psi(v_p+k_\fa) \mapsto \psi_p(k)$ for $p=1,\dots,\nu$ and $k\in \Z^d$. We similarly map $Q(v_p) \mapsto Q_p$ and $w(v_p,v_j+r_\fa)\mapsto w_{pj}(r)$. Then $\cH$ is now regarded as operating on vector functions $\psi\in\ell^2(\Z^d)^\nu$ by
\begin{equation}
(\cH\psi)_i(r) = \sum_{j=1}^\nu \sum_{k\in I_{ij}} w_{ij}(k)\psi_j(r+k) + Q_i\psi_i(r)\,.
\end{equation}

For example, in Figure~\ref{fig:boxnover} (left), we have $I_{11}=I_{22}=\{\pm 1\}$, $I_{12}=\{0,\pm 1\}=I_{21}$, $w\equiv 1$, $Q\equiv 0$, and the depicted eigenvector is $\psi(0) = \binom{1}{-1}$, $\psi(k)=\binom{0}{0}$ for $k\in\Z\setminus\{0\}$.

Much like the Fourier transform is used to check that $\cA_{\Z^d}$ is unitarily equivalent to $M_f$, the multiplication operator by $f(\theta)=\sum_{j=1}^d 2\cos 2\pi \theta_j$ on $L^2(\T_\ast^d)$, where $\T_\ast^d:=[0,1)^d$, the \emph{Floquet transform} $U:\ell^2(\Z^d)^\nu \to L^2(\T_\ast^d)^\nu$ defined by
\begin{equation}\label{e:utile}
(U\psi)_p(\theta) = \sum_{k\in \Z^d} \ee^{-2\pi \ii\theta \cdot k} \psi_p(k)
\end{equation}
for $p=1,\dots,\nu$ satisfies the following.

\begin{lem}
The operator $U$ is unitary and
\begin{equation}\label{e:dir}
U\cH U^{-1} = M_H\,,
\end{equation}
where $(M_Hf)(\theta)=H(\theta)f(\theta)$ is the multiplication operator on $L^2(\T_\ast^d)^\nu$ by the matrix
\begin{align}\label{e:hthe}
H(\theta) &= (h_{ij}(\theta))_{i,j=1}^\nu\,,\\
h_{ii}(\theta) = Q_i + \sum_{k\in I_{ii}} w_{ii}(k)\ee^{2\pi\ii\theta\cdot k} & \qquad\qquad h_{ij}(\theta) := \sum_{k\in I_{ij}} w_{ij}(k)\ee^{2\pi\ii\theta\cdot k}\,, \quad i\neq j\,. \nonumber
\end{align}
Here $0\notin I_{ii}$ as $\Gamma$ has no loops. If $Q\equiv 0$ and $w\equiv 1$, we denote $A(\theta)$ instead.
\end{lem}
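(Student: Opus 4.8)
The plan is to treat the two assertions in turn. For the unitarity of $U$, I would note that $U$ acts coordinatewise: it sends each component $\psi_p\in\ell^2(\Z^d)$ to its Fourier series $(U\psi)_p(\theta)=\sum_{k\in\Z^d}\ee^{-2\pi\ii\theta\cdot k}\psi_p(k)\in L^2(\T_\ast^d)$. Since $\{\ee^{-2\pi\ii\theta\cdot k}\}_{k\in\Z^d}$ is an orthonormal basis of $L^2(\T_\ast^d)$, Parseval's identity gives $\|(U\psi)_p\|_{L^2}=\|\psi_p\|_{\ell^2}$, and summing over $p=1,\dots,\nu$ yields $\|U\psi\|=\|\psi\|$; completeness of this basis gives surjectivity. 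Hence $U$ is unitary, with inverse $(U^{-1}f)_p(k)=\int_{\T_\ast^d}\ee^{2\pi\ii\theta\cdot k}f_p(\theta)\,\dd\theta$.

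For the intertwining relation \eqref{e:dir}, it suffices to verify $U\cH=M_H U$ on the dense subspace of finitely supported sequences and then extend by continuity; note $\cH$ is bounded, as each $I_{ij}$ is finite and the periodic weights $w_{ij}$ are bounded. Applying $U$ to the formula for $(\cH\psi)_i(r)$, the potential term yields $Q_i(U\psi)_i(\theta)$ directly. For the hopping term I would perform the shift $s=r+k$, using $\ee^{-2\pi\ii\theta\cdot r}=\ee^{2\pi\ii\theta\cdot k}\ee^{-2\pi\ii\theta\cdot s}$, which turns $\sum_r\ee^{-2\pi\ii\theta\cdot r}\sum_{k\in I_{ij}}w_{ij}(k)\psi_j(r+k)$ into $\big(\sum_{k\in I_{ij}}w_{ij}(k)\ee^{2\pi\ii\theta\cdot k}\big)(U\psi)_j(\theta)$. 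Summing over $j$ and absorbing $Q_i$ together with the $i=j$ hopping sum into the diagonal entry recovers exactly $\sum_{j}h_{ij}(\theta)(U\psi)_j(\theta)=(H(\theta)(U\psi)(\theta))_i=(M_HU\psi)_i(\theta)$, with $h_{ij}$ as in \eqref{e:hthe}. This proves \eqref{e:dir}.

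I do not expect a genuine obstacle: this is the standard Floquet--Bloch diagonalization, and the only points requiring care are bookkeeping. First, the sign convention must be tracked, since the factor $\ee^{-2\pi\ii\theta\cdot r}$ in the definition of $U$ produces $\ee^{+2\pi\ii\theta\cdot k}$ after the shift, matching the entries of $H(\theta)$ stated in \eqref{e:hthe}. Second, the interchange of the finite inner sum over $k$ with the sum over $r$ is immediate on finitely supported $\psi$, before passing to the limit. Finally, the Hermitian symmetry $w_{ij}(k)=\overline{w_{ji}(-k)}$ of the weights gives $h_{ij}(\theta)=\overline{h_{ji}(\theta)}$, so that $H(\theta)$ is Hermitian, consistently with $\cH$ being self-adjoint, while $0\notin I_{ii}$ ensures that $Q_i$ is the only $\theta$-independent term on the diagonal.
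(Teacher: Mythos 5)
Your proposal is correct and follows essentially the same route as the paper: unitarity of $U$ as the coordinatewise Fourier transform, followed by the change of summation index $s=r+k$ that extracts the factor $\ee^{2\pi\ii\theta\cdot k}$ and identifies the entries $h_{ij}(\theta)$. The extra remarks on density and Hermitian symmetry are harmless refinements of the same computation.
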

This is quite standard \cite{BoSa,KorSa}, though our language is simpler and more general.
\begin{proof}
Being the Fourier transform in each coordinate, $U$ is clearly unitary. The pre-image of $f(\theta)$ is $\phi_p(k)= \int_{\T_\ast^d} f_p(\theta)\ee^{2\pi\ii k\cdot \theta}\,\dd\theta$. Next given $\psi\in \ell^2(\Z^d)^\nu$,
\begin{align*}
(U\cH \psi)_p(\theta) &= \sum_{k\in\Z^d} \ee^{-2\pi \ii\theta\cdot k} \Big(\sum_{j=1}^\nu\sum_{r\in I_{pj}} w_{pj}(r)\psi_j(k+r) + Q_p\psi_p(k)\Big) \\
&= \sum_{j=1}^\nu\sum_{r\in I_{pj}} w_{pj}(r)\ee^{2\pi\ii\theta\cdot r}\sum_{k\in \Z^d} \ee^{-2\pi\ii\theta\cdot (k+r)}\psi_j(k+r) + Q_p(U\psi)_p(\theta) \\
&= \sum_{j=1}^\nu h_{pj}(\theta)(U\psi)_j(\theta) = [H(\theta)(U\psi)(\theta)]_p \,. \qedhere
\end{align*}
\end{proof}

We next note that the map $\tau: \cI_{ij} \to \cI_{ji}$ given by $v_j+k_\fa\mapsto v_i-k_\fa$ is a bijection. In fact, by translation invariance, $v_j+k_\fa\sim v_i \iff v_j\sim v_i-k_\fa$. This implies the symmetries
\begin{equation}\label{e:indicessym}
 I_{ii} = - I_{ii}\,, \qquad I_{ji} = - I_{ij}
\end{equation}
as index sets. Note however that $I_{ij}$ can be distinct from $-I_{ij}$ for $i\neq j$. Assuming $w$ satisfies $w_{ji}(-k) = \overline{w_{ij}(k)}$, we get by \eqref{e:indicessym} that the matrix $H(\theta)$ is Hermitian:
\[
h_{ji}(\theta)=\overline{h_{ij}(\theta)}\,.
\]

\begin{lem}
There exist $\nu$ continuous functions $E_1(\theta)\le \dots \le E_\nu(\theta)$ on $\T_\ast^d$ consisting of the eigenvalues of $H(\theta)$.

There exists a real analytic variety $X\subset \T_\ast^d$ of dimension $\le d-1$, (so $X$ is a closed nullset), such that the eigenvalues $E_j(\theta)$ of $H(\theta)$ are analytic on $\T_\ast^d\setminus X$. Each eigenvalue has constant multiplicity on each of the finitely many connected components of $\T_\ast^d\setminus X$.
\end{lem}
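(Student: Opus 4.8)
The plan is to regard $\theta\mapsto H(\theta)$ as a real-analytic family of Hermitian matrices and to locate its eigenvalue crossings by the zero set of a single real-analytic function. The first assertion is soft: for each fixed $\theta$ the matrix $H(\theta)$ is Hermitian, hence has $\nu$ real eigenvalues, which I list increasingly as $E_1(\theta)\le\dots\le E_\nu(\theta)$. Since the entries $h_{ij}(\theta)$ are trigonometric polynomials, $\theta\mapsto H(\theta)$ is Lipschitz in operator norm, and Weyl's perturbation inequality $|E_j(\theta)-E_j(\theta')|\le\|H(\theta)-H(\theta')\|$ yields continuity of each $E_j$.

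To pin down the bad set, I would write $p(\theta,\lambda)=\det(\lambda-H(\theta))$, a monic degree-$\nu$ polynomial in $\lambda$ whose coefficients $a_k(\theta)$ are real (as $H(\theta)$ is Hermitian) and real-analytic in $\theta$. For fixed $\theta$ the number of distinct eigenvalues is $\nu-\deg_\lambda\gcd(p,\partial_\lambda p)$; let $m$ be its maximum over the connected compact torus $\T_\ast^d$. By the gap structure of subresultants, the $(\nu-m)$-th principal subresultant coefficient $D(\theta)$ of $p$ and $\partial_\lambda p$ is a polynomial in the $a_k(\theta)$, hence real-analytic, and satisfies $D(\theta)\neq 0$ exactly when $H(\theta)$ attains the maximal number $m$ of distinct eigenvalues. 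Maximality of $m$ forces $D\not\equiv 0$, so $X:=\{\theta:D(\theta)=0\}$ is the zero set of a nonzero real-analytic function on the connected $d$-manifold $\T_\ast^d$, thus a real-analytic variety of dimension $\le d-1$, closed and Lebesgue-null. Substituting $z_j=\ee^{2\pi\ii\theta_j}$ turns $D$ into a real polynomial on the compact real-algebraic set $\{|z_j|=1\}\cong\T_\ast^d$, so $X$ is semialgebraic and therefore has finitely many connected components; hence so does its complement.

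It remains to prove analyticity and constant multiplicity on each component $\Omega$ of $\T_\ast^d\setminus X$. There $D\neq 0$, so $H(\theta)$ has exactly $m$ distinct eigenvalues $\mu_1(\theta)<\dots<\mu_m(\theta)$ throughout $\Omega$; since they never collide in $\Omega$, their multiplicities $m_1,\dots,m_m$ are locally constant, hence constant on the connected set $\Omega$. Because the entries of $H$ extend to holomorphic functions on a complex neighborhood of $\Omega$ and the $\mu_l$ stay separated, the Riesz projection $P_l(\theta)=\frac{1}{2\pi\ii}\oint_{\gamma_l}(\zeta-H(\theta))^{-1}\,\dd\zeta$ onto the $\mu_l$-eigenspace, with $\gamma_l$ a small loop enclosing only $\mu_l(\theta)$, depends real-analytically on $\theta$, and so does $\mu_l(\theta)=\frac{1}{m_l}\tr\bigl(H(\theta)P_l(\theta)\bigr)$. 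Listing each $\mu_l$ with multiplicity $m_l$ recovers the ordered $E_j$ on $\Omega$, so every $E_j$ is real-analytic there, with constant multiplicity.

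The main obstacle is twofold. First, one must rule out a permanently degenerate spectrum, i.e. ensure $X\neq\T_\ast^d$: this is precisely why I detect the \emph{generic} number $m$ of distinct eigenvalues through a subresultant coefficient rather than through the discriminant of $p$, for the discriminant may vanish identically when two bands coincide for all $\theta$ (for instance under a symmetry-protected degeneracy). Second, one must obtain analyticity at eigenvalues of multiplicity $\ge 2$, where the implicit function theorem applied to $p(\theta,\lambda)=0$ breaks down; the contour-integral projection circumvents this, being insensitive to higher but locally constant multiplicity. Everything else — continuity, the variety structure of $X$, and finiteness of components — is routine once these two points are in place.
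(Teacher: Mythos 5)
Your proof is correct, and it takes a more self-contained route than the paper, which delegates the two halves of the statement to the literature: continuity is quoted from Kato (via continuity of the resolvent and of spectral projections), and analyticity is obtained by adapting Lemmas 4.5--4.7 of Wilcox, i.e.\ by analyzing the real-analytic variety $E=\{(\theta,\lambda):p(\theta,\lambda)=0\}$ in $(\theta,\lambda)$-space. You instead work directly in $\theta$-space: Weyl's inequality gives continuity (indeed Lipschitz continuity) immediately from Hermiticity; the degeneracy locus $X$ is exhibited as the zero set of one explicit real-analytic function, a principal subresultant coefficient of $p$ and $\partial_\lambda p$; and analyticity off $X$ comes from Riesz projections. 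Two points in your write-up deserve emphasis as genuine improvements in explicitness. First, your use of the subresultant at level $\nu-m$, rather than the discriminant, correctly handles the case where two bands coincide identically -- which is not a hypothetical here, since flat bands of higher multiplicity do occur for these operators (cf.\ Example~\ref{exa:singlenu3}), and the discriminant would then vanish on all of $\T_\ast^d$. Second, the finiteness of the number of connected components of $\T_\ast^d\setminus X$, which the paper asserts, is justified cleanly by your observation that $D$ becomes a real polynomial on the real-algebraic torus $\{|z_j|=1\}$ after the substitution $z_j=\ee^{2\pi\ii\theta_j}$, so $X$ and its complement are semialgebraic. The only cosmetic caveat is that the identity $\mathrm{sres}_{\nu-m}(\theta)\neq 0\iff$ ``$H(\theta)$ has $m$ distinct eigenvalues'' uses that the number of distinct eigenvalues never exceeds $m$ (so $\deg_\lambda\gcd(p,\partial_\lambda p)\ge\nu-m$ everywhere); you implicitly use this via the maximality of $m$, and it would be worth one sentence to make it explicit. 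Both routes prove the same statement; yours is longer to write in full but avoids importing the stratification machinery for analytic varieties.
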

If $d=1$, the eigenvalues and eigenvectors can be chosen to be analytic over all $\T_\ast$. This is Rellich's theorem, it holds more generally when $H(\theta)$ has a compact resolvent, see \cite[Th. 3.9, p. 392]{Kato}, as long as $H(\theta)$ is self-adjoint and $\theta\in \T_\ast$ is a real parameter.
\begin{proof}
The continuity of $E_j$ is shown in \cite[p.106--109]{Kato}. The argument there is for $d=1$, but holds without change for any $d$. In steps, one shows the resolvent is continuous in $\theta$, then deduces the continuity of the spectral projection near the distinct $E_j(\theta_0)$, from which one gets that $H(\theta)$ has $m$ eigenvalues $E_{j,k}(\theta)$ close to $E_j(\theta_0)$ counting multiplicity, if $E_j(\theta_0)$ has multiplicity $m$ and $\theta$ is close to $\theta_0$.

Analyticity cannot be directly deduced from the $1d$ arguments in \cite{Kato}. Still, the proof of our statement is the same as \cite[Lemmas 4.5--4.7]{Wilcox}, which consider periodic Schr\"odinger operators in $\R^3$. In our case the argument is simpler as we can define directly the corresponding set $E = \{(\theta,\lambda)\in \R^{d+1}:p(\theta,\lambda)=0\}$, where $p(\theta,\lambda) = \det(H(\theta)-\lambda I)$ is the characteristic polynomial, which is analytic, being a polynomial of the analytic entries of $H(\theta)-\lambda I$. We also fix $n:=\nu$ in Lemmas~4.6--4.7 of \cite{Wilcox}.
\end{proof}

It follows from \eqref{e:dir} and the continuity of $E_j(\theta)$ that
\[
\sigma(\cH) = \cup_{j=1}^\nu \sigma_j\,,
\]
where $\sigma_j = \ran E_j(\theta)$ are bands, as $M_H$ is equivalent to $M_D$, with $D(\theta)=\mathrm{Diag}(E_j(\theta))$. The operator $\cH$ has no singular continuous spectrum \cite[Prp. 4.5]{HN}. One can also see that from the analyticity of $E_j(\theta)$ on $\T_\ast^d\setminus X$, because all partial derivatives $\partial_{\theta_k}E_j(\theta)$ are analytic as well, hence either $E_j(\theta)$ is constant and $\sigma_j$ is reduced to a point, yielding point spectrum, or $\nabla_\theta E_j(\theta)\neq 0$ a.e. and $\sigma_j$ consists of absolutely continuous spectrum.

We say that $\lambda\in \R$ is a \emph{flat band} of $\cH$ if $\lambda$ is an eigenvalue of $\cH$.

\begin{lem}\label{lem:flatchar}
$\lambda$ is a flat band of $\cH$ iff $\lambda$ is an eigenvalue of $H(\theta)$ for all $\theta\in \T_\ast^d$.
\end{lem}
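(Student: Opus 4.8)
The plan is to exploit the unitary equivalence $U\cH U^{-1} = M_H$ established in \eqref{e:dir}: since $U$ is unitary, $\lambda$ is an eigenvalue of $\cH$ if and only if it is an eigenvalue of the multiplication operator $M_H$ on $L^2(\T_\ast^d)^\nu$. Thus everything reduces to characterizing the point spectrum of $M_H$, and I would prove the two implications separately, the forward one resting on analyticity of the characteristic polynomial and the backward one on a spectral projection.

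For the forward direction, suppose $\lambda$ is an eigenvalue of $M_H$, so there is $f\in L^2(\T_\ast^d)^\nu$ with $f\neq 0$ and $H(\theta)f(\theta)=\lambda f(\theta)$ for a.e.\ $\theta$. On the set $\{\theta:f(\theta)\neq 0\}$, which has positive measure since $f\neq 0$, the vector $f(\theta)$ is a genuine eigenvector, so $\lambda\in\sigma(H(\theta))$, i.e.\ $p(\theta,\lambda):=\det(H(\theta)-\lambda I)=0$ there. The crucial observation is that, for fixed $\lambda$, the map $\theta\mapsto p(\theta,\lambda)$ is a trigonometric polynomial: each entry $h_{ij}(\theta)$ is a finite combination of exponentials $\ee^{2\pi\ii\theta\cdot k}$ by \eqref{e:hthe}, and the determinant is a polynomial in these entries. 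Hence $\theta\mapsto p(\theta,\lambda)$ lifts to a real-analytic $\Z^d$-periodic function on the connected set $\R^d$. A nontrivial real-analytic function has a zero set of Lebesgue measure zero, so vanishing of $p(\cdot,\lambda)$ on a positive-measure set forces $p(\theta,\lambda)=0$ for \emph{all} $\theta$. This says precisely that $\lambda$ is an eigenvalue of $H(\theta)$ for every $\theta\in\T_\ast^d$.

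For the backward direction, assume $\lambda$ is an eigenvalue of $H(\theta)$ for all $\theta$, and let $P_\lambda(\theta)$ be the orthogonal (Riesz) projection onto $\ker(H(\theta)-\lambda I)$. Since $H(\theta)$ is a $\nu\times\nu$ matrix, $\lambda$ is automatically an isolated point of its spectrum, so $P_\lambda(\theta)=\frac{1}{2\pi\ii}\oint_{|z-\lambda|=r}(z-H(\theta))^{-1}\,\dd z$ for a suitable radius, and $\theta\mapsto P_\lambda(\theta)$ is bounded and measurable (choosing $r$ locally below the spectral gap, which is routine). Then $M_{P_\lambda}$, multiplication by $P_\lambda(\theta)$, is an orthogonal projection on $L^2(\T_\ast^d)^\nu$ whose range is exactly the $\lambda$-eigenspace $\{f:H(\theta)f(\theta)=\lambda f(\theta)\text{ a.e.}\}$ of $M_H$. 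By hypothesis $P_\lambda(\theta)\neq 0$ for every $\theta$, so $M_{P_\lambda}\neq 0$; a nonzero projection has nonzero range, producing a nonzero eigenvector and hence showing $\lambda$ is an eigenvalue of $M_H$, thus of $\cH$. (The same argument shows the eigenspace is infinite-dimensional, since multiplying an eigenvector by any $L^\infty$ scalar function yields another, matching the ``infinitely degenerate'' description of flat bands.)

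I expect the heart of the matter to be the forward implication, specifically the passage from ``eigenvalue for a.e.\ $\theta$ in a positive-measure set'' to ``eigenvalue for every $\theta$''. The clean way to carry this out is through the analyticity of $\theta\mapsto p(\theta,\lambda)$ on the connected torus rather than through the individual band functions $E_j$, whose analyticity is more delicate and holds only off the variety $X$. The remaining technical point is the measurability of $\theta\mapsto P_\lambda(\theta)$, which is standard for finite Hermitian matrix families and should be dispatched quickly.
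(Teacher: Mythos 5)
Your proof is correct and takes essentially the same route as the paper: reduce to the multiplication operator $M_H$ via the Floquet transform, note that an $L^2$ eigenvector forces $p(\cdot;\lambda)=\det(H(\cdot)-\lambda I)$ to vanish on a set of positive measure, and use the (real-)analyticity of this trigonometric polynomial to conclude it vanishes identically. The only difference is that the paper delegates the equivalence ``$\lambda$ is an eigenvalue of $M_H$ iff $\lambda\in\sigma(H(\theta))$ on a positive-measure set'' to a citation of \cite[Prp. 4.3]{HN}, whereas you prove the backward half yourself via a measurable family of spectral projections; that argument is sound (measurability of $\theta\mapsto P_\lambda(\theta)$ is most safely obtained by writing it as a pointwise limit of continuous functions of the Hermitian matrix $H(\theta)$, which avoids any issue with choosing a uniform contour radius when the gap around $\lambda$ is not bounded below).
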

\begin{proof}
By \eqref{e:dir}, $\lambda$ is an eigenvalue of $\cH$ iff $\lambda$ is an eigenvalue of $H(\theta)$ for all $\theta$ in a set $\Omega\subset \T_\ast^d$ of positive measure, i.e. $p(\theta;\lambda):=\det(H(\theta)-\lambda I) =0$ for all $\theta\in\Omega$. As $p(\cdot;\lambda)$ is analytic on $\T_\ast^d$, 
then $p(\cdot;\lambda)\equiv 0$ on $\Omega$ iff it vanishes on all $\T_\ast^d$, see e.g. \cite[Prp. 4.3]{HN}.
\end{proof}

Define the maximal hopping range in the $i$-th coordinate by
\[
\fh_{i} = \max(k_i:k\in \cup_{p,j\le \nu} I_{pj})
\]
If $\fh_{i}=1$ for all $1\le i\le d$, we speak of \emph{nearest-neighbor hopping}. 

\begin{thm}\label{thm:loceigen}
If $\cH$ has a flat band, then it has a corresponding eigenvector $\psi$ on $\Gamma$ of compact support. We may choose $\psi_p$ to be supported inside $\times_{i=1}^d [\![-(\nu-1)\fh_{i},(\nu-1)\fh_{i}]\!]$ for each $p=1,\dots,\nu$.
\end{thm}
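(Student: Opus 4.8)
The plan is to pass to the Floquet side, produce an eigenvector of $H(\theta)$ that is a Laurent polynomial in $z=(z_1,\dots,z_d)$, $z_i=\ee^{2\pi\ii\theta_i}$, with controlled degree, and then invert the Floquet transform. First I would record that, by Lemma~\ref{lem:flatchar}, a flat band $\lambda$ makes $M(\theta):=H(\theta)-\lambda I$ singular for every $\theta$, so $\det M\equiv 0$ and hence $M$ has rank $r\le \nu-1$ over the field $K=\C(z_1,\dots,z_d)$ (necessarily $r\ge 1$ since $\Gamma$ is connected). A compactly supported eigenvector of $\cH$ with eigenvalue $\lambda$ corresponds, via the unitary $U$ and \eqref{e:dir}, exactly to a nonzero vector $f(\theta)=(f_1,\dots,f_\nu)$ with $M(\theta)f(\theta)=0$ and each $f_p$ a trigonometric (Laurent) polynomial; the exponent $z^{-k}$ in $f_p$ carries the coefficient $\psi_p(k)$, so controlling the $z_i$-degrees of $f_p$ is the same as controlling the support of $\psi_p$. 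It thus suffices to produce such an $f$ with every $f_p$ of $z_i$-degree in $[-(\nu-1)\fh_i,(\nu-1)\fh_i]$.

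To build $f$, I would choose row set $R$ and column set $C$ with $|R|=|C|=r$ whose $r\times r$ overlap $B$ is invertible over $K$ (possible since $\operatorname{rank} M=r$), adjoin one further column $c_0\notin C$ (available as $r+1\le\nu$), and let $N$ be the $r\times(r+1)$ submatrix on rows $R$ and columns $C\cup\{c_0\}$. The vector of signed maximal minors of $N$, namely $v_j=(-1)^j\det(N_{\hat\jmath})$ with the $j$-th column deleted, lies in $\ker N$; extending it by zeros in the remaining coordinates yields a candidate $f$. Since $f$ vanishes off $C\cup\{c_0\}$, the equations indexed by $R$ in $Mf=0$ reduce to $Nv=0$ and hold by construction; because these $r$ rows are linearly independent (their restriction $B$ is invertible) and $\operatorname{rank} M=r$, they span the row space of $M$, so every remaining equation is a $K$-linear combination of them and holds automatically. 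Moreover $f\ne 0$, as one minor equals $\pm\det B\ne 0$.

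Finally, each nonzero entry of $f$ is the determinant of an $r\times r$ matrix whose entries are entries of $M$. By the symmetries $I_{ji}=-I_{ij}$ and $I_{ii}=-I_{ii}$, every entry of $M$ is a Laurent polynomial with $z_i$-exponents in $[-\fh_i,\fh_i]$ (subtracting $\lambda$ only touches the $k=0$ term), so a determinant of $r$ such entries has $z_i$-exponents in $[-r\fh_i,r\fh_i]\subseteq[-(\nu-1)\fh_i,(\nu-1)\fh_i]$. Reading off coefficients gives $\psi=U^{-1}f$ supported in $\times_{i=1}^d[\![-(\nu-1)\fh_i,(\nu-1)\fh_i]\!]$ in each component, and $\cH\psi=\lambda\psi$ since $M_Hf=\lambda f$. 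The point demanding care is the passage from ``satisfies the rows in $R$'' to ``satisfies all rows'': this is exactly where the maximality of the nonsingular block (equivalently $\operatorname{rank} M=r$) is essential, and it is also what lets the single bound $r\le\nu-1$ handle every rank uniformly, rather than only the generic case $r=\nu-1$ where the adjugate $\operatorname{adj}(M)$ already supplies a kernel column with the same degree bound.
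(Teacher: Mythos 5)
Your argument is correct, and the reduction to the Floquet side (produce a kernel vector of $H(\theta)-\lambda I$ with Laurent--polynomial entries of controlled degree, then invert $U$) is exactly the paper's outer structure, via Proposition~\ref{prp:trigiffcomp}. Where you genuinely diverge is in the core construction, which the paper isolates as Lemma~\ref{lem:cayleyham}: there the kernel vector is extracted via Cayley--Hamilton, writing $p_\theta(\lambda)=\lambda^m q_\theta(\lambda)$ and taking a nonzero column of $H(\theta)^n q_\theta(H(\theta))$ for the largest $n$ with $H^{n+1}q(H)=0$; the degree bound comes from the coefficients $c_r(\theta)$ being minors of $H(\theta)$. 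You instead work over $K=\C(z_1,\dots,z_d)$, use that $\det(H-\lambda I)$ is the zero Laurent polynomial (this is precisely Lemma~\ref{lem:triv}, which you should cite to pass from vanishing on the torus to vanishing in $K$) to get $\operatorname{rank}_K M=r\le\nu-1$, and build the kernel vector from the signed maximal minors of an $r\times(r+1)$ submatrix containing a nonsingular $r\times r$ block --- the standard Cramer-type construction. All the steps check out: the row equations in $R$ hold by the cofactor identity, the remaining rows are $K$-combinations of those in $R$ by maximality of the rank, $f\neq0$ because one minor is $\pm\det B$, and the entries of $M$ have $z_i$-exponents in $[-\fh_i,\fh_i]$ by the symmetry \eqref{e:indicessym}, so $r\times r$ minors stay within $[-r\fh_i,r\fh_i]$. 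Your route is pure linear algebra over the function field and avoids Cayley--Hamilton; it also makes the provenance of the bound transparent ($\operatorname{rank}\times\fh_i$) and yields the slightly sharper exponent $r\fh_i=(\nu-m)\fh_i$ when the flat band has generic multiplicity $m>1$, uniformly in whether $H(\theta)$ is diagonalizable. The paper's version reaches the same $(\nu-1)\fh_i$ bound without having to select a nonsingular block, and its intermediate object $H^nq(H)$ is sometimes convenient for explicit computation. Both proofs are constructive and equally rigorous.
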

This result should be contrasted with the one of \cite{McKSa}, which shows that the wavefunctions corresponding to the \emph{absolutely continuous bands} are very delocalized. Theorem~\ref{thm:loceigen} appeared in \cite[Th. 3.2]{HN}. We give a different proof which is constructive and is of practical interest, as we illustrate on some graphs. This also yields the bound on the support, which seems new but is probably not sharp. The proof is based on two results:

\begin{lem}\label{lem:cayleyham}
If $\cH$ has a flat band, then $H(\theta)$ has a corresponding eigenvector $f(\theta)$ whose entries $f_p(\theta)$ are trigonometric polynomials of degree at most $(\nu-1)\fh_{i}$ in each $\theta_i$.
\end{lem}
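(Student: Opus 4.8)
The plan is to use Lemma~\ref{lem:flatchar}: a flat band $\lambda$ is an eigenvalue of $H(\theta)$ for \emph{every} $\theta$, so the matrix $M(\theta):=H(\theta)-\lambda I$ has $\det M(\theta)\equiv 0$ as a trigonometric polynomial. The key idea is that a vector in $\ker M(\theta)$ can be built out of minors of $M(\theta)$, which are automatically trigonometric polynomials of controlled degree. Indeed, by \eqref{e:hthe} and the symmetry \eqref{e:indicessym} the union $\cup_{p,j}I_{pj}$ is invariant under $k\mapsto -k$, so every entry of $M(\theta)$ has all its $\theta_i$-frequencies in $[-\fh_i,\fh_i]$ (the diagonal shift $-\lambda$ only touches the zero frequency). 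Hence any $k\times k$ minor of $M(\theta)$, being a signed sum of products of $k$ entries, is a trigonometric polynomial of degree at most $k\fh_i$ in each $\theta_i$.

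First I would try the adjugate. Since $M(\theta)\,\mathrm{adj}\,M(\theta)=\det M(\theta)\,I\equiv 0$ identically in $\theta$, every column of $\mathrm{adj}\,M(\theta)$ lies in $\ker M(\theta)$ for all $\theta$ simultaneously, and its entries are $(\nu-1)\times(\nu-1)$ cofactors, hence trigonometric polynomials of degree at most $(\nu-1)\fh_i$. If $\mathrm{adj}\,M\not\equiv 0$ one simply takes $f$ to be a column that does not vanish identically; this is the desired eigenvector.

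The main obstacle is the degenerate case where $\lambda$ has multiplicity $\ge 2$ for generic $\theta$: then the generic rank $r$ of $M(\theta)$ satisfies $r\le\nu-2$, all $(\nu-1)$-minors vanish identically, and $\mathrm{adj}\,M\equiv 0$. To handle this I would work with the generic rank directly. Let $r<\nu$ be the largest integer such that some $r\times r$ minor of $M$ is not identically zero, and fix index sets $R,C$ with $|R|=|C|=r$ and $\det M_{R,C}\not\equiv 0$. Choosing any column index $j_0\notin C$, the $r\times(r+1)$ submatrix $N:=M_{R,\,C\cup\{j_0\}}$ has a nontrivial kernel relation given by its maximal minors: define $f$ supported on $C\cup\{j_0\}$ by letting $f_c$ be the signed $r\times r$ minor of $N$ obtained by deleting column $c$, normalised so that $f_{j_0}=\det M_{R,C}$.

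Finally I would verify that this $f$ works for all $\theta$. By the cofactor identity $N f=0$, so $(Mf)_i=0$ for $i\in R$; and for any row $i\notin R$ the quantity $(Mf)_i=\sum_{c\in C\cup\{j_0\}}M_{i,c}f_c$ is, up to sign, the $(r+1)\times(r+1)$ minor of $M$ on rows $R\cup\{i\}$ and columns $C\cup\{j_0\}$, which vanishes identically by maximality of $r$. Hence $M(\theta)f(\theta)=0$ for every $\theta$. The function $f$ is not identically zero, since $f_{j_0}=\det M_{R,C}\not\equiv 0$, and its entries are $r\times r$ minors of $M$, hence trigonometric polynomials of degree at most $r\fh_i\le(\nu-1)\fh_i$ in each $\theta_i$. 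Taking the inverse Floquet transform then yields the compactly supported eigenvector of Theorem~\ref{thm:loceigen}; the adjugate argument is recovered as the special case $r=\nu-1$.
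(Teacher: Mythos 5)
Your proof is correct, and it takes a genuinely different route from the paper's. The paper assumes the flat band is $0$, writes $p_\theta(\lambda)=\lambda^m q_\theta(\lambda)$ with $m$ the multiplicity, and invokes the Cayley--Hamilton theorem to find an exponent $n\le m-1$ with $H(\theta)^{n}q_\theta(H(\theta))\neq 0$ and $H(\theta)^{n+1}q_\theta(H(\theta))=0$; a nonzero column of $H(\theta)^{n}q_\theta(H(\theta))$ is then the eigenvector, and the degree bound $(n+\nu-m)\fh_i\le(\nu-1)\fh_i$ comes from the fact that the coefficients $c_r(\theta)$ of $p_\theta$ are minors of size $\nu-r$. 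You instead build the kernel vector directly from the minors of $M(\theta)=H(\theta)-\lambda I$: the adjugate in the generic case, and in general the Cramer-type vector of maximal $r\times r$ minors attached to a nonvanishing minor of the generic rank $r$, with the identity $(Mf)_i=\pm\det M_{R\cup\{i\},C\cup\{j_0\}}\equiv 0$ closing the argument. Both routes produce entries that are minors of size at most $\nu-1$, hence the same bound $r\fh_i\le(\nu-1)\fh_i$ (your remark that the symmetry \eqref{e:indicessym} places all frequencies in $[-\fh_i,\fh_i]$ is the right justification, and is only implicit in the paper). What your version buys is that it is purely linear-algebraic and manifestly uniform in $\theta$: the generic rank $r$ is a global quantity, whereas the paper's $m$ and $n$ are a priori $\theta$-dependent and the argument tacitly relies on their generic values; what the paper's version buys is brevity and the explicit link to the characteristic polynomial used elsewhere in Section~\ref{sec:gen}. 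One cosmetic caveat common to both proofs: $f(\theta)$ is a not-identically-zero kernel section and may vanish at exceptional $\theta$, which is harmless for the application in Proposition~\ref{prp:trigiffcomp}.
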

A statement similar to the first part, with a different derivation, appears in \cite{RY19}.
\begin{proof}
We may assume the flat band is $\lambda_0=0$, by translating $\cH$ by $\lambda_0I$. Say $0$ has multiplicity $m$ for $H(\theta)$. Then $p_\theta(\lambda) := \det(H(\theta)-\lambda I) = \lambda^mq_\theta(\lambda)$, for some polynomial $q_\theta(\lambda)$ of degree $\nu-m$ in $\lambda$. Note that $q_\theta(H(\theta))\neq 0$ since $0\neq q_\theta(0)\in \sigma(q_\theta(H(\theta)))$.

By the Cayley-Hamilton theorem, $H(\theta)^mq_\theta(H(\theta))=0$, so there exists $0\le n\le m-1$ such that $H(\theta)^nq_\theta(H(\theta))\neq 0$ and $H(\theta)^{n+1}q_\theta(H(\theta))=0$. Hence, there is a nonzero column $f(\theta)$ in $H(\theta)^nq_\theta(H(\theta))$ lying in the kernel of $H(\theta)$. 

Finally, if $p_\theta(\lambda) = \sum_{r=0}^\nu c_r(\theta) \lambda^r$, then $c_r(\theta)=0$ for $r<m$ and $q_\theta(\lambda) = \sum_{r=m}^\nu c_r(\theta) \lambda^{r-m}$. Here $c_r(\theta)$ consists of minors of $H(\theta)$ of size $\nu-r$. Since each entry of $H(\theta)$ is a trigonometric polynomial of degree at most $\fh_{i}$ in $\theta_i$, then each entry of $H(\theta)^{n}q_\theta(H(\theta))$ is of degree at most $(n+\nu-m)\fh_{i}$ in each $\theta_i$.
\end{proof}

\begin{prp}\label{prp:trigiffcomp}
If $H(\theta)$ has an eigenvector $f(\theta)$ of entries $f_p(\theta) = \sum_{m\in \Lambda_p} \alpha_p(m)\ee^{-2\pi\ii m\cdot \theta}$ for some finite $\Lambda_p\subset\Z^d$, $p=1,\dots,\nu$, then defining $\psi_p(k) = \alpha_p(k)$ if $k\in \Lambda_p$, $\psi_p(k)=0$ otherwise, $\psi$ is a localized eigenvector for $\cH$.

Conversely, if $\cH$ has an eigenvector $\psi$ of finite support $\Lambda_p$ in each entry $p=1,\dots,\nu$, and we let $f_p(\theta) = \sum_{k\in \Lambda_p} \ee^{-2\pi\ii \theta\cdot k}\psi_p(k)$, then $f(\theta)$ is an eigenvector for $H(\theta)$.

In particular, $\cH$ has an eigenvector localized in $V_f$ iff $H(\theta)$ has an eigenvector independent of $\theta$.
\end{prp}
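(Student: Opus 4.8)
The plan is to recognize that the two maps described in the statement are simply the Floquet transform $U$ of \eqref{e:utile} and its inverse, and then to transport the eigenvalue equation across the unitary equivalence $U\cH U^{-1}=M_H$ of \eqref{e:dir}. The key observation I would record first is that $U$ restricts to a bijection between finitely supported vectors $\psi\in\ell^2(\Z^d)^\nu$ and vector functions whose entries are trigonometric polynomials. Concretely, comparing $(U\psi)_p(\theta)=\sum_{k\in\Z^d}\ee^{-2\pi\ii\theta\cdot k}\psi_p(k)$ with the displayed formula for $f_p$, the sequence given by $\psi_p(k)=\alpha_p(k)$ for $k\in\Lambda_p$ and $\psi_p(k)=0$ otherwise satisfies $U\psi=f$; the reverse passage is the Fourier-coefficient formula $\psi_p(k)=\int_{\T_\ast^d}f_p(\theta)\ee^{2\pi\ii k\cdot\theta}\,\dd\theta$ already computed when establishing \eqref{e:dir}. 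Thus in both implications $\psi$ and $f$ are related by $f=U\psi$, and the whole proposition reduces to moving the eigen-equation through $U$.

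For the forward implication I would take $f(\theta)$ to be an eigenvector of $H(\theta)$ with the ($\theta$-independent) flat-band value $\lambda$, i.e.\ $M_Hf=\lambda f$. Setting $\psi:=U^{-1}f$, which is precisely the finitely supported $\psi$ defined in the statement, I have $\psi\neq 0$ since $U$ is injective and $f\neq 0$. Applying $U^{-1}$ to $M_Hf=\lambda f$ and using \eqref{e:dir} in the form $\cH=U^{-1}M_HU$ gives $\cH\psi=U^{-1}M_HU\psi=U^{-1}M_Hf=\lambda U^{-1}f=\lambda\psi$, so $\psi$ is a localized eigenvector of $\cH$.

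For the converse I would start from a finitely supported eigenvector $\cH\psi=\lambda\psi$ and set $f:=U\psi$, whose entries are trigonometric polynomials and which is nonzero. Applying $U$ and using \eqref{e:dir} yields $M_Hf=U\cH U^{-1}f=U\cH\psi=\lambda U\psi=\lambda f$, hence $H(\theta)f(\theta)=\lambda f(\theta)$ for a.e.\ $\theta$; since both sides are continuous (being trigonometric polynomials), the identity upgrades to every $\theta$, so $f(\theta)$ is an eigenvector of $H(\theta)$. Finally, the ``in particular'' clause is the special case $\Lambda_p\subseteq\{0\}$ for all $p$: an eigenvector supported in $V_f$ has $f_p(\theta)=\psi_p(0)$ constant in $\theta$, and conversely a $\theta$-independent $f$ carries all its Fourier mass at $k=0$, i.e.\ corresponds to a $\psi$ supported in $V_f$.

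Since the argument is essentially a transcription across the unitary $U$, I do not expect a genuine obstacle. The only points requiring care are the bookkeeping of the sign and index conventions relating $\alpha_p$, $\psi_p$ and the exponentials (to be certain one obtains $U\psi=f$ and not a conjugated or reflected version), and, in the converse, the passage from an almost-everywhere identity to one holding for all $\theta$, which is immediate from the continuity of trigonometric polynomials.
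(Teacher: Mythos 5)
Your proposal is correct and follows essentially the same route as the paper: both identify $f=U\psi$ via the Floquet transform and its Fourier-coefficient inverse, and transport the eigenvalue equation through the unitary equivalence $U\cH U^{-1}=M_H$, with the ``in particular'' clause obtained by taking $\Lambda_p=\{0\}$. The extra remarks on sign conventions and the a.e.-to-everywhere upgrade by continuity are sound but not needed beyond what the paper already does.
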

See the graphs in Figures~\ref{fig:exnonsy}-\ref{fig:counterconj} for an illustration of this construction. 
\begin{proof}
Let $\psi_p(k)= \int_{\T_\ast^d} f_p(\theta)\ee^{2\pi\ii k\cdot \theta}\,\dd\theta$, the pre-image of $f(\theta)$ under $U$. Then $(U \cH\psi)(\theta )= H(\theta)(U\psi)(\theta) = H(\theta)f(\theta) = \lambda f(\theta) = \lambda (U\psi)(\theta)$, so $\psi$ is an eigenvector of $\cH$, since $U$ is unitary. The first claim follows since $\int_{\T_\ast^d} \ee^{2\pi\ii(k-m)\cdot \theta}\,\dd\theta=\delta_{k,m}$.

Next, for $\psi$ in the converse statement, \eqref{e:utile} becomes $(U\psi)_p(\theta) = \sum_{k\in \Z^d} \ee^{-2\pi \ii\theta\cdot k}\psi_p(k) = f_p(\theta)$. Moreover, $(U\psi)(\theta)$ is an eigenvector of $H(\theta)$ since $H(\theta)(U\psi)(\theta) = (U\cH\psi)(\theta) = \lambda (U\psi)(\theta)$, where we used $\cH\psi=\lambda\psi$.

The special case follows by taking $\Lambda_p=\{0\}$ for each $p$.
\end{proof}

\begin{proof}[Proof of Theorem~\ref{thm:loceigen}]
This now follows from Lemma~\ref{lem:cayleyham} and Proposition~\ref{prp:trigiffcomp}.
\end{proof}

Note that if $\Gamma$ has an eigenvector of compact support, then it can be translated under the action of $\Z^d$ to produce infinitely many such vectors, which are mutually orthogonal. So a flat band has infinite multiplicity for $\cH$.

\begin{thm}\label{thm:notflat}
If the weight $w$ is symmetric, $w_{ji}(-k)=w_{ij}(k)$, and strictly positive, then the top band of $\cH=\cA_w+Q$ is never flat; it consists of absolutely continuous spectrum.
\end{thm}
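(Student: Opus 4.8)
The plan is to show that strict positivity of $w$ makes $\theta=0$ the global maximizer of the top band, and that assuming flatness forces $\Gamma$ to be disconnected, a contradiction. First I would establish the comparison $E_\nu(\theta)\le E_\nu(0)$ by a Perron--Frobenius type argument. Since all weights are positive, $h_{ii}(\theta)=Q_i+\sum_k w_{ii}(k)\cos(2\pi\theta\cdot k)\le h_{ii}(0)$ and $|h_{ij}(\theta)|\le \sum_k w_{ij}(k)=h_{ij}(0)$ for $i\neq j$. Taking $f=f(\theta)$ a normalized top eigenvector of $H(\theta)$ and passing to the vector of moduli $|f|=(|f_i|)$, the variational characterization gives
\[
E_\nu(\theta)=\Re\langle f,H(\theta)f\rangle \le \langle |f|,H(0)|f|\rangle \le E_\nu(0),
\]
because $H(0)$ is real symmetric with nonnegative off-diagonal entries and $\||f|\|=1$. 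Thus the top band always attains its maximum at $\theta=0$.

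Second, connectivity of $\Gamma$ forces the quotient graph on $\{1,\dots,\nu\}$, with an edge $i\sim j$ iff $I_{ij}\neq\emptyset$, to be connected; hence $H(0)$ is irreducible, and by Perron--Frobenius its top eigenvalue $E_\nu(0)$ is simple with a strictly positive eigenvector $f^{(0)}$.

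Third --- the heart of the argument --- I would assume toward a contradiction that the top band is flat, i.e.\ $E_\nu(\theta)\equiv E_\nu(0)$, and trace the equality cases of both inequalities above, simultaneously for every $\theta$. Equality in the second inequality forces $|f(\theta)|$ to be a top eigenvector of $H(0)$, hence $|f(\theta)|=f^{(0)}>0$ by simplicity. Equality in the first then forces, for every $\theta$: that $h_{ii}(\theta)=h_{ii}(0)$ for all $i$ (so $I_{ii}=\emptyset$, since $\sum_k w_{ii}(k)\bigl(1-\cos 2\pi\theta\cdot k\bigr)\equiv 0$ with positive weights is impossible unless the index set is empty); that $|h_{ij}(\theta)|=h_{ij}(0)$ for every edge (forcing $|I_{ij}|\le 1$, as two or more distinct positive-weight exponentials cannot retain maximal modulus for all $\theta$); and the phase-alignment condition $\phi_j(\theta)-\phi_i(\theta)+2\pi\theta\cdot k_{ij}\in 2\pi\Z$, where $\phi_j(\theta)$ is the phase of $f_j(\theta)$ and $I_{ij}=\{k_{ij}\}$.

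Finally I would run this cocycle condition around cycles of the quotient graph. Telescoping the phases around any cycle $C$ yields $\theta\cdot K_C\in\Z$ for all $\theta$, where $K_C$ is the sum of the hopping vectors $k_{ij}$ along $C$; hence $K_C=0$. Thus the hopping cochain $(k_{ij})$ is exact: there is $c\colon\{1,\dots,\nu\}\to\Z^d$ with $c_j-c_i=k_{ij}$ on every edge. The label $L(v_i+r_\fa):=r-c_i$ is then constant along every edge of $\Gamma$, so each connected component of $\Gamma$ lies in a single label set $\{v_i+(\ell+c_i)_\fa\}_{i=1}^\nu$ of at most $\nu$ vertices; this contradicts connectivity of the infinite graph $\Gamma$. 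Therefore $E_\nu$ is non-constant, and by the analyticity dichotomy recorded before Lemma~\ref{lem:flatchar} the top band consists of absolutely continuous spectrum. The step I expect to be most delicate is the third: correctly extracting all three equality consequences --- in particular separating the modulus inequality from the phase alignment, and invoking simplicity to pin down $|f(\theta)|=f^{(0)}$ --- and then converting the phase cocycle into the combinatorial ``flux-free'' condition that disconnects $\Gamma$.
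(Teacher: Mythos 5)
Your proof is correct, but it takes a genuinely different route from the paper's. The paper first invokes Theorem~\ref{thm:loceigen} to produce a compactly supported eigenvector $\psi$ for the putative flat top band, restricts to a finite connected subgraph $C$ containing $\supp\psi$ together with its $1$- and $2$-neighborhoods and connecting paths, and then applies Perron--Frobenius to $H_C$: the flat band value must equal $\|H_C\|$, whose eigenvector is simple and strictly positive, while the restriction of $\psi$ vanishes on the boundary layer; positivity then propagates $\psi\equiv 0$. You instead stay entirely in Floquet space: the triangle inequality plus the variational principle give $E_\nu(\theta)\le E_\nu(0)$, and assuming flatness you trace the equality cases to extract $|f(\theta)|=f^{(0)}>0$, $I_{ii}=\emptyset$, $|I_{ij}|\le 1$, and the phase cocycle, whose exactness forces $\Gamma$ to decompose into components of at most $\nu$ vertices. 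Both arguments are sound; note only that (as the paper does for $H_C$) you should shift $H(0)$ by $cI$ before invoking Perron--Frobenius, since $Q$ may make diagonal entries negative, and that the termwise equality extraction relies on $|f_i(\theta)|=f^{(0)}_i>0$, which you correctly obtain from simplicity. The paper's route is shorter given its earlier machinery on localized eigenvectors and works in physical space; yours is self-contained at the Floquet level, yields the additional information that the top of the spectrum is attained at $\theta=0$, and exposes the precise combinatorial obstructions (no same-sublattice hopping, single hopping vector per pair, zero flux around quotient cycles) that a flat top band would impose.
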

To our knowledge, this result was only proved rigorously in \cite[Th. 2.1]{KorSa2} for $w(u,v) = \frac{1}{\sqrt{\deg(u)\deg(v)}}$. We give a much simpler argument which works for arbitrary $w>0$.
\begin{proof}
Suppose on the contrary that the top band is an eigenvalue $\lambda$. 

\emph{Step 1.} By Theorem~\ref{thm:loceigen}, we may find a corresponding eigenvector $\psi$ of finite support. Let $S=\supp \psi$, $S_1 = \{w\notin S:w\sim v,v\in S\}$ and $S_2 = \{w\notin (S\cup S_1):w\sim v,v\in S_1\}$ be the vertices at distances $1$ and $2$ from $S$. Let $B= S\cup S_1 \cup S_2$. Since $\Gamma$ is connected, for any $v,v'\in B$, we may find a path $P_{v,v'}$ connecting $v$ to $v'$. We take $P_{v,v'}$ to be the shortest one, and fix one if there are many. Let $C = B \cup \{P_{v,v'}\}_{v,v'\in B}$. Then the finite graph $C$ is connected. Let $\widetilde{\psi}$ be the restriction of $\psi$ to $C$. Then $0=(\cH-\lambda)\psi(v)  = (H_C-\lambda)\widetilde{\psi}(v)$ if $v\in S\cup S_1$ and $0 = (H_C-\lambda)\widetilde{\psi}(v)$ for $v\in C\setminus (S\cup S_1)$ by definition of $S,S_1$.

\emph{Step 2.} We may assume $Q$ has positive entries up to shifting $\cH$ by $cI$. Then the matrix $H_C$ is irreducible since $C$ is connected, so it has a Perron-Frobenius eigenvector $\phi$ with $\phi(v)>0$ on $C$. Here $\lambda$ is the top eigenvalue of $H_C$ since $\lambda\le \|H_C\|\le \|\cH\| = \lambda$, so $\lambda=\|H_C\|$ (indeed $\|H_C\|=\sup\limits_{\|f\|_{\ell^2(C)}=1}\langle f,H_C f\rangle \le \sup\limits_{\|f\|_{\ell^2(\Gamma)}=1} \langle f,\cH f\rangle=\|\cH\|$). Since $\lambda$ is simple for $H_C$ and $(H_C-\lambda)\widetilde{\psi}=0$, we get $\widetilde{\psi}=\phi$.

We showed that $\psi(v)$ has only nonnegative entries. Finally, take $v\in S_1$. Then $0=\lambda\psi(v) = \cH\psi(v)$, implying that $\psi(w)=0$ for all $w\sim v$. Progressively, as $\Gamma$ is connected, we deduce that $\psi\equiv 0$, a contradiction.
\end{proof}

Theorem~\ref{thm:notflat} is no longer true if the edges of $\Gamma$ are allowed to carry Hermitian weights: the spectrum can be entirely flat in that case, as the example taken from \cite{Creu} in Figure~\ref{fig:creu} shows. The complex weights are interpreted as magnetic fields.

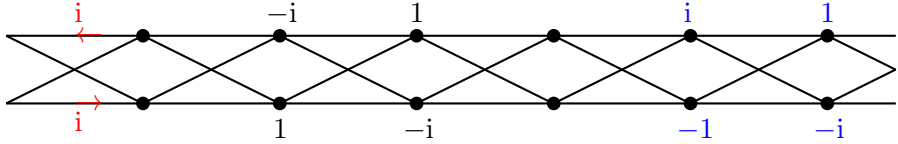
\begin{figure}[h!]
\begin{center}
\setlength{\unitlength}{0.9cm}
\thicklines
\begin{picture}(1.3,1.3)(-1.1,-1.15)
   \put(-7,0){\line(1,0){13}}
	 \put(-7,-1){\line(1,0){13}}
	 \put(-7,-1){\line(2,1){2}}
	 \put(-7,0){\line(2,-1){2}}
	 \put(-5,-1){\line(2,1){2}}
	 \put(-5,0){\line(2,-1){2}}
	 \put(-3,-1){\line(2,1){2}}
	 \put(-3,0){\line(2,-1){2}}
	 \put(-1,-1){\line(2,1){2}}
	 \put(-1,0){\line(2,-1){2}}
	  \put(1,-1){\line(2,1){2}}
	 \put(1,0){\line(2,-1){2}}
	 \put(3,-1){\line(2,1){2}}
	 \put(3,0){\line(2,-1){2}}
	 \put(5,-1){\line(2,1){1}}
	 \put(5,0){\line(2,-1){1}}
	 \put(-3,-1){\circle*{.2}}
	 \put(-3,0){\circle*{.2}}
	 \put(-5,-1){\circle*{.2}}
	 \put(-5,0){\circle*{.2}}
	 \put(-1,-1){\circle*{.2}}
	 \put(-1,0){\circle*{.2}}
	 \put(1,-1){\circle*{.2}}
	 \put(1,0){\circle*{.2}}
	 \put(3,-1){\circle*{.2}}
	 \put(3,0){\circle*{.2}}
	 \put(5,-1){\circle*{.2}}
	 \put(5,0){\circle*{.2}}
	 \put(-6,-1.1){\textcolor{red}{$\rightarrow$}}
	 \put(-6,-0.1){\textcolor{red}{$\leftarrow$}}
	 \put(-6,-1.4){\textcolor{red}{$\ii$}}
	 \put(-6,0.2){\textcolor{red}{$\ii$}}
	 \put(-1.1,0.2){$1$}
	 \put(-1.2,-1.5){$-\ii$}
	 \put(-3.2,0.2){$-\ii$}
	 \put(-3.1,-1.5){$1$}
	 \put(4.9,0.2){\textcolor{blue}{$1$}}
	 \put(4.8,-1.5){\textcolor{blue}{$-\ii$}}
	 \put(2.9,0.2){\textcolor{blue}{$\ii$}}
	 \put(2.8,-1.5){\textcolor{blue}{$-1$}}
\end{picture}
\caption{Each horizontal edge carries the symmetric weight $\pm\ii$, with the sign as shown (in red), while the diagonal edges carry the weight $1$. Two localized eigenfunctions are shown in black and blue, corresponding to $\lambda=2$ and $\lambda=-2$, respectively. Here $\sigma(\cH)=\{\pm 2\}$.}\label{fig:creu}
\end{center}
\end{figure}

Let $p(\theta;\lambda) = \det(H(\theta)-\lambda I)$. We see from \eqref{e:hthe} that if $z_j:=\ee^{2\pi\ii\theta_j}$, then $p(z;\lambda)$ is a Laurent polynomial in $z$ and polynomial in $\lambda$. If $\lambda$ is a flat band, then $p(\cdot;\lambda)$ vanishes identically on the sphere $\mathbb{S}^d = \{(z_1,\dots,z_d)\in \C^d:|z_1|=|z_2|=\dots=|z_d|=1\}$ by Lemma~\ref{lem:flatchar}. This implies the following fact that we frequently use.

\begin{lem}\label{lem:triv}
If $\lambda_0$ is a flat band for $\cH$, then for this $\lambda_0$, the characteristic polynomial $p(z;\lambda_0)$ is the zero Laurent polynomial in $z$. 
\end{lem}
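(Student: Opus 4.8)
The plan is to translate the pointwise vanishing of $p(\cdot;\lambda_0)$ on the distinguished torus $\mathbb{S}^d$ into the vanishing of each Laurent coefficient separately; this is nothing more than the uniqueness of Fourier coefficients for trigonometric polynomials.

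First I would invoke Lemma~\ref{lem:flatchar}: since $\lambda_0$ is a flat band, it is an eigenvalue of $H(\theta)$ for \emph{every} $\theta\in\T_\ast^d$, so $p(\theta;\lambda_0)=\det(H(\theta)-\lambda_0 I)=0$ for all $\theta\in[0,1)^d$. Setting $z_j=\ee^{2\pi\ii\theta_j}$, this says exactly that the Laurent polynomial $p(z;\lambda_0)$ vanishes at every point of $\mathbb{S}^d$, as already noted in the paragraph preceding the statement.

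Next, writing $p(z;\lambda_0)=\sum_{k\in K}c_k z^k$ for a finite index set $K\subset\Z^d$, its restriction to $\mathbb{S}^d$ is the trigonometric polynomial $\theta\mapsto\sum_{k\in K}c_k\ee^{2\pi\ii k\cdot\theta}$, which is identically zero on $[0,1)^d$ by the previous step. I would then isolate each coefficient by integrating against a character: using the orthogonality relation $\int_{\T_\ast^d}\ee^{2\pi\ii(k-m)\cdot\theta}\,\dd\theta=\delta_{k,m}$ already employed in the proof of Proposition~\ref{prp:trigiffcomp}, one obtains $c_m=\int_{\T_\ast^d}p(\theta;\lambda_0)\ee^{-2\pi\ii m\cdot\theta}\,\dd\theta=0$ for every $m\in K$. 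Hence all the coefficients vanish and $p(z;\lambda_0)$ is the zero Laurent polynomial in $z$, as claimed.

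There is no genuine obstacle here: the entire content is that distinct monomials $z^k$, $k\in\Z^d$, are linearly independent as functions on $\mathbb{S}^d$, equivalently the orthonormality of the characters $\{\ee^{2\pi\ii k\cdot\theta}\}_{k\in\Z^d}$ in $L^2(\T_\ast^d)$. The only point worth emphasizing is that we need $p(\cdot;\lambda_0)$ to vanish on the \emph{full} torus, on which these characters are orthonormal; this is precisely what the flat-band hypothesis delivers via Lemma~\ref{lem:flatchar}, whereas vanishing on a proper subset of $\mathbb{S}^d$ would not force the coefficients to be zero.
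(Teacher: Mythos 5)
Your proof is correct, and it takes a genuinely different route from the paper's. Both arguments share the same first reduction: by Lemma~\ref{lem:flatchar}, the flat-band hypothesis forces $p(\cdot;\lambda_0)$ to vanish on all of $\mathbb{S}^d$, and the whole content is then the purely analytic fact that a Laurent polynomial vanishing identically on the distinguished torus must have all coefficients zero. You establish this by Fourier orthogonality: restricting to $\mathbb{S}^d$ turns $p$ into a trigonometric polynomial, and integrating against the character $\ee^{-2\pi\ii m\cdot\theta}$ kills every coefficient $c_m$ in one line. The paper instead clears denominators by multiplying by $z^Q$ and runs an induction on the dimension $d$, using at each stage that a one-variable polynomial vanishing on the unit circle has infinitely many roots and hence is identically zero; this yields the slightly stronger-looking intermediate statement that $q(z)$ vanishes on all of $\C^d$ (equivalently $p\equiv 0$ on $\C^d\setminus\{0\}$), though for the lemma as stated the two conclusions coincide, since having all coefficients zero is the same as being the zero Laurent polynomial. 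Your argument is arguably the more economical of the two, and it correctly flags the one point that matters: vanishing on the \emph{full} torus is needed, since the characters are only orthonormal over all of $\T_\ast^d$.
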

In particular, each coefficient of $z^k$ must vanish in this case.
\begin{proof}
It suffices to show that if $p(z)$ is a Laurent polynomial vanishing identically on $\mathbb{S}^d$, then $p(z)\equiv 0$ on $\C^d\setminus\{0\}$. Take $Q\in \N^d$ large, so that $q(z) = z^Qp(z)$ is a polynomial in $z$. We show that $q(z)$ vanishes for all $z\in \C^d$.

If $d=1$, then $q(z)$ is identically zero on $\C$, as $q$ is holomorphic and $q\equiv 0$ on $\mathbb{S}$.

Suppose the claim holds in dimension $d-1$ and fix $z_d\in \mathbb{S}$. Then $(z_1,\dots,z_{d-1})\mapsto q(z_1,\dots,z_d)$ is a polynomial in $d-1$ variables vanishing on $\mathbb{S}^{d-1}$, so it vanishes for all $(z_1,\dots,z_{d-1})\in \C^{d-1}$. Say $q(z) = \sum_{k\in\Lambda} \alpha_k z^k$ for some $\Lambda\subset \N^d$. Since $q(z) = \sum\limits_{k_1,\dots,k_{d-1}} \beta_{(k_1,\dots,k_{d-1})}(z_d) z_1^{k_1}\cdots z_{d-1}^{k_{d-1}}$ is trivial, where $\beta_{(k_1,\dots,k_{d-1})}(z_d) = \sum\limits_{k_d:(k_1,\dots,k_d)\in \Lambda}\alpha_k z_d^{k_d}$, then $\beta_{(k_1,\dots,k_{d-1})}(z_d)=0$ for each $(k_1,\dots,k_{d-1})$. As $z_d\in \mathbb{S}$ is arbitrary, this holds for all $z_d\in\mathbb{S}$. But each such $\beta$ is a polynomial in $z_d$. Since $\beta$ vanishes on $\mathbb{S}$, then it vanishes on $\C$, so each coefficient $\alpha_k$ vanishes. Thus, $q(z)\equiv 0$ on $\C^d$.
\end{proof}

\begin{rem}
If $\lambda$ is an eigenvalue of $H(\theta)$, then the characteristic polynomial $p(z;\lambda)$ must be symmetric in $z^{\pm k}$, because $p(z;\lambda)=0=\overline{p(z;\lambda)}$ on $\mathbb{S}^d$.
\end{rem}

Lemma~\ref{lem:triv} is the set of ``necessary and sufficient conditions'' given in \cite{TA} for $\cH$ to have a flat band. The authors compute the coefficients of $z^k$ in $p(z;\lambda_0)$ explicitly when ($d=\fh_{1}=1,\nu\le 4$), or ($d\in \{2,3\},\fh_{i}=1,\nu\le 3$), or ($d=1,\fh_{1}\in \{2,3\},\nu\le 3$) and give examples for a range of parameters that make these coefficients vanish, so that a flat band appears. An example is also given for $d=\fh_{1}=1,\nu=5$.

\section{Generators}\label{sec:gens}

The aim of \S\ref{sec:genfla} and \S\ref{sec:gennof} is to generate graphs with and without flat bands, respectively.

\subsection{Flat band generators}\label{sec:genfla} 
We are interested in $Q\equiv 0$ and $w\equiv 1$ here, so $\cH=\cA_\Gamma$.

\begin{thm}\label{thm:eigenchar}
Any eigenvalue of a finite graph $G_F$ can be made to appear as a flat band of a periodic graph $\Gamma$ with $\nu=2\left|G_F\right|$, with eigenvector $\psi$ supported in $V_f$. This choice is optimal, in the sense that there exists $G_F$ and $\lambda\in\sigma(G_F)$ such that $\lambda$ does not arise in any periodic $\Gamma$ having both $\nu<2\left|G_F\right|$ and $\psi$ supported in $V_f$.
\end{thm}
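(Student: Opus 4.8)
For the constructive part, the plan is to translate ``single-cell flat band'' into a purely linear-algebraic condition and then to exhibit an explicit block construction. By Proposition~\ref{prp:trigiffcomp} (applied with $\cH=\cA_\Gamma$, so that $H(\theta)=A(\theta)$), a periodic graph with $\nu$ vertices per cell has a single-cell flat band at $\lambda$ if and only if $A(\theta)$ has a $\theta$-independent eigenvector $v\neq0$ with $A(\theta)v=\lambda v$ for all $\theta$. Writing $A(\theta)=A_0+\sum_{k\neq0}A_k\,\ee^{2\pi\ii\theta\cdot k}$, where $A_0$ is the symmetric zero-diagonal $0/1$ adjacency matrix of the edges inside one cell and $A_{-k}=A_k^\top$ are the $0/1$ hopping matrices, and using that the characters $\ee^{2\pi\ii\theta\cdot k}$ are linearly independent, this condition is equivalent to
\begin{equation*}
A_0 v=\lambda v,\qquad A_k v=0\ \ (k\neq0),
\end{equation*}
together with connectedness of the resulting $\Gamma$.

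Given $G_F$ on $n=|G_F|$ vertices with adjacency matrix $B$ and $Bu=\lambda u$, $u\neq0$, I would take $d=1$, $\nu=2n$, and set
\begin{equation*}
A_0=\begin{pmatrix} B & 0\\ 0 & B\end{pmatrix},\qquad A_1=\begin{pmatrix} I_n & I_n\\ I_n & I_n\end{pmatrix}=A_{-1},\qquad v=\begin{pmatrix} u\\ -u\end{pmatrix}.
\end{equation*}
Then $A_0v=\lambda v$ (two sign-opposite copies of $G_F$), while $A_1v=A_{-1}v=0$ since the two copies of $u$ enter with opposite signs; hence $A(\theta)v=\lambda v$ for all $\theta$, and Proposition~\ref{prp:trigiffcomp} yields an eigenvector of $\cA_\Gamma$ supported in $V_f$. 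The within-cell graph is two disjoint copies of $G_F$ and the hopping joins copy $A$ of cell $r$ to copy $B$ of cell $r+1$, so $\Gamma$ is connected whenever $G_F$ is; in general one replaces the blocks $I_n$ by any $0/1$ matrix $P$ (the relation $A_1v=0$ holds for $\left(\begin{smallmatrix}P&P\\P&P\end{smallmatrix}\right)$ for \emph{every} $P$) chosen so that the hopping links the components of $G_F$, which leaves $v$ untouched.

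For optimality it suffices to exhibit one witness, and the cleanest is $G_F=K_1$ with $\lambda=0\in\sigma(K_1)$, so that $2|G_F|=2$. I would show that $\lambda=0$ cannot be a single-cell flat band of any connected periodic graph with $\nu<2$, i.e. with $\nu=1$. For $\nu=1$, connectedness forces $I_{11}\neq\emptyset$, so $A(\theta)=\sum_{k\in I_{11}}\ee^{2\pi\ii\theta\cdot k}$ is a nonconstant real trigonometric polynomial and hence never equals the constant $\lambda$ for all $\theta$; as recalled in the $\nu=1$ fact of the introduction, such an operator has purely absolutely continuous spectrum and no flat band at all. Therefore $\nu\ge2=2|G_F|$, and the recipe is optimal.

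The construction is routine once the right Ansatz is found; the point I would stress as the conceptual crux is \emph{why} the factor $2$ is forced. The constraint $A_kv=0$ with $A_k\neq0$ a $0/1$ matrix can hold only if some row of $A_k$ selects coordinates of $v$ summing to zero, so $v$ must change sign; a single copy of $G_F$ with, say, a Perron eigenvector (all entries positive) admits no such cancellation and cannot be coupled to neighbouring cells without destroying the flat band. Duplicating $G_F$ and reversing the sign on the second copy is precisely what manufactures the cancellation, which explains why $2|G_F|$ is the natural count and, by the $K_1$ witness, an unbeatable one. The only mildly delicate bookkeeping is guaranteeing connectivity of $\Gamma$ while preserving $A_kv=0$, which is why I keep the hopping block in the flexible form $\left(\begin{smallmatrix}P&P\\P&P\end{smallmatrix}\right)$.
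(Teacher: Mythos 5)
Your constructive half is correct and is in substance identical to the paper's: your matrices $A_0=\operatorname{diag}(B,B)$, $A_{\pm 1}=\left(\begin{smallmatrix} I&I\\ I&I\end{smallmatrix}\right)$ are exactly the Floquet data of the Cartesian product $\Gamma_0\mathop\square G_F$ with $\Gamma_0$ the graph of Figure~\ref{fig:boxnover} (right), and your eigenvector $(u,-u)$ is the paper's $f\otimes\phi$; the only difference is that you phrase it through the character-independence reduction ($A_0v=\lambda v$, $A_kv=0$ for $k\neq 0$) rather than through the tensor identity $\cA_\Gamma=\cA_{\Gamma_0}\otimes I+I\otimes\cA_{G_F}$, and your remark about replacing $I_n$ by a general block $P$ to handle connectivity is a nice touch. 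Where you genuinely diverge is the optimality claim. Your witness $G_F=K_1$, $\lambda=0$ is formally admissible for the existential statement as written (for $\nu=1$ there are indeed no flat bands at all, so $0\notin\cF_1^s$ while $0\in\cF_2^s$), so your proof is not wrong. But it is a degenerate witness: it only certifies that the coefficient $2$ cannot be lowered for the single graph $K_1$, and says nothing against, say, a hypothetical bound like $|G_F|+1$ for $|G_F|\ge 2$ — which Proposition~\ref{prp:reggen} shows actually \emph{does} suffice for regular $G_F$ away from the top eigenvalue. The paper instead takes $G_F=P_3$ and $\lambda=\pm\sqrt{2}$ and invokes the explicit computation $\cF_5^s=\cF_4^s=\{0,1,-1,-2,\frac{-1\pm\sqrt 5}{2}\}$ from Section~\ref{scfb} together with the monotonicity $\cF_\nu^s\subseteq\cF_{\nu+1}^s$, which shows that $\sqrt{2}$ fails for every $\nu\le 5=2|G_F|-1$; this is a much more substantive (and much more laborious) demonstration that the factor $2|G_F|$ is tight for a nontrivial graph. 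You should at least acknowledge that your witness trivializes the optimality assertion, even though it satisfies its literal wording.
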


The eigenvector is explicit in the proof. Of course, this theorem is interesting because we consider $\cA_{\Gamma}$. If we allow potentials, then trivially any $\lambda\in \R$ can be made flat by shifting a flat band $0$ with $Q=\lambda I$. Before we give the proof, let us mention a consequence.

\begin{cor}
A flat band is necessarily a totally real algebraic integer. Conversely, any totally real algebraic integer can be made to appear as a flat band of some periodic $\Gamma$.
\end{cor}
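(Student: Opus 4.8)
The plan is to prove the two implications separately: the forward direction follows from a single evaluation of the Floquet matrix, while the converse reduces to a known realization theorem for graph spectra.

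First I would show that a flat band is a totally real algebraic integer. Since $Q\equiv 0$ and $w\equiv 1$, the Floquet matrix is $A(\theta)$ with entries $h_{ij}(\theta)=\sum_{k\in I_{ij}}\ee^{2\pi\ii\theta\cdot k}$. Evaluating at $\theta=0$ gives a matrix $A(0)$ whose $(i,j)$ entry is $|I_{ij}|\in\N$; by the symmetry $I_{ji}=-I_{ij}$ recorded in \eqref{e:indicessym} we have $|I_{ji}|=|I_{ij}|$, so $A(0)$ is a symmetric matrix with nonnegative integer entries. By Lemma~\ref{lem:flatchar}, a flat band $\lambda$ is an eigenvalue of $A(\theta)$ for every $\theta$, in particular $\lambda\in\sigma(A(0))$. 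The characteristic polynomial $\chi(t)=\det(tI-A(0))$ is monic with integer coefficients, so $\lambda$ is an algebraic integer; and since $A(0)$ is real symmetric, all roots of $\chi$ are real. The minimal polynomial of $\lambda$ over $\Q$ divides $\chi$, hence all its roots, namely the conjugates of $\lambda$, are real. Therefore $\lambda$ is totally real. (Alternatively one could extract the $z^0$ coefficient of $p(z;\lambda)$ using Lemma~\ref{lem:triv} to obtain the algebraic-integer part, but evaluating at $\theta=0$ delivers total reality for free.)

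For the converse I would chain Theorem~\ref{thm:eigenchar} with a realization theorem for graph spectra. Theorem~\ref{thm:eigenchar} shows that any eigenvalue of a finite graph can be made to appear as a flat band, so it suffices to realize an arbitrary totally real algebraic integer $\lambda$ as an eigenvalue of some finite simple graph. For rational integers this is elementary (e.g.\ $n\in\sigma(\cA_{K_{n,n}})$), but the general case is exactly the content of a theorem of Salez, which asserts that every totally real algebraic integer is an eigenvalue of some finite tree. Applying Theorem~\ref{thm:eigenchar} to such a tree $G_F$ then produces a periodic $\Gamma$ for which $\lambda$ is a flat band. (One could instead invoke the earlier result of Estes realizing totally real algebraic integers as eigenvalues of symmetric integer matrices, but since we need $\cA_\Gamma$ to remain a $0/1$ adjacency operator, the tree formulation feeding directly into Theorem~\ref{thm:eigenchar} is the cleaner input.)

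The main obstacle is the converse. The forward direction is an immediate consequence of evaluating at $\theta=0$ together with the elementary spectral theory of integer symmetric matrices, whereas the converse rests entirely on the deep realization result that every totally real algebraic integer occurs as a (tree) eigenvalue. The key conceptual step is recognizing that Theorem~\ref{thm:eigenchar} reduces the corollary \emph{precisely} to this statement; the number-theoretic input itself is nontrivial but available in the literature, so no new combinatorial construction is required here.
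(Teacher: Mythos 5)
Your proof is correct. The converse is exactly the paper's argument: Theorem~\ref{thm:eigenchar} combined with Salez's theorem that every totally real algebraic integer is an eigenvalue of a finite tree. For the forward direction you take a genuinely different (and arguably more elementary) route: the paper invokes Theorem~\ref{thm:loceigen} to get a compactly supported eigenvector, restricts it to the finite induced subgraph on its support, and concludes that $\lambda$ is an eigenvalue of a finite graph's $0/1$ adjacency matrix; you instead bypass the compact-support machinery entirely and simply evaluate the Floquet matrix at $\theta=0$, observing via Lemma~\ref{lem:flatchar} that $\lambda\in\sigma(A(0))$ and that $A(0)=(|I_{ij}|)$ is a symmetric matrix with nonnegative integer entries (using $I_{ji}=-I_{ij}$ for the symmetry). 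Both are one-line arguments once the supporting results are in place; yours needs only the definition of a flat band and elementary spectral theory of integer symmetric matrices, while the paper's has the small advantage of exhibiting $\lambda$ as an eigenvalue of an honest finite simple graph (not merely of a nonnegative integer symmetric matrix with possible diagonal and multiple entries), which is the form that feeds naturally into the single-cell analysis of Section~\ref{scfb}. Either way the conclusion that $\lambda$ is a totally real algebraic integer follows, since the minimal polynomial divides the (monic, integer, real-rooted) characteristic polynomial.
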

Recall that a totally real algebraic integer is a root of a monic polynomial with integer coefficients and real roots.
\begin{proof}
The first part follows from Theorem~\ref{thm:loceigen}, as an eigenvector of compact support is also an eigenvector of a finite subgraph. The more interesting converse follows from Theorem~\ref{thm:eigenchar} and \cite{Sal}.
\end{proof}

\begin{proof}[Proof of Theorem~\ref{thm:eigenchar}]
We construct $\Gamma$ as shown in Figure~\ref{fig:cartgf}. Each vertex of the graph of Figure~\ref{fig:boxnover} (right) is replaced by a copy of $G_F$. The edges of this new $\Gamma$ consist of the old edges of each $G_F$, plus an edge from each vertex $v_p\in G_F$ to its copy $v_p'$ in each neighboring $G_F$. So $\deg_{\Gamma}(v_p)=\deg_{G_F}(v_p)+4$.
\begin{figure}[h!]
\begin{center}
\setlength{\unitlength}{1cm}
\thicklines
\begin{picture}(1.3,1.3)(-0.8,-1.1)
   \put(-6,0){\line(1,0){1.7}}
	 \put(-6,-1){\line(1,0){1.7}}
	 \put(-3.6,0){\line(1,0){1.4}}
	 \put(-3.6,-1){\line(1,0){1.4}}
	 \put(-1.6,0){\line(1,0){1.4}}
	 \put(-1.6,-1){\line(1,0){1.4}}
	 \put(0.4,0){\line(1,0){1.4}}
	 \put(0.4,-1){\line(1,0){1.4}}
	  \put(2.4,0){\line(1,0){1.4}}
	 \put(2.4,-1){\line(1,0){1.4}}
	 \put(-3.8,-0.9){\line(2,1){1.6}}
	 \put(-3.7,-0.15){\line(2,-1){1.5}}
	 \put(-1.8,-0.9){\line(2,1){1.6}}
	 \put(-1.7,-0.15){\line(2,-1){1.5}}
	 \put(0.2,-0.9){\line(2,1){1.6}}
	 \put(0.3,-0.15){\line(2,-1){1.5}}
	 \put(2.2,-0.9){\line(2,1){1.6}}
	 \put(2.3,-0.15){\line(2,-1){1.5}}
	 \put(-0.2,-1.1){$G_F$}
	 \put(-0.2,-0.1){$G_F$}
	 \put(-2.2,-1.1){$G_F$}
	 \put(-2.2,-0.1){$G_F$}
	  \put(-4.2,-0.1){$G_F$}
	   \put(1.8,-0.1){$G_F$}
	    \put(-4.2,-1.1){$G_F$}
	     \put(1.8,-1.1){$G_F$}
	 \put(-4.2,-0.9){\line(-2,1){1.6}}
	 \put(-4.2,-0.1){\line(-2,-1){1.6}}
\end{picture}
\caption{The Cartesian product of the graph in Figure~\ref{fig:choicevf} and $G_F$.}\label{fig:cartgf}
\end{center}
\end{figure}
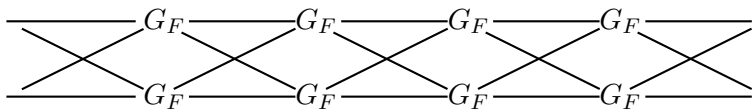
More precisely, $\Gamma$ is the \emph{Cartesian product} $\Gamma=\Gamma_0\mathop\square G_F$, where $\Gamma_0$ is the infinite graph of Figure~\ref{fig:boxnover}, so $\cA_\Gamma=\cA_{\Gamma_0}\mathop\otimes I+I\mathop\otimes \cA_{G_F}$. If $\lambda\in \sigma(\cA_{G_F})$ with eigenvector $\phi$, and if $f(0)=\binom{1}{-1}$, $f(k)=\binom{0}{0}$, $k\neq 0$, is the eigenvector of $\cA_{\Gamma_0}$ corresponding to $0$, let $\psi = f\otimes \phi$, i.e. $\psi_i(0)=\phi_i$ on the upper $G_F$, $\psi_i(0)=-\phi_i$ on the lower $G_F$ and $\psi(k)=0$ for $k\neq 0$. Then $\cA_\Gamma\psi = (\cA_{\Gamma_0}f)\otimes \phi + f\otimes(\cA_{G_F} \phi) = 0 + \lambda f\otimes \phi = \lambda \psi$.

Optimality follows from the examples in Section~\ref{scfb} below (which show much more). In fact, the eigenvalues $\pm \sqrt{2}$ of the path on $3$ vertices do not appear as single-cell flat bands for any $\Gamma$ with $\nu<6$, as we see from the explicit lists.
\end{proof}

Though this generator is optimal for general $G_F$, it can be improved for regular $G_F$.

\begin{prp}\label{prp:reggen}
All eigenvalues of a regular connected graph $G_F$ except the top one can be generated in some $\Gamma$ with $\nu=|G_F|+1$, with eigenvectors localized on $G_F\subset V_f$.
\end{prp}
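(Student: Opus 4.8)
The plan is to put a single copy of $G_F$ in each cell and adjoin one auxiliary vertex $v_0$, exploiting the spectral structure of a regular graph: for connected $r$-regular $G_F$ the top eigenvalue is $r$ with Perron eigenvector $\mathbf 1$, and every eigenvector $\phi$ for an eigenvalue $\lambda\neq r$ is orthogonal to $\mathbf 1$, i.e. has zero coordinate sum. This orthogonality is exactly the cancellation that the extra vertex will exploit, and it is the sole place where regularity is used.

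Concretely, I would take $d=1$ and set $V_f = V(G_F)\cup\{v_0\}$, so that $\nu = |G_F|+1$ as required. Inside each cell I keep all edges of $G_F$ and join $v_0$ to \emph{every} vertex of $G_F$; across cells I link the auxiliary vertices into a $\Z$-chain, i.e. $v_0$ in cell $k$ is adjacent to $v_0$ in cells $k\pm 1$. Since $v_0$ is joined to all of $G_F$ and the chain connects consecutive cells, the resulting $\Gamma$ is connected and $\Z$-periodic, with nearest-neighbor hopping ($\fh_1=1$).

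With this labeling the $\nu\times\nu$ Floquet matrix of $\cA_\Gamma$ is
\[
H(\theta) = \begin{pmatrix} \cA_{G_F} & \mathbf 1 \\ \mathbf 1^{\mathsf T} & 2\cos 2\pi\theta \end{pmatrix},
\]
the top-left block being $\theta$-independent and equal to $\cA_{G_F}$ because the $G_F$-edges and all $v_0$--$G_F$ edges live inside a single cell, while the only $\theta$-dependence sits on the $v_0$-diagonal and comes from the chain. Then for any eigenvalue $\lambda\neq r$ of $G_F$ with eigenvector $\phi$, the vector $\binom{\phi}{0}$ satisfies $H(\theta)\binom{\phi}{0} = \binom{\cA_{G_F}\phi}{\mathbf 1^{\mathsf T}\phi} = \binom{\lambda\phi}{0} = \lambda\binom{\phi}{0}$ for every $\theta$, using $\mathbf 1^{\mathsf T}\phi=0$. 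Thus $H(\theta)$ has a $\theta$-independent eigenvector, so by the special case of Proposition~\ref{prp:trigiffcomp} the value $\lambda$ is a flat band of $\cA_\Gamma$ with an eigenvector localized in a single cell; since that eigenvector vanishes on $v_0$, it is supported on $G_F\subset V_f$. Translating it along the $v_0$-chain yields the infinitely many orthogonal compactly supported eigenvectors confirming flatness.

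The one real design decision — and the main ``obstacle'' to watch — is that $v_0$ must be joined to \emph{all} of $G_F$, so that the $v_0$-row of $H(\theta)$ is exactly $\mathbf 1^{\mathsf T}$ and the required cancellation $\mathbf 1^{\mathsf T}\phi=0$ holds \emph{simultaneously} for every non-top eigenvector by orthogonality to $\mathbf 1$; joining $v_0$ to a proper subset $S$ would instead demand $\sum_{p\in S}\phi_p=0$, which fails in general. The top eigenvalue $r$ is genuinely excluded, since its eigenvector $\mathbf 1$ gives $\mathbf 1^{\mathsf T}\mathbf 1=|G_F|\neq 0$, in agreement with Theorem~\ref{thm:notflat}. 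I would close by noting this uses $\nu=|G_F|+1$ rather than the $2|G_F|$ of Theorem~\ref{thm:eigenchar}, which is the point of the proposition.
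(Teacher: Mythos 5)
Your construction is exactly the paper's: adjoin one auxiliary vertex attached to all of $G_F$, $\Z$-periodize along the auxiliary vertices, and use that every non-top eigenvector of a regular connected graph is orthogonal to $\mathbf 1$ to get a $\theta$-independent eigenvector of the Floquet matrix. The argument is correct and matches the paper's proof in every essential respect (only the ordering of the auxiliary vertex in the block matrix differs).
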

\begin{proof}
We add a new vertex $o$ and attach it to each of the vertices of $G_F$, then $\Z$-periodize along the vertices $o$. See Figure~\ref{fig:deco}. Then $A(\theta) = \begin{pmatrix} 2\cos 2\pi\theta& 1& \cdots& 1\\ 1\\ \vdots& & \cA_{G_F} \\ 1&&\end{pmatrix}$. 
\begin{figure}[h!]
\begin{center}
\setlength{\unitlength}{0.9cm}
\thicklines
\begin{picture}(1.3,1.3)(-0.7,-0.1)
   \put(-7,0){\line(1,0){6}}
   \put(-6.2,1){$G_F$}
   \put(-6,0){\line(0,1){0.8}}
   \put(-6,0){\line(1,5){0.17}}
   \put(-6,0){\line(-1,5){0.17}}
   \put(-4.2,1){$G_F$}
   \put(-4,0){\line(0,1){0.8}}
   \put(-4,0){\line(1,5){0.17}}
   \put(-4,0){\line(-1,5){0.17}}
   \put(-2.2,1){$G_F$}
   \put(-2,0){\line(0,1){0.8}}
   \put(-2,0){\line(1,5){0.17}}
   \put(-2,0){\line(-1,5){0.17}}
   \put(-6,0){\circle*{.2}}
   \put(-4,0){\circle*{.2}}
   \put(-2,0){\circle*{.2}}
   \put(-6.1,-0.35){\small{$o$}}
    \put(-4.1,-0.35){\small{$o$}}
     \put(-2.1,-0.35){\small{$o$}}
   \put(1,0){\line(1,0){6}}
   \put(5,0){\line(1,1){1}}
   \put(5,0){\line(-1,1){1}}
   \put(2,0){\line(1,1){1}}
   \put(2,0){\line(-1,1){1}}
	 \put(4,1){\line(1,0){2}}
	 \put(1,1){\line(1,0){2}}
	 \put(5,0){\circle*{.2}}
	 \put(2,0){\circle*{.2}}
	 \put(6,1){\circle*{.2}}
	 \put(4,1){\circle*{.2}}
	 \put(3,1){\circle*{.2}}
	 \put(1,1){\circle*{.2}}
	 \put(3.4,1){\small{$-1$}}
	 \put(6.2,1){\small{$1$}}
	 \put(4.9,-0.4){\small{$0$}}
\end{picture}
\caption{General procedure (left), example $G_F=P_2$, $\lambda=-1$ (right).}\label{fig:deco}
\end{center}
\end{figure}
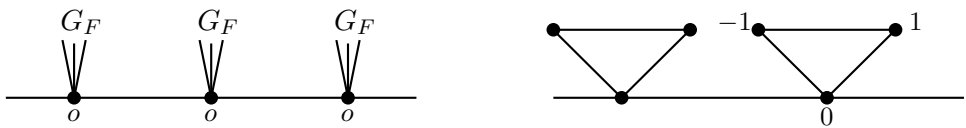 
As $G_F$ is regular, its top eigenvector is $f_1=(1,\dots,1)$, the remaining ones $f_i$, $1<i\le |G_F|$ corresponding to $\lambda_i$ are orthogonal to it, i.e. satisfy $\sum_{j=1}^{|G_F|} f_{ij}=0$. So the vector $\widetilde{f}_i = (0,f_i)^\intercal$ satisfies $A(\theta) \widetilde{f}_i = \lambda_i \widetilde{f}_i$ for each $i=2,\dots,|G_F|$ and all $\theta$.
\end{proof}

\begin{rem}
In the proof of Theorem~\ref{thm:eigenchar}, we can of course generalize the construction by replacing $\Gamma_0$ by any periodic graph having the flat band zero.
In particular, we can take the Lieb lattice or the graph in Figure~\ref{fig:curious} (right) with the depicted eigenvector $f$, and if $\phi$ is an eigenvector of $G_F$ for $\lambda_0$, then $\psi = f\otimes \phi$ will correspond to the flat band $\lambda_0$ but $\psi$ is not supported on a single-cell. Generating such $\psi$ seems to be of interest to physicists. Similarly, long-range hopping can be included by changing $\Gamma_0$.
\end{rem}

\subsection{No flat bands}\label{sec:gennof}
Starting from a finite $G_F$ and any $\Gamma_0$ with $\nu=1$, like $\Z^d$ or the triangular lattice, we can construct infinitely many new graphs $\Gamma$ which have no flat bands by considering the Cartesian product $\Gamma = \Gamma_0\mathop\square G_F$, as was shown in \cite[\S 3.2]{McKSa}.  
\begin{figure}[h!]
\begin{center}
\setlength{\unitlength}{0.9cm}
\thicklines
\begin{picture}(1.3,1.4)(-0.8,-0.8)
   \put(-7.2,0.3){\line(1,0){6.4}}
	 \put(-7.2,-0.7){\line(1,0){6.4}}
	 \put(-5,-0.7){\line(0,1){1}}
	 \put(-3,-0.7){\line(0,1){1}}
	 \put(-1,-0.7){\line(0,1){1}}
	 \put(-7,-0.7){\line(0,1){1}}
	 \put(-3,-0.7){\circle*{.2}}
	 \put(-3,0.3){\circle*{.2}}
	 \put(-5,-0.7){\circle*{.2}}
	 \put(-5,0.3){\circle*{.2}}
	 \put(1,0){\line(1,0){6}}
	 \put(1,-1){\line(1,0){6}}
	 \put(2.5,-1){\line(0,1){1}}
	 \put(4.5,-1){\line(0,1){1}}
	 \put(6.5,-1){\line(0,1){1}}
	 \put(4.5,-1){\circle*{.2}}
	 \put(4.5,0){\circle*{.2}}
	 \put(2.5,-1){\circle*{.2}}
	 \put(2.5,0){\circle*{.2}}
	 \put(6.5,-1){\circle*{.2}}
	 \put(6.5,0){\circle*{.2}}
	 \put(2.5,0){\line(1,1){0.7}}
	 \put(4.5,0){\line(1,1){0.7}}
	 \put(6.5,0){\line(1,1){0.7}}
	 \put(1.5,0.7){\line(1,0){6}}
	 \multiput(1.5,-0.3)(0.3,0){20}{\line(1,0){0.1}}
	 \put(3.2,0.7){\circle*{.2}}
	 \put(5.2,0.7){\circle*{.2}}
	 \put(7.2,0.7){\circle*{.2}}
 	 \multiput(3.2,0.7)(0,-0.2){5}{\line(0,-1){0.1}}
  \multiput(5.2,0.7)(0,-0.2){5}{\line(0,-1){0.1}}
	 \multiput(7.2,0.7)(0,-0.2){5}{\line(0,-1){0.1}}
	 \multiput(2.5,-1)(0.2,0.2){4}{\line(1,1){0.1}}
	\multiput(4.5,-1)(0.2,0.2){4}{\line(1,1){0.1}}
	\multiput(6.5,-1)(0.2,0.2){4}{\line(1,1){0.1}}
	\put(3.2,-0.3){\circle{.2}}
	\put(5.2,-0.3){\circle{.2}}
	\put(7.2,-0.3){\circle{.2}}
\end{picture}
\caption{The graphs $\Z\mathop\square P_2$ (left) and $\Z\mathop\square C_4$ (right).}\label{fig:carte}
\end{center}
\end{figure}
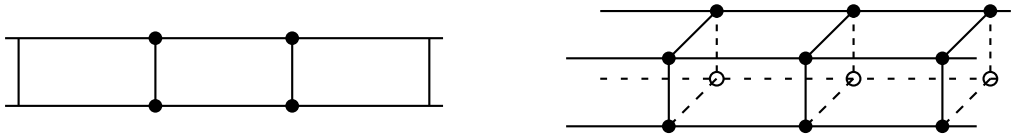

Another operation is to take the tensor product. Here, starting again from a graph $\Gamma_0$ with $\nu=1$, we consider $\Gamma = \Gamma_0\times G_F$ for finite $G_F$. However $G_F$ must now be chosen more carefully. It suffices to have $G_F$ non-bipartite with $0\notin \sigma(\cA_{G_F})$. In that case $\Gamma$ will be connected with no flat bands. See \cite[\S 3.4.2]{McKSa} for more details.

\section{Single-cell flat bands}\label{scfb}

The simplest kind of flat bands one can try to characterize are those where the corresponding eigenvector is entirely supported on a single cell, cf. \cite{FLBMD}. We may assume this cell to be $V_f$ by shifting $V_f$ if necessary. We will use a quite abusive terminology and describe these as \emph{single-cell flat bands}. The set of all possible single-cell flat bands of $\cA_\Gamma$ that can appear in connected periodic graphs $\Gamma$ with $|V_f|=\nu$ is denoted by $\cF_\nu^s$.

\begin{rem}\label{rem:choicevf}
The notion of single-cell flat band depends on the choice of $V_f$. It is a property of $(\Gamma,V_f)$, rather than $\Gamma$. There is an obvious non-uniqueness in choosing $V_f$ by shifting to $V_f+k_\fa$. But it can be more significant, as in Figure~\ref{fig:choicevf}. Note that changing $V_f$ also changes $I_{ij}$, as one can check from the figure.
\begin{figure}[h!]
\begin{center}
\setlength{\unitlength}{0.9cm}
\thicklines
\begin{picture}(1.3,1.3)(-0.8,-1.1)
   \put(-7.2,0){\line(1,0){6.4}}
	 \put(-7.2,-1){\line(1,0){6.4}}
	 \put(-7,-1){\line(2,1){2}}
	 \put(-7,0){\line(2,-1){2}}
	 \put(-5,-1){\line(2,1){2}}
	 \put(-5,0){\line(2,-1){2}}
	 \put(-3,-1){\line(2,1){2}}
	 \put(-3,0){\line(2,-1){2}}
	 \put(-3,-1){\textcolor{red}{\circle*{.2}}}
	 \put(-3,0){\circle*{.2}}
	 \put(-5,-1){\circle*{.2}}
	 \put(-5,0){\textcolor{red}{\circle*{.2}}}
	 \put(-3.1,0.2){\small{$1$}}
	 \put(-3.2,-1.5){\small{$-1$}}
	 \put(-1,-1){\line(2,1){0.3}}
	 \put(-1,0){\line(2,-1){0.3}}
	 \put(-7,-1){\line(-2,1){0.3}}
	 \put(-7,0){\line(-2,-1){0.3}}
	 \put(1,0){\line(1,0){6}}
	 \put(1,-1){\line(1,0){6}}
	 \put(1,-1){\line(2,1){2}}
	 \put(1,0){\line(2,-1){2}}
	 \put(3,-1){\line(2,1){2}}
	 \put(3,0){\line(2,-1){2}}
	 \put(5,-1){\line(2,1){2}}
	 \put(5,0){\line(2,-1){2}}
	 \put(5,-1){\textcolor{red}{\circle*{.2}}}
	 \put(5,0){\textcolor{red}{\circle*{.2}}}
	 \put(3,-1){\circle*{.2}}
	 \put(3,0){\circle*{.2}}
	 \put(4.9,0.2){\small{$1$}}
	 \put(4.8,-1.5){\small{$-1$}}
\end{picture}
\caption{Two choices for $V_f$ are shown in red. The ``obvious'' choice (right) has a single cell flat band. The one on the left does not: any eigenvector of $\Gamma$ will live on more than one copy of $V_f$.}\label{fig:choicevf}
\end{center}
\end{figure}
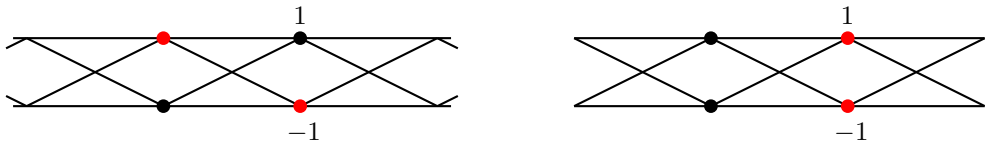
Despite this ``non-canonicity'', this notion is very useful to obtain some results, as we will see in this section.

We note that one can always choose $V_f$ to be a tree (in particular, connected). The argument can be found in \cite[Lem. 3.1]{KorSa2}. For example, in this convention, one chooses the $V_f$ on the left of Figure~\ref{fig:choicevf}. In general, the ``natural'' $V_f$ will not be a tree; one obtains a tree by removing edges, passing to a spanning tree. We will not follow this convention here: our $V_f$ is arbitrary, and can even be disconnected as in Figure~\ref{fig:choicevf} (right).
\end{rem}

Motivated by Figure~\ref{fig:boxnover}, we study the following neighborhood condition.

\begin{lem}\label{lem:equi2}
For any $\nu$ and $d$, we have
\[
\cN_{v_i}\setminus \{v_j\} = \cN_{v_j}\setminus \{v_i\} \iff I_{ii}=I_{jj}=I_{ji}\setminus\{0\}=I_{ij}\setminus\{0\} \ \text{and}\ I_{ir}=I_{jr} \ \forall r\neq i,j.
\]
This condition also implies that $\deg(v_i)=\deg(v_j)$.
\end{lem}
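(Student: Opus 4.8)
The plan is to use the fact that the translated motifs partition the vertex set, so that neighborhoods can be split into ``columns'' indexed by the copy they land in and compared column-by-column. Concretely, because the translated motifs do not intersect and span $\Gamma$, the map $(p,k)\mapsto v_p+k_\fa$ is a bijection onto $V$, giving the partition $V=\bigsqcup_{p=1}^\nu(v_p+\Z_\fa^d)$. Writing $\cN_{v_i}=\bigcup_r\cI_{ir}$ with $\cI_{ir}=\{v_r+k_\fa:k\in I_{ir}\}$, I would first record
\[
\cN_{v_i}=\bigsqcup_{r=1}^\nu\{v_r+k_\fa:k\in I_{ir}\}\,,
\]
and likewise for $\cN_{v_j}$. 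The single deleted vertex $v_j$ sits in the column $r=j$ at $k=0$ only, so removing it alters that one column:
\[
\cN_{v_i}\setminus\{v_j\}=\{v_i+k_\fa:k\in I_{ii}\}\ \sqcup\ \{v_j+k_\fa:k\in I_{ij}\setminus\{0\}\}\ \sqcup\ \bigsqcup_{r\neq i,j}\{v_r+k_\fa:k\in I_{ir}\}\,,
\]
and symmetrically for $\cN_{v_j}\setminus\{v_i\}$ with $i,j$ interchanged.

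Next, since two subsets of $V$ coincide iff their intersections with each column $v_r+\Z_\fa^d$ coincide, I would equate the two sets column-by-column and cancel the bijection $k\mapsto v_r+k_\fa$. The column $r=i$ yields $I_{ii}=I_{ji}\setminus\{0\}$, the column $r=j$ yields $I_{ij}\setminus\{0\}=I_{jj}$, and each column $r\neq i,j$ yields $I_{ir}=I_{jr}$. Run backwards, this same computation gives the $\Leftarrow$ direction with no extra input.

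The step I expect to be the main obstacle is upgrading the two ``mixed'' identities $I_{ii}=I_{ji}\setminus\{0\}$ and $I_{jj}=I_{ij}\setminus\{0\}$ to the full four-way equality asserted in the statement, since in general $I_{ij}\neq I_{ji}$. Here I would invoke the symmetries \eqref{e:indicessym}, namely $I_{jj}=-I_{jj}$ and $I_{ji}=-I_{ij}$. From $I_{jj}=I_{ij}\setminus\{0\}$ together with $I_{jj}=-I_{jj}$, the set $I_{ij}\setminus\{0\}$ is invariant under negation, hence equals $-(I_{ij}\setminus\{0\})=I_{ji}\setminus\{0\}$; combined with $I_{ii}=I_{ji}\setminus\{0\}$ this gives $I_{ii}=I_{jj}=I_{ij}\setminus\{0\}=I_{ji}\setminus\{0\}$, as required.

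Finally, for the degree equality I would argue directly rather than via the index sets. The hypothesis gives $|\cN_{v_i}\setminus\{v_j\}|=|\cN_{v_j}\setminus\{v_i\}|$, and since $\Gamma$ is undirected we have $v_j\in\cN_{v_i}\iff v_i\in\cN_{v_j}$, so the two deleted vertices are present or absent simultaneously. Therefore $|\cN_{v_i}|=|\cN_{v_j}|$, i.e.\ $\deg(v_i)=\deg(v_j)$.
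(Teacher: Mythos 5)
Your proposal is correct and follows essentially the same route as the paper: decompose each neighborhood into the sets $\cI_{ir}$ (which live in disjoint ``columns'' $v_r+\Z_\fa^d$), compare column-by-column to get $I_{ii}=I_{ji}\setminus\{0\}$, $I_{jj}=I_{ij}\setminus\{0\}$ and $I_{ir}=I_{jr}$ for $r\neq i,j$, then invoke the symmetries \eqref{e:indicessym} to upgrade these to the four-way equality. Your explicit handling of that last upgrade, and the direct counting argument for $\deg(v_i)=\deg(v_j)$, fill in details the paper leaves implicit but introduce nothing different in substance.
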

\begin{proof}
Since $\cN_{v_i}\setminus \{v_j\} = (\cI_{ij}\setminus \{v_j\})\cup (\cup_{k\neq j} \cI_{ik})$, with $\cI_{rs}$ of the form $\{v_s+n_\fa\}$, we see that equality holds iff $\cI_{ij}\setminus \{v_j\} = \cI_{jj}$, $\cI_{ji}\setminus \{v_i\} = \cI_{ii}$ and $\cI_{ik}=\cI_{jk}$ for $k\neq i,j$. This holds iff the statements holds for the $I_{rs}$, and we use \eqref{e:indicessym} to conclude.
\end{proof}

This simple geometric condition always generates the single-cell flat band $0$ or $-1$:

\begin{prp}\label{prp:neighgenflat}
For any $\nu$ and $d$, if $\cN_{v_i}\setminus\{v_j\}=\cN_{v_j}\setminus\{v_i\}$ for some $i\neq j$, then $\cA_\Gamma$ has a single-cell flat band and $I_{ij}$ is symmetric ($-I_{ij}=I_{ij}$).

This flat band is $\lambda=0$ if $v_i\not\sim v_j$ and $-1$ if $v_i\sim v_j$. The corresponding eigenvector localized on $V_f$ can be chosen to have values $+1$ on $v_i$, $-1$ on $v_j$ and zero elsewhere.
\end{prp}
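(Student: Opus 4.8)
The plan is to exhibit the constant vector $f = e_i - e_j \in \C^\nu$ (standard basis vectors) as a $\theta$-independent eigenvector of the Floquet matrix $A(\theta)$; by the ``in particular'' clause of Proposition~\ref{prp:trigiffcomp}, this immediately produces the single-cell eigenvector $\psi$ with $\psi_i(0)=1$, $\psi_j(0)=-1$ and all other entries zero, proving both the localization claim and the prescribed form of the eigenvector. So the whole statement reduces to a direct verification that $A(\theta)(e_i-e_j)=\lambda(e_i-e_j)$ for all $\theta$, with $\lambda$ as claimed.

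First I would translate the hypothesis into index-set language using Lemma~\ref{lem:equi2}: the condition $\cN_{v_i}\setminus\{v_j\}=\cN_{v_j}\setminus\{v_i\}$ is equivalent to $I_{ii}=I_{jj}=I_{ji}\setminus\{0\}=I_{ij}\setminus\{0\}$ together with $I_{ir}=I_{jr}$ for all $r\neq i,j$. The symmetry claim $-I_{ij}=I_{ij}$ then follows at once: since $I_{ji}=-I_{ij}$ by \eqref{e:indicessym}, the equality $I_{ji}\setminus\{0\}=I_{ij}\setminus\{0\}$ reads $(-I_{ij})\setminus\{0\}=I_{ij}\setminus\{0\}$, and because $0\in I_{ij}\iff 0\in -I_{ij}$, adjoining (or not) the common element $0$ to both sides gives $-I_{ij}=I_{ij}$.

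Next I would compute $A(\theta)(e_i-e_j)$ coordinate by coordinate, recalling $h_{rs}(\theta)=\sum_{k\in I_{rs}}\ee^{2\pi\ii\theta\cdot k}$, so that the $r$-th coordinate equals $h_{ri}(\theta)-h_{rj}(\theta)$. For $r\neq i,j$, from $I_{ir}=I_{jr}$ and $I_{ri}=-I_{ir}$ I get $I_{ri}=I_{rj}$, hence $h_{ri}=h_{rj}$ and this coordinate vanishes, matching $(e_i-e_j)_r=0$. For the $i$-th and $j$-th coordinates I would split on whether $v_i\sim v_j$, i.e. whether $0\in I_{ij}$. If $v_i\sim v_j$, then $I_{ij}=I_{ii}\cup\{0\}$ disjointly (as $0\notin I_{ii}$), so $h_{ij}=h_{ii}+1$ and the $i$-th coordinate is $h_{ii}-h_{ij}=-1$; symmetrically $h_{ji}=h_{jj}+1$ gives $j$-th coordinate $h_{ji}-h_{jj}=+1$, whence $A(\theta)(e_i-e_j)=-1\cdot(e_i-e_j)$ and $\lambda=-1$. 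If $v_i\not\sim v_j$, then $I_{ij}=I_{ii}$ and $I_{ji}=I_{jj}$, so both coordinates vanish and $\lambda=0$.

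This argument is essentially a bookkeeping exercise, so I do not expect a serious obstacle; the only point requiring care is keeping the index symmetries $I_{ji}=-I_{ij}$ and $I_{ri}=-I_{ir}$ straight, and correctly tracking the single element $0$ when splitting on whether $v_i\sim v_j$ — this is exactly what separates the $\lambda=0$ and $\lambda=-1$ cases. Since the eigenvector $e_i-e_j$ is $\theta$-independent by construction, Proposition~\ref{prp:trigiffcomp} (equivalently Lemma~\ref{lem:flatchar}) confirms that $\lambda$ is a genuine flat band and not merely an eigenvalue of $A(\theta)$ at isolated $\theta$.
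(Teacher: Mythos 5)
Your proposal is correct and follows essentially the same route as the paper: both invoke Lemma~\ref{lem:equi2} to translate the neighborhood condition into the index sets, verify that $e_i-e_j$ is a $\theta$-independent eigenvector of $A(\theta)$ with eigenvalue $-\epsilon$ (the paper just relabels $i=1,j=2$ and displays the resulting block matrix rather than computing coordinate-by-coordinate), and deduce the symmetry of $I_{ij}$ from $I_{ij}\setminus\{0\}=I_{ii}$. No gaps.
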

\begin{proof}
Up to rearranging the indices, we may assume $i=1$ and $j=2$. Then \eqref{e:hthe} takes the form $A(\theta) = \begin{pmatrix} h_{11}(\theta) & h_{11}(\theta)+\epsilon& g_{\theta}\\ h_{11}(\theta)+\epsilon& h_{11}(\theta)& g_{\theta}\\ g_\theta^\ast& g_\theta^\ast &H'(\theta)\end{pmatrix}$ by Lemma~\ref{lem:equi2} for $g_{\theta}=(h_{13}(\theta),\dots,h_{1\nu}(\theta))$ and some $(\nu-2)\times (\nu-2)$ matrix $H'(\theta)$. Here $\epsilon=0$ if $v_i\not\sim v_j$ and $\epsilon=1$ if $v_i\sim v_j$.

Clearly, if $f = (1,-1,0,\dots,0)^\intercal$, then $H(\theta) f = -\epsilon f$. As this holds for all $\theta$, with $f$ independent of $\theta$, we have shown that $\Gamma$ has a single-cell flat band $-\epsilon$.

The set $I_{ij}$ is symmetric since $I_{ij}\setminus \{0\}=I_{ii}$ is symmetric.
\end{proof}

Figure~\ref{fig:boxnover} and Figure~\ref{fig:nonsymvar} illustrate Proposition~\ref{prp:neighgenflat}.

\begin{figure}[h!]
\begin{center}
\setlength{\unitlength}{0.8cm}
\thicklines
\begin{picture}(2,2)(-1,-2)
	 \put(-7,-1){\line(1,0){6}}
	 \put(-6,-1){\line(0,1){1}}
	 \put(-4,-1){\line(0,1){1}}
	 \put(-2,-1){\line(0,1){1}}
	 \put(-6,-2){\line(0,1){1}}
	 \put(-4,-2){\line(0,1){1}}
	 \put(-2,-2){\line(0,1){1}}
	 \put(-4,-1){\circle*{.2}}
	 \put(-4,0){\circle*{.2}}
	 \put(-6,-1){\circle*{.2}}
	 \put(-6,0){\circle*{.2}}
	 \put(-2,-1){\circle*{.2}}
 	 \put(-6,-2){\circle*{.2}}
	 \put(-4,-2){\circle*{.2}}
	 \put(-2,-2){\circle*{.2}}
	 \put(-2,0){\circle*{.2}}
	 \put(-3.8,-0.1){\small{$-1$}}
	 \put(-3.8,-0.9){\small{$0$}}
	 \put(-3.8,-2){\small{$1$}}
	 \put(1,-1){\line(1,0){6}}
	 \put(2,-1){\line(0,1){1}}
	 \put(4,-1){\line(0,1){1}}
	 \put(6,-1){\line(0,1){1}}
	 \put(2,-2){\line(0,1){1}}
	 \put(4,-2){\line(0,1){1}}
	 \put(6,-2){\line(0,1){1}}
	 \put(4,-1){\circle*{.2}}
	 \put(4,0){\circle*{.2}}
	 \put(2,-1){\circle*{.2}}
	 \put(2,0){\circle*{.2}}
	 \put(6,-1){\circle*{.2}}
 	 \put(2,-2){\circle*{.2}}
	 \put(4,-2){\circle*{.2}}
	 \put(6,-2){\circle*{.2}}
	 \put(6,0){\circle*{.2}}
	 \put(4.2,-0.1){\small{$-1$}}
	 \put(4.2,-0.7){\small{$0$}}
	 \put(4.2,-2){\small{$1$}}
	 \put(4,0){\line(-2,-1){2}}
	 \put(4,-2){\line(-2,1){2}}
	 \put(6,0){\line(-2,-1){2}}
	 \put(6,-2){\line(-2,1){2}}
	 \put(2,0){\line(-2,-1){1}}
	 \put(2,-2){\line(-2,1){1}}
	 \put(6,-1){\line(2,1){1}}
	 \put(6,-1){\line(2,-1){1}}
\end{picture}
\caption{Graphs with flat band $\lambda=0$, with all $I_{rs}$ symmetric (left) and $I_{13},I_{23}$ non-symmetric (right).}\label{fig:nonsymvar}
\end{center}
\end{figure}
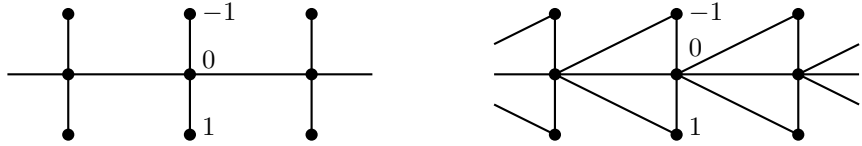

\begin{prp}\label{prp:nu2single}
For $\nu=2$, $\cA_\Gamma$ has a single-cell flat band iff $\cN_{v_1}\setminus \{v_2\} = \cN_{v_2}\setminus\{v_1\}$. Moreover, $\cF_2^s=\{0,-1\}$. 
\end{prp}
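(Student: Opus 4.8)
The plan is to prove the two assertions of Proposition~\ref{prp:nu2single} in sequence, exploiting that for $\nu=2$ the Floquet matrix is just a $2\times 2$ Hermitian matrix
\[
A(\theta)=\begin{pmatrix} h_{11}(\theta) & h_{12}(\theta)\\ \overline{h_{12}(\theta)} & h_{22}(\theta)\end{pmatrix}.
\]
By Proposition~\ref{prp:trigiffcomp}, a single-cell flat band exists iff $A(\theta)$ has an eigenvector $f=(a,b)^\intercal\in\C^2$ that is \emph{independent of $\theta$}. The backward implication ($\Leftarrow$) is immediate: if $\cN_{v_1}\setminus\{v_2\}=\cN_{v_2}\setminus\{v_1\}$, then Proposition~\ref{prp:neighgenflat} already produces the constant eigenvector $(1,-1)^\intercal$ with flat band in $\{0,-1\}$. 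So the content is the forward direction ($\Rightarrow$).

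For $\Rightarrow$, suppose $A(\theta)f=\lambda f$ for all $\theta$ with $f=(a,b)^\intercal$ fixed and $\lambda\in\R$ (the flat band is real, as the spectrum of a Hermitian matrix). First I would rule out the degenerate case $b=0$: if $f=(a,0)^\intercal$ with $a\neq 0$, the second row gives $\overline{h_{12}(\theta)}\,a=0$ for all $\theta$, forcing $h_{12}\equiv 0$, i.e.\ $I_{12}=\varnothing$; but then $\Gamma$ decouples into two disjoint pieces indexed by $v_1$ and $v_2$, contradicting connectedness. By symmetry $a\neq 0$ too, so after normalizing we may write $f=(1,c)^\intercal$ with $c\neq 0$. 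The two scalar equations read
\[
h_{11}(\theta)+c\,h_{12}(\theta)=\lambda,\qquad \overline{h_{12}(\theta)}+c\,h_{22}(\theta)=\lambda c,
\]
and these must hold \emph{identically in $\theta$}. Now I compare Fourier coefficients (coefficients of $z^k=\ee^{2\pi\ii\theta\cdot k}$). Recall $h_{ii}(\theta)=\sum_{k\in I_{ii}}\ee^{2\pi\ii\theta\cdot k}$ (all coefficients $1$, with $0\notin I_{ii}$) and $h_{12}(\theta)=\sum_{k\in I_{12}}\ee^{2\pi\ii\theta\cdot k}$, $\overline{h_{12}(\theta)}=\sum_{k\in I_{12}}\ee^{-2\pi\ii\theta\cdot k}$ (coefficients $0/1$).

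The heart of the argument is to extract rigidity from identical vanishing of all nonconstant Fourier modes. From the first equation, the constant term gives $\lambda=0$ if $0\notin I_{12}$ and $\lambda=c$ if $0\in I_{12}$; matching the remaining modes gives $\one_{I_{11}}(k)+c\,\one_{I_{12}}(k)=0$ for every $k\neq 0$, where $\one_S$ is the indicator. Since the indicator values are $0/1$, a nonzero relation $1+c\cdot 0=0$ or $1+c\cdot 1=0$ or $0+c\cdot 1=0$ is only possible for specific $c$; I expect to deduce $c=-1$ together with $I_{11}=I_{12}\setminus\{0\}$. Running the same coefficient comparison on the second equation (using $\overline{h_{12}}$, whose support is $-I_{12}$, and $h_{22}$) yields $I_{22}=I_{21}\setminus\{0\}=(-I_{12})\setminus\{0\}$ and forces the symmetry $I_{12}=-I_{12}$, after which $I_{11}=I_{22}=I_{12}\setminus\{0\}=I_{21}\setminus\{0\}$. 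By Lemma~\ref{lem:equi2} this is exactly the neighborhood condition $\cN_{v_1}\setminus\{v_2\}=\cN_{v_2}\setminus\{v_1\}$, and the two cases $0\in I_{12}$ vs.\ $0\notin I_{12}$ (equivalently $v_1\sim v_2$ or not) give $\lambda\in\{-1,0\}$; hence $\cF_2^s\subseteq\{0,-1\}$. Combined with the examples in Figure~\ref{fig:boxnover}, which realize both values, this yields $\cF_2^s=\{0,-1\}$.

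The main obstacle is the bookkeeping in the coefficient-matching step: I must be careful that $c$ is a priori an arbitrary complex number, so I want to first argue $c\in\R$ (taking complex conjugates of the two equations and using $h_{ii}$ real-valued and $\overline{h_{12}(\theta)}=h_{12}(-\theta)$ shows the conjugate eigenvector also works, and simplicity considerations pin $c$ down), and then show the only way $0/1$-valued indicators can satisfy the linear relation for \emph{every} $k$ is with $c=-1$ and matched supports. The case analysis according to whether $0\in I_{12}$ must be handled uniformly so that both the value of $\lambda$ and the support identities come out correctly; this is the one place where I would slow down and verify each Fourier mode explicitly rather than argue abstractly.
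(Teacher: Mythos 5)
Your proposal is correct and follows essentially the same route as the paper: use Proposition~\ref{prp:trigiffcomp} to get a $\theta$-independent eigenvector, turn the two rows of the eigenvalue equation into Laurent-polynomial identities, match the $z^0$ and $z^k$ coefficients to force $b=-a$, $\lambda=-\epsilon$ and $I_{11}=I_{22}=I_{12}\setminus\{0\}$, and conclude via Lemma~\ref{lem:equi2}. Just make sure to dispose explicitly of the case $I_{12}\subseteq\{0\}$ (where the coefficient matching alone does not pin down $c$) by noting it forces $I_{11}=I_{22}=\emptyset$ and hence a disconnected $\Gamma$, exactly as the paper does.
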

\begin{proof}
We already showed the converse in Proposition~\ref{prp:neighgenflat}. So suppose that $\Gamma$ has an eigenvector supported in a single cell. By Proposition~\ref{prp:trigiffcomp}, this means there are some $a,b$ independent of $\theta$ such that $\begin{pmatrix} h_{11}(\theta)&h_{12}(\theta)\\\overline{h_{12}(\theta)}&h_{22}(\theta)\end{pmatrix}\begin{pmatrix}a\\b\end{pmatrix}=\lambda\begin{pmatrix}a\\b\end{pmatrix}$. Let $z_j=\ee^{2\pi\ii\theta_j}$. Then as Laurent polynomials,
\begin{equation}\label{e:2eqs}
 a\sum_{k\in I_{11}} z^k+ b\sum_{p\in I_{12}} z^p = a\lambda \qquad \text{and} \qquad a\sum_{p\in I_{12}} z^{-p} +b\sum_{k'\in I_{22}} z^{k'} = b\lambda \,.
\end{equation}

Recall that $0\notin I_{ii}$. Equating the coefficients of $z^0$ on both sides of each equation, we deduce that $b\epsilon=a\lambda$, $a\epsilon=b\lambda$, where $\epsilon=0$ if $v_1\not\sim v_2$ and $\epsilon=1$ if $v_1\sim v_2$.

If $I_{12}\subseteq\{0\}$, then \eqref{e:2eqs} implies $I_{11}=I_{22}=\emptyset$. So $A(\theta)=\begin{pmatrix} 0&\epsilon\\ \epsilon&0\end{pmatrix}$ induces a disconnected $\Gamma$. Hence, $I_{12}$ must contain nonzero terms.

Then \eqref{e:2eqs} implies ($I_{12}\setminus\{0\}=I_{11}$ and $b=-a$) and $(I_{12}\setminus \{0\} = -I_{22}$), respectively, by looking at the coefficients of $z^p$. This implies $\lambda=-\epsilon$, and $I_{11}=I_{22}=I_{12}\setminus\{0\}$, where we used \eqref{e:indicessym}. Hence, $\cN_{v_1}\setminus \{v_2\} = \cN_{v_2}\setminus\{v_1\}$ by Lemma~\ref{lem:equi2}.
\end{proof}

We next give a simple but useful observation.

\begin{lem}\label{lem:gensincell}
Fix $(\Gamma,V_f)$. If $\lambda$ is a single-cell flat band for $\cA_\Gamma$, then $\lambda\in \sigma(\cA_{V_f})$. If $V_f$ is connected, then its top eigenvalue cannot be a single-cell flat band for $\Gamma$.
\end{lem}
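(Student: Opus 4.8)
The plan is to exploit the fact, recorded in Proposition~\ref{prp:trigiffcomp}, that a single-cell flat band $\lambda$ corresponds to a nonzero \emph{constant} vector $f$ satisfying $A(\theta)f=\lambda f$ for every $\theta\in\T_\ast^d$. The whole argument rests on isolating the Fourier modes of this identity. Write $A(\theta)=\sum_{k\in\Z^d}M_k\,\ee^{2\pi\ii\theta\cdot k}$, where $M_k$ is the $0/1$ matrix with $(M_k)_{ij}=1$ if $k\in I_{ij}$ and $0$ otherwise. Since $0\notin I_{ii}$, while $0\in I_{ij}$ precisely when $v_i\sim v_j$ inside a single cell, the constant mode is exactly the adjacency matrix of the induced subgraph, $M_0=\cA_{V_f}$.

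For the first claim I would simply integrate $A(\theta)f=\lambda f$ over the torus. Using $\int_{\T_\ast^d}\ee^{2\pi\ii\theta\cdot k}\,\dd\theta=\delta_{k,0}$ gives $\int_{\T_\ast^d}A(\theta)\,\dd\theta=M_0=\cA_{V_f}$, so integration of the eigenvalue relation yields $\cA_{V_f}f=\lambda f$. Hence $f$ is an eigenvector of $\cA_{V_f}$ and $\lambda\in\sigma(\cA_{V_f})$.

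For the second claim, suppose $V_f$ is connected and $\lambda$ is its top eigenvalue. Because $V_f$ is connected, $\cA_{V_f}$ is an irreducible nonnegative symmetric matrix, so by Perron--Frobenius its top eigenvalue is simple with a strictly positive eigenvector $\phi$. By the first part $f$ lies in the $\lambda$-eigenspace of $\cA_{V_f}$, which is one-dimensional, so $f$ is a scalar multiple of $\phi$ and we may take $f=\phi>0$. Next I would read off the individual Fourier coefficients of $A(\theta)f=\lambda f$: since $A(\theta)f=\sum_k(M_kf)\,\ee^{2\pi\ii\theta\cdot k}$ is constant in $\theta$, uniqueness of Fourier coefficients forces $M_kf=0$ for every $k\neq 0$ (and recovers $M_0f=\lambda f$).

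The crux is then a positivity argument: each $M_k$ has nonnegative entries and $f>0$, so $(M_kf)_i=\sum_j(M_k)_{ij}f_j=0$ with all summands nonnegative forces $(M_k)_{ij}=0$ for all $i,j$, i.e. $M_k=0$. Thus $I_{ij}\subseteq\{0\}$ for all $i,j$, meaning there is no hopping between distinct cells; the graph $\Gamma$ then decomposes into disjoint copies of $V_f$, contradicting the connectedness of $\Gamma$. The only genuinely delicate step is this final positivity deduction, and it is precisely here that the connectedness of $V_f$ is essential: it is what lets Perron--Frobenius produce a strictly positive eigenvector. Without it, the top eigenvector could have vanishing entries, the implication $M_kf=0\Rightarrow M_k=0$ would break down, and indeed single-cell flat bands at sub-top eigenvalues do occur (as in the generators of Section~\ref{sec:gens}).
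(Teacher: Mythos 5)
Your proof is correct and follows essentially the same route as the paper's: the first claim is the paper's direct evaluation of $(\cA_\Gamma\psi)_i(0)$ rewritten on the Floquet side (extracting the zero Fourier mode $M_0=\cA_{V_f}$ of $A(\theta)f=\lambda f$), and the second claim is the same Perron--Frobenius positivity argument the paper invokes from Theorem~\ref{thm:notflat}, Step~2, with your conclusion $M_k=0$ for $k\neq 0$ being the Fourier-dual of the paper's observation that a vertex outside the cell adjacent to it would violate the eigenvalue equation. Both steps check out, including the use of connectedness of $V_f$ only to guarantee a strictly positive top eigenvector and of connectedness of $\Gamma$ to rule out $I_{ij}\subseteq\{0\}$ for all $i,j$.
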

\begin{proof}
If $\psi$ is an eigenvector entirely supported in $V_f$, then
\begin{equation}\label{e:vfeigen}
\lambda \psi_i(0) = (\cA_\Gamma \psi)_i(0) = \sum_{j=1}^\nu \sum_{k\in I_{ij}} \psi_j(k) = \sum_{j=1}^\nu \epsilon_{ij} \psi_j(0),
\end{equation}
because $\psi_j(k)=0$ for $k\neq 0$, and we denoted $\epsilon_{ij}=1$ if $v_i\sim v_j$ and $0$ otherwise. This means $\psi(0)$ is an eigenvector of the matrix $(\epsilon_{ij})_{i,j=1}^\nu=\cA_{V_f}$.

The second part is a Perron-Frobenius argument as in Theorem~\ref{thm:notflat}, Step 2.
\end{proof}

Before we move on to $\nu>2$, it will be helpful to have more criteria to eliminate eigenvalues of finite graphs instead of studying case by case. We have the following general result, which strengthens Lemma~\ref{lem:gensincell}.

\begin{thm}\label{thm:singlecellcriter}
Let $|V_f|=\nu$. If $\lambda$ is a single-cell flat band for $\cA_\Gamma$, with eigenvector $\psi$, then $\lambda\in \sigma(\cA_{V_f})$, and there exist $\delta_i\in \{0,1\}$, $i=1,\dots,\nu$, not all zero, such that $\sum_{i=1}^\nu \delta_i \psi_i=0$.

Conversely, if $\lambda\in \sigma(\cA_{V_f})$ and there exist $\delta_i\in \{0,1\}$, $i=1,\dots,\nu$, not all zero such that $\sum_{i=1}^\nu \delta_i \psi_i=0$, then $\lambda\in \cF_\nu^s$, as long as $V_f$ is connected.
\end{thm}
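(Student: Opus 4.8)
The plan is to read the eigenvalue equation of $\cA_\Gamma$ not only on the base cell but on every translated cell, and to extract from the shifted equations a $0/1$-coefficient linear relation among the entries of $\psi$. For the forward direction, that $\lambda\in\sigma(\cA_{V_f})$ and that $\psi(0)$ is an eigenvector of $\cA_{V_f}$ is exactly Lemma~\ref{lem:gensincell} (equation \eqref{e:vfeigen}). For the new relation I would write $(\cA_\Gamma\psi)_i(r)=\lambda\psi_i(r)$ at a cell $r\neq 0$. Since $\psi$ is supported on the base cell, $\psi_j(r+k)$ equals $\psi_j(0)$ precisely when $k=-r$ and vanishes otherwise, while $\lambda\psi_i(r)=0$. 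Hence for every $i$ and every $r\neq 0$,
\[
\sum_{j:\,-r\in I_{ij}}\psi_j(0)=0.
\]
Setting $\delta_j=\mathbf{1}[-r\in I_{ij}]\in\{0,1\}$ yields exactly $\sum_{j=1}^\nu\delta_j\psi_j=0$. To guarantee the $\delta_j$ are not all zero, I use that a connected infinite periodic $\Gamma$ must carry an inter-cell edge, i.e. some $m\neq 0$ lies in $I_{ij}$; choosing that $i$ and $r=-m$ forces $\delta_j=1$.

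For the converse, the data are $V_f$ (connected, with adjacency $\cA_{V_f}$), the eigenpair $(\lambda,\psi)$ of $\cA_{V_f}$, and a nonzero $\delta\in\{0,1\}^\nu$ with $\sum_i\delta_i\psi_i=0$. I would build $\Gamma$ over $\Z$, keeping the edges of $V_f$ inside each cell and adding, between consecutive cells, the inter-cell hopping encoded by the symmetric $0/1$ matrix $T=\delta\delta^\intercal$: declare $\pm 1\in I_{ij}$ exactly when $\delta_i=\delta_j=1$. This respects the symmetry $I_{ji}=-I_{ij}$ of \eqref{e:indicessym} and never inserts $0$ into $I_{ii}$. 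Writing $S=\{i:\delta_i=1\}\neq\emptyset$, the eigenvalue equation at the base cell collapses to $\cA_{V_f}\psi=\lambda\psi$ (the terms $\psi_j(\pm1)$ vanish), while at a cell $r=\pm1$ the only surviving contribution is $\mathbf{1}[i\in S]\sum_{j\in S}\psi_j(0)=0$, and for $|r|\ge 2$ there is nothing to cancel. Thus the zero-extension of $\psi$ is a compactly supported eigenvector of $\cA_\Gamma$ localized in $V_f$ (cf. Proposition~\ref{prp:trigiffcomp}), so $\lambda\in\cF_\nu^s$. Connectedness holds because $V_f$ is connected inside each cell and the $S$-vertices, present since $S\neq\emptyset$, link each cell to the next.

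The delicate point is the converse: one must add enough inter-cell edges to keep $\Gamma$ connected while arranging that every ``leakage'' term in $(\cA_\Gamma\psi)_i(r)$ at $r\neq 0$ cancels. The rank-one choice $T=\delta\delta^\intercal$ is what makes both occur at once: the factorization $T\psi=\delta(\delta^\intercal\psi)=0$ forces the cancellation, the symmetry of $T$ disposes of $r=+1$ and $r=-1$ simultaneously, and $\delta\neq 0$ secures connectivity. I would also verify the two routine safeguards, that this hopping introduces no self-loop (only $\pm1$, never $0$, enters $I_{ii}$) and that the prescribed within-cell adjacency $\cA_{V_f}$ is left untouched, so that $\lambda$ and $\psi$ remain an eigenpair of the single-cell block.
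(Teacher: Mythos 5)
Your proof is correct and follows essentially the same route as the paper: the paper phrases the forward direction as $B(z)\psi=0$ in the Floquet picture and extracts the coefficient of $z^k$, which is exactly your evaluation of the eigenvalue equation at the cell $r=-k$, and your converse construction with the rank-one bridge matrix $T=\delta\delta^\intercal$ on $\pm 1_\fa$ is literally the paper's $B(z)=(z+z^{-1})(\delta_i\delta_j)$, including the same connectivity argument via the nonempty set $S$.
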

The connectedness of $V_f$ is not necessary, a weaker assumption suffices, and we will need this in Corollary~\ref{cor:increaflat}. The connectedness condition cannot be completely dropped however: there \emph{are} $\lambda\in \sigma(\cA_{V_f})$ that satisfy $\sum_{i=1}^\nu \delta_i \psi_i=0$ without being flat (take e.g. $V_f=G_4$ in Example~\ref{exa:singlenu3} below, $\lambda=1$, $\delta=(0,0,1)$).

The condition $\sum_{i=1}^\nu \delta_i\psi_i=0$ greatly generalizes some old procedures of constructing flat bands by starting from a finite connected graph $G_F$ and an eigenfunction which has a zero in $G_F$. It is clear in that case that one can periodize $G_F$ along this vanishing vertex and obtain a flat band by extending $\psi$ by zero. This corresponds to the very special case where $\delta_i=0$ for all $i$ except the vanishing vertex $i_0$, where $\delta_{i_0}=1$.
\begin{proof}
Suppose $\lambda$ is a single-cell flat band with eigenvector $\psi$. We saw in Lemma~\ref{lem:gensincell} that $\cA_{V_f}\psi=\lambda\psi$. It follows from Proposition~\ref{prp:trigiffcomp} that $\psi$ is an eigenvector of $A(\theta)$ for all $\theta$. Let $z_j = \ee^{2\pi\ii \theta_j}$. We write $A(z) = \cA_{V_f} + B(z)$, where $B(z)$ is the matrix of bridges from $V_f$ to its neighboring copies (i.e. in \eqref{e:hthe}, $h_{ij}(z) = \epsilon_{ij}+\sum_{k\in I_{ij}\setminus \{0\}} z^k$, the first term is the entry in $\cA_{V_f}$, the second in $B(z)$). 

Now $\cA(z)\psi=\lambda\psi$ for all $z$, so $(\cA_{V_f}+B(z))\psi=\lambda\psi$ for all $z$. Since $\cA_{V_f}\psi = \lambda\psi$, it follows that $B(z)\psi=0$ for all $z$.

Denote the entries of $B(z)$ by $b_{ij}(z)$. Then for all $i$, $\sum_{j=1}^\nu b_{ij}(z)\psi_j=0$. If $B(z)=0$, then $A(z)=\cA_{V_f}$ and $\Gamma$ is a disconnected direct sum of copies of $V_f$. So there exist $i,r$ such that $b_{ir}(z)\neq 0$. Say $0\neq k\in I_{ir}$. Fixing $i,k$, let $\delta_j$ be the coefficient of $z^k$ in $b_{ij}(z)$. Then $\delta_r=1$. On the other hand, since $\sum_{j=1}^\nu b_{ij}(z)\psi_j=0$, then as a Laurent polynomial, the coefficient of $z^k$ must vanish, hence $\sum_{j=1}^\nu \delta_j \psi_j=0$. This proves the first part.

Conversely, suppose that we found $\delta_j\in \{0,1\}$ not all zero such that $\sum_{j=1}^\nu \delta_j \psi_j =0$. Consider the matrix $B  = (\delta_i\delta_j)$ and define $B(z) = (z+z^{-1}) B$. Here $d=1$, i.e. we construct $\Gamma$ to be $\Z$-periodic (this can be generalized to any $d$, but is not needed). Then $B(z)\psi =0$, and $A(\theta) = \cA_{V_f}+B(\theta)$ induces a connected $\Gamma$, because $V_f$ is connected, and $B(\theta)$ gives bridges to $(V_f \pm 1_\fa)$ by our choice of taking the weight $z+z^{-1}=2\cos 2\pi\theta$ at each nonzero entry of $B$, which imply connection to nearest neighbors.
\end{proof}


\begin{cor}\label{cor:increaflat}
The sets $\cF_\nu^s$ are increasing: $\cF_\nu^s\subseteq \cF_{\nu+1}^s$ for all $\nu$.
\end{cor}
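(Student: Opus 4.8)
The plan is to reduce to the combinatorial characterization of single-cell flat bands furnished by Theorem~\ref{thm:singlecellcriter}, extract the vanishing data it provides, and then enlarge the motif by a single vertex wired so as to absorb exactly that vanishing relation. Concretely, suppose $\lambda\in\cF_\nu^s$. By definition there is a connected periodic $\Gamma$ with $|V_f|=\nu$ having $\lambda$ as a single-cell flat band, with eigenvector $\psi$ supported in $V_f$. The forward direction of Theorem~\ref{thm:singlecellcriter} then supplies $\lambda\in\sigma(\cA_{V_f})$ together with coefficients $\delta_1,\dots,\delta_\nu\in\{0,1\}$, not all zero, such that $\sum_{i=1}^\nu \delta_i\psi_i=0$. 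By Proposition~\ref{prp:trigiffcomp} we also know that $\psi$, viewed as a $\theta$-independent vector, satisfies $A(\theta)\psi=\lambda\psi$ for every $\theta$.

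Next I would build $\Gamma'$ from $\Gamma$ by adjoining one new vertex $o$ to each fundamental cell, keeping every edge and bridge of $\Gamma$ and joining $o$ to precisely those vertices $v_i$ of its own cell with $\delta_i=1$. This $\Gamma'$ is again $\Z^d$-periodic, its motif $V_f'=V_f\cup\{o\}$ has $\nu+1$ vertices, and its Floquet matrix is
\[
A'(\theta)=\begin{pmatrix} A(\theta)& \delta\\ \delta^{\intercal}& 0\end{pmatrix},
\]
where $\delta=(\delta_1,\dots,\delta_\nu)^{\intercal}$ and the off-diagonal blocks are constant in $\theta$ because the new edges lie inside a single cell. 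Extending the eigenvector by zero, $\psi'=(\psi,0)^{\intercal}$, one checks $A'(\theta)\psi'=\lambda\psi'$ for all $\theta$: the top block reproduces $A(\theta)\psi=\lambda\psi$ since the $o$-entry of $\psi'$ is zero, while the bottom entry equals $\delta^{\intercal}\psi=\sum_i\delta_i\psi_i=0=\lambda\cdot 0$. As $\psi'\neq 0$ is independent of $\theta$, Proposition~\ref{prp:trigiffcomp} shows $\lambda$ is a single-cell flat band of $\cA_{\Gamma'}$, whence $\lambda\in\cF_{\nu+1}^s$.

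The one point that needs care, and which I regard as the main obstacle, is \emph{connectivity} of $\Gamma'$: the definition of $\cF_{\nu+1}^s$ only counts flat bands arising in connected graphs, and the converse half of Theorem~\ref{thm:singlecellcriter} was stated under the hypothesis that the motif is connected, which need not hold here (the motif $V_f$ is allowed to be disconnected, cf. Remark~\ref{rem:choicevf}). Building $\Gamma'$ by augmenting the already connected $\Gamma$, rather than by invoking the converse construction on $V_f'$, circumvents this: since $\Gamma$ is connected and each copy of $o$ is joined to at least one vertex of its cell (as some $\delta_i=1$), $\Gamma'$ is connected as well. This is precisely the ``weaker assumption'' alluded to after Theorem~\ref{thm:singlecellcriter}. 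A secondary check is that the relation $\sum_i\delta_i\psi_i=0$ is genuinely needed and cannot be replaced by the naive trick of hanging $o$ off a single vertex where $\psi$ vanishes, since $\psi$ may have no zero entry at all (already for $\nu=2$, where $\psi=(1,-1)$ and one must attach $o$ to both vertices).
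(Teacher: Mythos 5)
Your proof is correct, and it reaches the conclusion by a construction that differs from the paper's in a useful way. Both arguments start identically: apply the forward direction of Theorem~\ref{thm:singlecellcriter} to extract $\delta\in\{0,1\}^\nu$, $\delta\neq 0$, with $\sum_i\delta_i\psi_i=0$, and then adjoin one new vertex carrying eigenvector value $0$. The difference is in the wiring. The paper discards the original inter-cell bridges of $\Gamma$ and rebuilds a fresh periodic graph from the disjoint union $V_f'=V_f\sqcup\{v_{\nu+1}\}$ via the converse-direction recipe, taking $B(z)=(z+z^{-1})(\delta_i'\delta_j')$ with $\delta'=(\delta,1)$; since $V_f'$ is disconnected, it must then argue separately that the choice $\delta_{\nu+1}'=1$ forces bridges from $v_{\nu+1}$ into $V_f\pm 1_\fa$ and hence connectivity. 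You instead keep $\Gamma$ intact and attach the new vertex by intra-cell edges to $\supp\delta$, so that $A'(\theta)=\bigl(\begin{smallmatrix} A(\theta)&\delta\\ \delta^\intercal&0\end{smallmatrix}\bigr)$ and the eigenvector check is a two-line block computation; connectivity is then immediate because $\Gamma$ survives as a spanning connected subgraph and each copy of $o$ hangs off it. Your route buys a cleaner connectivity argument and avoids re-deriving it for the part of the new graph living over $V_f$; the paper's route has the virtue of staying entirely inside the uniform $B(z)$-machinery of Theorem~\ref{thm:singlecellcriter}. Your closing observations -- that the converse of Theorem~\ref{thm:singlecellcriter} cannot be invoked verbatim because $V_f'$ is disconnected, and that one cannot simply hang $o$ off a zero of $\psi$ since $\psi$ may vanish nowhere -- correctly identify the two points the paper is also careful about.
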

\begin{proof}
Let $\lambda\in \cF_\nu^s$, with $\Gamma$-eigenvector $\psi$ on $V_f$. By Theorem~\ref{thm:singlecellcriter}, we know there are $\delta_j$, $j=1,\dots,\nu$, with $\sum_{j=1}^\nu \delta_j\psi_j=0$. Consider the disjoint union $V_f' = V_f \sqcup \{v_{\nu+1}\}$. Let $\tilde{\psi}_j = \psi_j$ on $V_f$ and $\tilde{\psi}_{\nu+1} = 0$. Then $\cA_{V_f'}\tilde{\psi} = \lambda\tilde{\psi}$. Define $\delta'$ by $\delta'_j = \delta_j$ for $v_j\in V_f$ and $\delta_{\nu+1}' = 1$. Then $\sum_{j=1}^{\nu+1}\delta_j'\tilde{\psi}_j=0$. Here $V_f'$ is disconnected so we cannot directly apply the converse in Theorem~\ref{thm:singlecellcriter}. However, this is why we chose $\delta_{\nu+1}'=1$. In fact, we know we started from a connected $\Gamma$. To show that the graph $\Gamma'$ induced by $\cA(z) = \cA_{V_f'}+B(z)$ is connected, where $B(z)$ is constructed as in Theorem~\ref{thm:singlecellcriter} using $\delta'$, we only need to show that $v_{\nu+1}$ is connected to $V_f$ and $v_{\nu+1}+1_\fa$. By construction of $B(z)$ (and the choice $\delta_{\nu+1}'=1$) we know there is a bridge from $v_{\nu+1}$ to some vertex in $V_f\pm 1_\fa$, hence to any vertex in $\Gamma$. This also means there is a bridge from $v_{\nu+1}\pm 1_\fa$ to $\Gamma$, hence from $v_{\nu+1}$ to $v_{\nu+1}+1_\fa$. Thus, $\Gamma'$ is connected and $\lambda$ is flat for $\Gamma'$, which has $\nu+1$ vertices in its $V_f'$.
\end{proof}

Proposition~\ref{prp:nu2single} characterizes $\cF_2^s$ as those satisfying the neighborhood condition. Interestingly, this is no longer true for $\nu=3$, although we get the same flat bands.

\begin{exa}\label{exa:singlenu3}
Let $\nu=3$. We apply our results so far to show that
\begin{enumerate}
\item $\cF_3^s = \{0,-1\}$.
\item Each of these single-cell bands can appear with multiplicity one or two, but they cannot appear together.
\item Single-cell flat bands may exist even if $\cN_{v_i}\setminus\{v_j\}\neq \cN_{v_j}\setminus\{v_i\}$ for all $i,j$.
\end{enumerate}

To see that $0$ can appear with multiplicity $2$ consider $A(\theta) = (a_{ij}(\theta))$ with $a_{ij}(\theta)=2\cos 2\pi\theta$ for all $i,j$. Similarly, $-1$ appears with multiplicity $2$ if $A(\theta) = (a_{ij}(\theta))$ with $a_{ii}(\theta)=2\cos 2\pi\theta$ and $a_{ij}(\theta) = 1+2\cos 2\pi\theta$ for all $i\neq j$. Both cases have the eigenvectors $(-1,1,0)^\intercal,(-1,0,1)^\intercal$ independent of $\theta$ and yield connected $\Gamma$.

Examples where $0$ and $-1$ appear as flat bands of multiplicity one for $A(\theta)$ are given in Figure~\ref{fig:nonsymvar} (left) and \ref{fig:deco} (right), respectively.

We now show there are no more bands in $\cF_3^s$. By Lemma~\ref{lem:gensincell}, if $\lambda\in \cF_3^s$, then it is an eigenvalue of one of the graphs in Figure~\ref{fig:3graphs}.

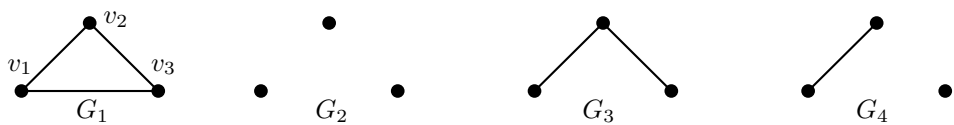
\begin{figure}[h!]
\begin{center}
\setlength{\unitlength}{0.9cm}
\thicklines
\begin{picture}(1.1,1.1)(-0.9,-1.1)
	 \put(-7,-1){\line(1,0){2}}
	 \put(-7,-1){\line(1,1){1}}
	 \put(-5,-1){\line(-1,1){1}}
	   \put(0.5,-1){\line(1,1){1}}
	  \put(2.5,-1){\line(-1,1){1}}
	    \put(4.5,-1){\line(1,1){1}}
	    \put(-7,-1){\circle*{.2}}
	 \put(-6,0){\circle*{.2}}
	 \put(-5,-1){\circle*{.2}}
	    \put(-3.5,-1){\circle*{.2}}
	 \put(-2.5,0){\circle*{.2}}
	 \put(-1.5,-1){\circle*{.2}}
	    \put(0.5,-1){\circle*{.2}}
	 \put(1.5,0){\circle*{.2}}
	 \put(2.5,-1){\circle*{.2}}
	     \put(4.5,-1){\circle*{.2}}
	 \put(5.5,0){\circle*{.2}}
	 \put(6.5,-1){\circle*{.2}}
	    \put(-7.2,-0.7){\small{$v_1$}}
	     \put(-5.1,-0.7){\small{$v_3$}}
	      \put(-5.8,0){\small{$v_2$}}
		\put(-6.2,-1.4){\small{$G_1$}}
		\put(-2.7,-1.4){\small{$G_2$}}
		\put(1.2,-1.4){\small{$G_3$}}
		\put(5.2,-1.4){\small{$G_4$}}
\end{picture}
\caption{All graphs on $3$ vertices.}\label{fig:3graphs}
\end{center}
\end{figure}

The corresponding eigenvalues and eigenvectors are
\begin{small}
\begin{alignat*}{3}
& G_1   \qquad && 2,-1,-1            \qquad && (1,1,1),(-1,0,1),(-1,1,0)\\
& G_2   \qquad && 0,0,0            \qquad && (1,0,0),(0,1,0),(0,0,1)\\
& G_3   \qquad && \sqrt{2},0,-\sqrt{2}            \qquad && (1,\sqrt{2},1),(-1,0,1),(1,-\sqrt{2},1)\\
& G_4   \qquad && 1,0,-1            \qquad && (1,1,0),(0,0,1),(-1,1,0)\,.
\end{alignat*}
\end{small}

The eigenvalues $2$ and $\sqrt{2}$ only arise as top eigenvalues of the connected $G_1,G_3$, respectively, so by Lemma~\ref{lem:gensincell} they cannot be flat. Theorem~\ref{thm:singlecellcriter} implies that $-\sqrt{2}\notin \cF_3^s$.

The eigenvalue $1\notin \cF_3^s$. If it was, we would have $V_f=G_4$. Since $1$ is simple for $\cA_{V_f}$, if $1\in \cF_3^s$, then $(1,1,0)^\intercal$ would have to be an eigenvector of $A(\theta)$ for all $\theta$ with eigenvalue $1$. This would imply $\overline{h_{13}}+\overline{h_{23}}=0$ as polynomials, so $h_{13}=h_{23}=0$. But this would make $\Gamma$ disconnected, with $v_3$ not connected to any $v_i+n_\fa$ for $i=1,2$. Thus, $1\notin \cF_3^s$.

Next, the only possibility to have both $0$ and $-1$ appear as single cell flat bands is if $V_f=G_4$. But $0$ cannot be flat in this case. In fact $0$ is simple for $\cA_{V_f}$, so the corresponding eigenvector would be a multiple of $(0,0,1)^\intercal$. This would imply $h_{13}=0$, $h_{23}=0$, $h_{33}=0$, yielding again a disconnected $\Gamma$. This shows that $0$ and $-1$ cannot appear together.

Finally, $A(\theta)= \begin{pmatrix} 2\cos 2\pi\theta&0&2\cos 2\pi\theta\\ 0&2\cos 4\pi\theta& 2\cos 4\pi\theta\\ 2\cos 2\pi\theta& 2\cos 4\pi\theta& 2\cos 2\pi\theta + 2\cos 4\pi\theta\end{pmatrix}$ induces a connected $\Gamma$, with eigenvector $(1,1,-1)^\intercal$ localized on a single cell, but $\cN_{v_i}\setminus\{v_j\}\neq \cN_{v_j}\setminus\{v_i\}$, $\forall i,j$.
\end{exa}

We can refine Corollary~\ref{cor:increaflat} by describing more precisely how $\cF_{\nu+1}^s$ differs from $\cF_\nu^s$.

\begin{thm}\label{thm:moredetailsonfnu+1}
We may write $\cF_{\nu+1}^s$ as a disjoint union
\[
\cF_{\nu+1}^s = \cF_\nu^s \sqcup C_{\nu+1}\,,
\]
where $\lambda\in C_{\nu+1}$ implies that $\lambda\in \sigma(\cA_{V_f})$ for a \emph{connected} graph on $\nu+1$ vertices, or a disconnected graph $V_f = V_f^1\sqcup V_f^2$ such that $V_f^i$ are connected, $|V_f^1| = |V_f^2| = \frac{\nu+1}{2}$ and $\lambda\in \sigma(\cA_{V_f^1})\cap\sigma(\cA_{V_f^2})$.

The set $C_{\nu+1}$ can be empty.

In particular, if $\nu+1$ is odd, only connected graphs on $\nu+1$ vertices may contribute new flat bands in $\cF_{\nu+1}^s$.
\end{thm}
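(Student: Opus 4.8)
The plan is to run everything through the sharp form of Theorem~\ref{thm:singlecellcriter}: by the remark following it (the form already used in Corollary~\ref{cor:increaflat}), $\lambda\in\cF_n^s$ if and only if there is a graph $V_f$ on $n$ vertices, an eigenvector $\psi$ of $\cA_{V_f}$ for $\lambda$, and $\delta\in\{0,1\}^n$ whose support meets every connected component of $V_f$, with $\sum_i\delta_i\psi_i=0$. The decomposition $\cF_{\nu+1}^s=\cF_\nu^s\sqcup C_{\nu+1}$, where $C_{\nu+1}:=\cF_{\nu+1}^s\setminus\cF_\nu^s$, is then automatic from the monotonicity $\cF_\nu^s\subseteq\cF_{\nu+1}^s$ of Corollary~\ref{cor:increaflat}; the real content is to constrain the motifs realizing a given $\lambda\in C_{\nu+1}$. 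The engine is a \emph{reduction principle}: since $\lambda\notin\cF_\nu^s$, no graph on at most $\nu$ vertices may realize $\lambda$, so any construction producing a valid realization on $\le\nu$ vertices is a contradiction.

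First I would fix $\lambda\in C_{\nu+1}$ and a realizing triple $(V_f,\psi,\delta)$ on $\nu+1$ vertices. If $V_f$ is connected we are in the first alternative. Otherwise $V_f$ has components $V_f^1,\dots,V_f^m$ with $m\ge2$, and I write $\psi=\bigoplus_i\psi^i$, $\delta=\bigoplus_i\delta^i$. \emph{Reduction (a):} if some component carries no eigenvector, $\psi^i=0$, I delete that whole component; the restricted $\psi,\delta$ still satisfy $\sum\delta_j\psi_j=0$ (the vanishing block contributed nothing) and the restricted support still meets every remaining component, realizing $\lambda$ on $\nu+1-|V_f^i|\le\nu$ vertices — a contradiction. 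Hence $\psi^i\ne0$, i.e. $\lambda\in\sigma(\cA_{V_f^i})$, for \emph{every} $i$. \emph{Reduction (b):} if $m\ge3$, I keep only the components $V_f^1,\dots,V_f^{m-1}$. The old $\delta$ no longer balances, so I choose fresh data: pick $u_i$ with $\psi^i_{u_i}\ne0$ in each kept component, set $\eta=\sum_{i<m}e_{u_i}$ and $\chi=\bigoplus_{i<m}c_i\psi^i$, and solve the single scalar equation $\sum_{i<m}c_i\psi^i_{u_i}=0$ for a nonzero $(c_i)$, which is possible since there are $m-1\ge2$ unknowns. Then $\chi\ne0$, $\langle\eta,\chi\rangle=0$, and $\mathrm{supp}(\eta)$ meets every kept component, realizing $\lambda$ on $\nu+1-|V_f^m|\le\nu$ vertices — a contradiction. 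Hence $m=2$.

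\emph{Reduction (c):} with $m=2$, suppose $|V_f^1|<|V_f^2|$. I replace $V_f^2$ by a second copy of $V_f^1$ (legitimate since $\lambda\in\sigma(\cA_{V_f^1})$ by (a)): on $V_f^1\sqcup V_f^1$ take $\chi=(\psi^1,-\psi^1)$ and $\eta=e_{u_1}+e_{u_2}$ for a vertex $u$ with $\psi^1_u\ne0$. Then $\langle\eta,\chi\rangle=\psi^1_u-\psi^1_u=0$, $\chi\ne0$, and $\mathrm{supp}(\eta)$ meets both copies, realizing $\lambda$ on $2|V_f^1|<\nu+1$, hence on $\le\nu$ vertices — a contradiction. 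Therefore $|V_f^1|=|V_f^2|=\frac{\nu+1}{2}$, which also forces $\nu+1$ even. The ``in particular'' clause is then immediate: if $\nu+1$ is odd the equal-size alternative is numerically impossible, so every $\lambda\in C_{\nu+1}$ arises from a connected graph on $\nu+1$ vertices. That $C_{\nu+1}$ can be empty is witnessed by the explicit lists, e.g. $\cF_2^s=\cF_3^s=\{0,-1\}$ gives $C_3=\emptyset$.

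The step I expect to be the main obstacle is guaranteeing, at each reduction, that the fresh pair $(\chi,\eta)$ \emph{simultaneously} satisfies the linear constraint $\langle\eta,\chi\rangle=0$, keeps $\chi\ne0$, and has $\eta$ supported so as to meet \emph{every} component of the reduced motif — precisely the hypotheses needed to apply the weaker (non-connected) form of the converse in Theorem~\ref{thm:singlecellcriter}. The freedom that makes this possible is scaling within the $\lambda$-eigenspace of the reduced motif, which lets one annihilate a single scalar functional while controlling supports; the equal-size conclusion in (c) is the subtlest point, being exactly the configuration where no such scaling-reduction is available and the disconnected motif is genuinely irreducible.
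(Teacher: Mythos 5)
Your reductions (b) and (c) are correct and appealing --- (b) exploits blockwise rescaling of the eigenvector to satisfy the single scalar constraint, and (c) is the two-copies-with-opposite-signs device of Theorem~\ref{thm:eigenchar} --- but the argument has a genuine gap at its entry point. Everything is launched from a claimed ``sharp form'' of Theorem~\ref{thm:singlecellcriter}: that every $\lambda\in\cF_n^s$ admits a realization $(V_f,\psi,\delta)$ in which $\supp(\delta)$ meets \emph{every} connected component of $V_f$. The remark after Theorem~\ref{thm:singlecellcriter}, and its use in Corollary~\ref{cor:increaflat}, only weaken the hypothesis of the \emph{converse}: if such a $\delta$ is handed to you, the construction $B(z)=(z+z^{-1})(\delta_i\delta_j)$ produces a connected $\Gamma$ even for disconnected $V_f$. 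They say nothing about the forward direction. What the forward direction of Theorem~\ref{thm:singlecellcriter} actually outputs is $\delta_j=\mathbf{1}[k\in I_{ij}]$ for one fixed pair $(i,k)$ --- the set of vertices of $V_f$ that $v_i$ sees in the $k$-th translated cell --- and nothing forces this set to meet every component. Concretely, take $V_f=P_2\sqcup P_2$ on $\{v_1,v_2\}\sqcup\{v_3,v_4\}$, $\psi=(1,-1,1,-1)$, $\lambda=-1$, $I_{ii}=\emptyset$ and $I_{13}=I_{14}=I_{23}=I_{24}=\{\pm1,\pm2\}$: the resulting $\Gamma$ is connected and $-1$ is a single-cell flat band, yet every coefficient vector the proof can output is $(0,0,1,1)$ or $(1,1,0,0)$, each confined to a single component. (A component-spanning $\delta$ such as $(1,1,1,1)$ happens to exist here, but it is found by inspection; no cited result supplies one in general.)

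This unproven ``only if'' is precisely what your Reduction (a) consumes: after deleting a component on which $\psi$ vanishes you assert that the restricted $\delta$ still meets every surviving component, and if it does not, the converse construction yields a disconnected graph and no contradiction is obtained. Since (b) and (c) need the output of (a) --- that $\psi^i\neq0$ on every component, so that the base points $u_i$ can be chosen --- the whole chain is blocked. The repair essentially lands you on the paper's route: first apply the mechanism of your reduction (c) to a \emph{single} component to rule out $\lambda\in\sigma(\cA_{V^i})$ whenever $|V^i|<\frac{\nu+1}{2}$ (this needs no $\delta$ at all, only the doubling construction of Theorem~\ref{thm:eigenchar} and Corollary~\ref{cor:increaflat}); deduce that $\psi$ is supported on at most two components, each of size at least $\frac{\nu+1}{2}$, which already forces either the equal-halves configuration or a single supporting connected component $V^1$; and in the latter case obtain a usable $\delta$ by choosing the pair $(i,k)$ associated with a bridge landing \emph{inside} $V^1$, which exists because $\Gamma$ is connected. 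With that substitute for (a), your (b) becomes unnecessary and (c) finishes the proof.
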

Besides its theoretical input, the purpose of Theorem~\ref{thm:moredetailsonfnu+1} is to eliminate the disconnected graphs from consideration. For example, for $\nu=5$, this reduces the analysis from $34$ graphs to $21$ graphs. Let us clarify that disconnected graphs on $\nu+1$ vertices \emph{do} contribute flat bands. We are just saying that the flat bands they produce are old ones in $\cF_{\nu}^s$, with the exception of having $V_f$ precisely partitioned into two equal parts. This last scenario is in harmony with our recipe, Theorem~\ref{thm:eigenchar}.

\begin{proof}[Proof of Theorem~\ref{thm:moredetailsonfnu+1}]
Let $\lambda\in \cF_{\nu+1}^s\setminus \cF_\nu^s$. Then $\lambda\in \sigma(\cA_{V_f})$ for some graph $V_f$ on $\nu+1$ vertices. Suppose $V_f$ is disconnected, $V_f = \sqcup_{i=1}^r V^i$, for some connected $V^i$, so $|V^i|\le \nu$. Then $\sigma(\cA_{V_f}) = \cup_{i=1}^r \sigma(\cA_{V^i})$. We claim that if $|V^i|<\frac{\nu+1}{2}$, then $\lambda\notin\sigma(\cA_{V^i})$. Indeed if $\lambda\in \sigma(\cA_{V^i})$, then Theorem~\ref{thm:eigenchar} and Corollary~\ref{cor:increaflat} would imply $\lambda\in \cF_{2|V^i|}^s \subseteq \cF_\nu^s$, a contradiction. 

If $|V^i|<\frac{\nu+1}{2}$ for all $i$, we get a contradiction. If not, we may assume $|V^1|\ge \frac{\nu+1}{2}$. Then either $|V^i|<\frac{\nu+1}{2}$ for all other $i$, or $V_f = V^1\sqcup V^2$ with $|V^1|=|V^2|=\frac{\nu+1}{2}$. In the latter case, if $\lambda\in \sigma(\cA_{V^1})\cap \sigma(\cA_{V^2})$ then we are done. So suppose $\lambda\notin \sigma(\cA_{V^2})$. Then in all cases, we are in a situation where $\lambda\in \sigma(\cA_{V^1})$ and $\lambda\notin \sigma(\cA_{V^i})$ for $i>1$.

Now, since $\lambda\in \cF_{\nu+1}^s$, we know by Theorem~\ref{thm:singlecellcriter} that for the corresponding $\psi$, we may find $\delta_i\in \{0,1\}$ not all zero, such that $\sum_{i=1}^{\nu+1}\delta_i \psi_i=0$. Denote $\psi = (\psi^{(1)},\dots,\psi^{(r)})$ for the components of $\psi$ over $V^i$. Then for $v\in V^k$, we have $\cA\psi^{(k)}(v) = \cA\psi(v) = \lambda\psi(v) = \lambda\psi^{(k)}(v)$. Since $\lambda\notin\sigma(\cA_{V^k})$ for $k>1$, we must have $\psi^{(k)}=0$ for $k>1$.

Thus, $\psi = (\psi^{(1)},0,\dots,0)$. So $\lambda\in \sigma(\cA_{V^1})$, $0=\sum_{i=1}^{\nu+1}\delta_i \psi_i = \sum_{v\in V^1} \delta_v \psi^{(1)}_v$, and $V^1$ is connected. By Theorem~\ref{thm:singlecellcriter} and Corollary~\ref{cor:increaflat}, this implies $\lambda \in \cF_{|V^1|}^s \subseteq \cF_\nu^s$, again a contradiction. This completes the proof of the statement.
\end{proof}

So far, all flat bands have been $\{0,-1\}$, so integers and nonpositive. Interestingly, if $\nu=4$, the flat band can be positive and irrational.

\begin{exa}\label{exa:nu4single}
The set of single-cell flat bands for $\nu=4$ and $\nu=5$ are the same, with
\begin{equation}\label{e:f45}
\cF_4^s= \Big\{0,1, -1,-2,\frac{-1+\sqrt{5}}{2},\frac{-1-\sqrt{5}}{2}\Big\} = \cF_5^s\,.
\end{equation}

The fact that $\cF_5^s$ does not bring any new flat bands is quite remarkable, as there is a total of $34$ graphs on $5$ vertices which in principle could provide some.

To prove \eqref{e:f45}, we simply apply Theorem~\ref{thm:singlecellcriter} to the subset of graphs that Theorem~\ref{thm:moredetailsonfnu+1} tells us to consider. This however requires the tables of eigenvalues and eigenvectors of graphs on $4,5$ vertices; we provide this in Appendix~\ref{app:45}.
\end{exa}

We showed that
\begin{equation}\label{e:fscroi}
\emptyset = \cF_1^s\subset \cF_2^s = \cF_3^s \subset \cF_4^s = \cF_5^s
\end{equation}
and $\cup_\nu \cF_\nu^s$ is the set of all totally real algebraic integers. It would be interesting to know if \eqref{e:fscroi} is a general fact or a coincidence, i.e. do we always have $\cF_{2k+1}^s = \cF_{2k}^s$~? This would avoid the whole discussion in \S\ref{app:5ver}. In view of Theorem~\ref{thm:moredetailsonfnu+1}, we may state the question equivalently as follows: is it true that the connected graphs on $2k+1$ vertices only offer flat bands in $\cF_{2k}^s$~? In other words, the eigenvectors of such graphs corresponding to ``new'' eigenvalues never satisfy $\sum_{j=1}^{\nu+1} \delta_j \psi_j=0$, $\delta_j\in \{0,1\}$, except when $\delta_j=0$ for all $j$~?

To conclude, let us mention that there exist flat bands beyond single-cells even in the nearest-neighbor hopping case, as Figure~\ref{fig:curious} shows.

\begin{figure}[h!]
\begin{center}
\setlength{\unitlength}{1cm}
\thicklines
\begin{picture}(2,2)(-1,-2)
	 \put(-7,0){\line(1,0){6}}
	 \put(-7,-2){\line(1,0){6}}
	 \put(-6,0){\line(1,-1){2}}
	 \put(-4,0){\line(1,-1){2}}
	 \put(-4,-2){\line(1,1){2}}
	 \put(-6,-2){\line(1,1){2}}
	 \put(-2,-2){\line(1,1){1}}
	 \put(-2,0){\line(1,-1){1}}
	 \put(-3,-1){\circle*{.2}}
	 \put(-4,0){\circle*{.2}}
	 \put(-5,-1){\circle*{.2}}
	 \put(-6,0){\circle*{.2}}
	 \put(-1,-1){\circle*{.2}}
 	 \put(-6,-2){\circle*{.2}}
	 \put(-4,-2){\circle*{.2}}
	 \put(-2,-2){\circle*{.2}}
	 \put(-2,0){\circle*{.2}}
	 \put(-3.9,0.1){\small{$1$}}
	 \put(-3.6,-1.1){\small{$-1$}}
	 \put(-4.8,-1.1){\small{$-1$}}
	 \put(-3.9,-2.3){\small{$1$}}
	%
	 \put(2,0){\line(1,-1){1}}
	 \put(4,0){\line(1,-1){1}}
	 \put(6,0){\line(1,-1){1}}
	 \put(3,-1){\line(1,1){1}}
	 \put(5,-1){\line(1,1){1}}
	 \put(3,0){\line(1,-1){1}}
	 \put(5,0){\line(1,-1){1}}
	 \put(2,-1){\line(1,1){1}}
	 \put(4,-1){\line(1,1){1}}
	 \put(6,-1){\line(1,1){1}}
	 \put(2,-2){\line(0,1){2}}
	 \put(3,-2){\line(0,1){2}}
	 \put(4,-2){\line(0,1){2}}
	 \put(5,-2){\line(0,1){2}}
	 \put(6,-2){\line(0,1){2}}
	 \put(7,-2){\line(0,1){2}}
	 \put(5,-1){\circle*{.2}}
	 \put(4,0){\circle*{.2}}
	 \put(3,-1){\circle*{.2}}
	 \put(2,0){\circle*{.2}}
	 \put(7,-1){\circle*{.2}}
 	 \put(2,-2){\circle*{.2}}
	 \put(4,-2){\circle*{.2}}
	 \put(6,-2){\circle*{.2}}
	 \put(6,0){\circle*{.2}}
	 \put(7,0){\circle*{.2}}
	 \put(5,0){\circle*{.2}}
	 \put(3,0){\circle*{.2}}
	 \put(6,-1){\circle*{.2}}
	 \put(4,-1){\circle*{.2}}
	 \put(2,-1){\circle*{.2}}
	 \put(3,-2){\circle*{.2}}
	 \put(5,-2){\circle*{.2}}
	 \put(7,-2){\circle*{.2}}
	 \put(4.1,0.1){\small{$1$}}
	 \put(3.1,0.1){\small{$0$}}
	 \put(5.1,0.1){\small{$0$}}
	 \put(4.2,-1.1){\small{$0$}}
	 \put(3.2,-1.1){\small{$0$}}
	 \put(5.2,-1.1){\small{$0$}}
	 \put(3.9,-2.3){\small{$-1$}}
	 \put(4.9,-2.3){\small{$-1$}}
	 \put(2.9,-2.3){\small{$-1$}}
\end{picture}
\caption{Nearest hopping, $\nu=3$. Flat bands $\lambda=-2$ (left), $\lambda=0$ (right).}\label{fig:curious}
\end{center}
\end{figure}
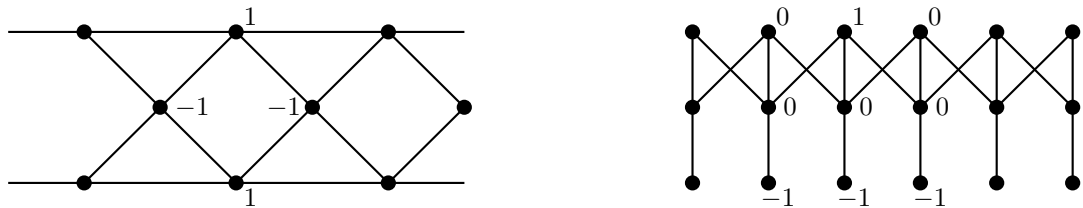

For the graph on Figure~\ref{fig:curious} (left), $-2$ is not even an eigenvalue of a graph on $3$ vertices. This is because $I_{12}$ is not symmetric. When all hopping are symmetric like Figure~\ref{fig:curious} (right), then the flat band is necessarily an eigenvalue of $V_f$ if $d=\fh_1=1$. This is because the Floquet matrix takes the form $A(\theta) = \cA_{V_f} + 2\cos 2\pi\theta B$ for some matrix $B$ encoding the first-hopping, so if $\lambda$ is an eigenvalue for all $\theta$, it is in particular for $\theta=\frac{1}{4}$.

\section{Case of two vertices}\label{sec:2ver}

In the following, $d$ is arbitrary, $\nu=2$, $Q\equiv 0$, $w\equiv 1$. Proposition~\ref{prp:nu2single} characterizes single-cell flat bands. Here we study existence beyond single-cells. We also show that flat bands are always integers if $\nu=2$, and we give two formulas for them in terms of $I_{ij}$.

Let $z_j=\ee^{2\pi\ii \theta_j}$. We denote \eqref{e:hthe} as $A(z) = \begin{pmatrix} f_1(z)& g(z)\\ g(z^{-1})& f_2(z)\end{pmatrix}$, with $f_i(z) = \sum_{k\in I_{ii}}z^k$ and $g(z)=\sum_{p\in I_{12}}z^p$. Then $\lambda$ is flat iff $(f_1(z)-\lambda)(f_2(z)-\lambda)=g(z)g(z^{-1})$ as Laurent polynomials, by Lemma~\ref{lem:triv}.

We first note that we cannot have $|I_{12}|=0$, otherwise $A(z) = \begin{pmatrix}  f_1(z)& 0\\ 0& f_2(z)\end{pmatrix}$ induces a disconnected graph $\Gamma$ in which $v_1$ is not connected to any $v_2+n_\fa$. So
\begin{equation}\label{e:i12ge1}
|I_{12}|\ge 1
\end{equation}
for any periodic connected graph with $\nu=2$.

\subsection{Absence of flat bands}
Most arguments here are based on degree analysis of Laurent polynomials.
Let $I_{ij}(r)=\{k_r\in \Z:k\in I_{ij}\}$ for $1\le r\le d$ and	
\[
\begin{cases}
k_{\max}(r) = \max I_{11}(r)\,, \qquad k_{\max}'(r) = \max I_{22}(r)\,,\\
p_{\max}(r) = \max I_{12}(r)\,, \,\qquad p_{\min}(r) = \min I_{12}(r)\,. 
\end{cases}
\]

If $\Gamma$ has a flat band then for all $r$,
\begin{equation}\label{e:lar}
k_{\max}(r)+k_{\max}'(r)=p_{\max}(r)-p_{\min}(r)\,.
\end{equation}
In fact, we have $(\sum_{I_{11}} z^k-\lambda)(\sum_{I_{22}}z^{k'}-\lambda) = (\sum_p z^p)(\sum_p z^{-p})$. Evaluate at $z=(z_r,1)$, so $z_i=1$ for $i\neq r$. As Laurent polynomials in $z_r$, the largest degrees on both sides are $k_{\max}(r)+k_{\max}'(r)$ and $p_{\max}(r)-p_{\min}(r)$, respectively, so they coincide.

It follows in particular that if $\Gamma$ has a flat band, then
\begin{equation}\label{e:i1122ge1}
|I_{11}|+|I_{22}|\ge 2 \,.
\end{equation}
In fact, if $I_{11}=I_{22}=\emptyset$, then $k_{\max}(r)+k_{\max}'(r) = 0 = p_{\max}(r)-p_{\min}(r)$ for each $r$, so $I_{12}=\{p\}$ or $I_{12}=\emptyset$, i.e. $A(z) = \begin{pmatrix} 0&\epsilon z^p\\ \epsilon z^{-p}&0\end{pmatrix}$ for $\epsilon\in\{0,1\}$ and $\Gamma$ is disconnected. So $I_{11}$ or $I_{22}$ is nonempty, and since each is symmetric without $0$, each has even cardinality.

Expand the characteristic polynomial,
\begin{equation}\label{e:char2}
p(z;\lambda)=\lambda^2 - \Big(\sum_{k\in I_{11}} z^k + \sum_{k'\in I_{22}} z^{k'}\Big)\lambda + \sum_{\substack{k\in I_{11}\\k'\in I_{22}}} z^{k+k'} - \sum_{p,p'\in I_{12}} z^{p-p'}\,.
\end{equation}

\begin{lem}\label{lem:vap}
If $\cA_\Gamma$ has a flat band $\lambda$, then $\lambda = \pm \sqrt{|I_{12}|-|I_{11}\cap I_{22}|} \in \Z$.
\end{lem}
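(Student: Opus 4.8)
The plan is to invoke Lemma~\ref{lem:triv}: since $\lambda$ is a flat band, the characteristic polynomial $p(z;\lambda)$ written out in \eqref{e:char2} is the zero Laurent polynomial, so the coefficient of every monomial $z^k$ vanishes. The entire argument is then coefficient extraction from \eqref{e:char2} at two well-chosen families of powers of $z$: the constant term yields the formula, and a single nonzero power yields the integrality.

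First I would read off the coefficient of $z^0$. The $\lambda^2$ term contributes $\lambda^2$; the $\lambda$-linear term $-(\sum_{I_{11}}z^k+\sum_{I_{22}}z^{k'})\lambda$ contributes nothing because $0\notin I_{ii}$; the term $\sum_{k\in I_{11},\,k'\in I_{22}}z^{k+k'}$ contributes the number of pairs with $k'=-k$, which equals $|I_{11}\cap I_{22}|$ since $I_{22}=-I_{22}$ by \eqref{e:indicessym}; and $-\sum_{p,p'\in I_{12}}z^{p-p'}$ contributes $-|I_{12}|$ from its diagonal $p=p'$. Setting the total to zero gives $\lambda^2=|I_{12}|-|I_{11}\cap I_{22}|$. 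As $\lambda$ is real (an eigenvalue of the Hermitian $A(\theta)$), the right-hand side is automatically a nonnegative integer, which establishes $\lambda=\pm\sqrt{|I_{12}|-|I_{11}\cap I_{22}|}$.

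The remaining point, and the one requiring slightly more care, is upgrading this to $\lambda\in\Z$: an integer need not be a perfect square, so $\lambda^2\in\Z$ alone is not enough. Here I would extract the coefficient of $z^m$ for a suitable $m\neq 0$. By \eqref{e:i1122ge1} we have $|I_{11}|+|I_{22}|\geq 2$, so $I_{11}\cup I_{22}$ contains some $m\neq 0$; at such $m$ the vanishing coefficient reads $-c_m\lambda+d_m=0$, where $c_m\in\{1,2\}$ counts how many of the sets $I_{11},I_{22}$ contain $m$, and $d_m\in\Z$ collects the integer contributions of $f_1f_2$ and $g(z)g(z^{-1})$ at $z^m$ (note $\lambda^2$ does not appear for $m\neq 0$). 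Hence $\lambda=d_m/c_m\in\Q$, and being a rational root of the monic integer polynomial $x^2-\lambda^2$, it is an integer by the rational root theorem; equivalently, $\lambda$ is a rational algebraic integer, invoking the Corollary following Theorem~\ref{thm:eigenchar}.

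I expect the integrality step to be the only genuinely nonroutine part. The constant-term computation is a direct count, but concluding $\lambda\in\Z$ rather than merely $\lambda^2\in\Z$ needs the extra rationality input drawn from a nonzero power of $z$; the one thing to verify is that such a power exists, which is exactly what \eqref{e:i1122ge1} guarantees, and one should note that this covers the potentially awkward case $I_{11}=I_{22}$, where $c_m$ is always $2$, since $\lambda\in\tfrac12\Z$ together with $\lambda^2\in\Z$ still forces $\lambda\in\Z$.
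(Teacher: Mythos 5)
Your proposal is correct and follows essentially the same route as the paper: extract the constant coefficient of \eqref{e:char2} to get $\lambda^2=|I_{12}|-|I_{11}\cap I_{22}|$, then use the vanishing of a nonzero-power coefficient (available thanks to \eqref{e:i1122ge1}) to force $\lambda\in\Q$ and hence $\lambda\in\Z$. Your integrality step is in fact a more carefully spelled-out version of the paper's one-line remark that the double sums have coefficients $\pm 1$, including the correct handling of the $c_m=2$ case.
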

\begin{proof}
Let $\lambda$ be a flat band. By Lemma~\ref{lem:triv}, the coefficient of $z^0$ in \eqref{e:char2} vanishes identically. Since $0\notin I_{ii}$, this coefficient is precisely $\lambda^2 + |I_{11}\cap I_{22}|-|I_{12}|$. Indeed, it comes from taking, for each $k\in I_{11}$, a corresponding $k'\in I_{22}$ with $k'=-k$, and taking for each $p\in I_{12}$ a corresponding $p'=p$. The latter has cardinality $|I_{12}|$. For the former, if $k'=-k$, then since $I_{11}$ is symmetric, we get $k'\in I_{11}$, so $k,k'\in I_{11}\cap I_{22}$. Conversely, for each $k\in I_{11}\cap I_{22}$ there is a corresponding $k'=-k$ contributing to the coefficient.

We showed that $\lambda = \pm \sqrt{|I_{12}|-|I_{11}\cap I_{22}|}$. Finally, since $p(z;\lambda)\equiv 0$ and the double sums in \eqref{e:char2} have coefficients $\pm 1$, then $\lambda$ must be an integer.
\end{proof}

\begin{lem}\label{lem:othervap}
If $\lambda$ is a flat band, we also have $\lambda = \frac{|I_{11}|+|I_{22}|-\sqrt{(|I_{11}|-|I_{22}|)^2+4|I_{12}|^2}}{2}$.
\end{lem}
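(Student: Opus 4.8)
The plan is to read off $\lambda$ from one judicious evaluation of the characteristic polynomial and then use Theorem~\ref{thm:notflat} to select the correct root.

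First I would set $z_j=1$ for all $j$ in \eqref{e:char2}; since $p(z;\lambda)\equiv 0$ by Lemma~\ref{lem:triv}, this evaluation must vanish. Each Laurent monomial contributes $1$, so the double sums simply count cardinalities and we obtain the scalar identity
\[
\lambda^2 - (|I_{11}|+|I_{22}|)\lambda + |I_{11}|\,|I_{22}| - |I_{12}|^2 = 0\,.
\]
Equivalently, this records that $\lambda$ is an eigenvalue of $A(\mathbf 1)=\left(\begin{smallmatrix}|I_{11}|&|I_{12}|\\ |I_{12}|&|I_{22}|\end{smallmatrix}\right)$, which is forced since a flat band is an eigenvalue of $A(\theta)$ at every $\theta$, in particular at $\theta=0$. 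Solving the quadratic and simplifying the discriminant via $(|I_{11}|+|I_{22}|)^2-4(|I_{11}|\,|I_{22}|-|I_{12}|^2)=(|I_{11}|-|I_{22}|)^2+4|I_{12}|^2$ gives the two candidate values $\lambda_\pm$, namely the stated expression with the two sign choices.

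It remains to rule out the $+$ sign, which I expect to be the crux. Here $\lambda_+$ is the top eigenvalue of $A(\mathbf 1)$, and I would argue that $\lambda_+$ is in fact the maximum of the whole spectrum. Indeed, each entry satisfies $|h_{ij}(\theta)|=\bigl|\sum_{k\in I_{ij}}\ee^{2\pi\ii\theta\cdot k}\bigr|\le |I_{ij}|$, so the modulus of $A(\theta)$ is dominated entrywise by the nonnegative matrix $A(\mathbf 1)$; the standard spectral-radius comparison then yields $\rho(A(\theta))\le \rho(A(\mathbf 1))=\lambda_+$, the last equality holding because $A(\mathbf 1)\ge 0$ has top eigenvalue equal to its spectral radius. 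As $A(\theta)$ is Hermitian, its top eigenvalue $E_2(\theta)$ obeys $E_2(\theta)\le \rho(A(\theta))\le \lambda_+$ for every $\theta$.

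Now suppose $\lambda=\lambda_+$. Since a flat band is an eigenvalue of $A(\theta)$ for all $\theta$, we have $\lambda\le E_2(\theta)$; combined with $E_2(\theta)\le\lambda_+=\lambda$ this forces $E_2(\theta)\equiv\lambda$, i.e.\ the top band is flat. This contradicts Theorem~\ref{thm:notflat} (applicable since $w\equiv 1>0$ is symmetric and $Q\equiv 0$). Hence $\lambda=\lambda_-$, which is exactly the asserted formula. The main obstacle is precisely this last step: establishing that the top of the spectrum is attained at $\theta=0$ so that Theorem~\ref{thm:notflat} can be invoked; the remaining computations are routine.
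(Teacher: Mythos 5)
Your proof is correct, and while the first half (evaluating the identically vanishing characteristic polynomial \eqref{e:char2} at $z=1$, i.e.\ $\theta=0$, to get the quadratic $\lambda^2-(|I_{11}|+|I_{22}|)\lambda+|I_{11}||I_{22}|-|I_{12}|^2=0$) coincides exactly with the paper's, your way of ruling out the $+$ sign is genuinely different. The paper's argument is purely arithmetic: it quotes Lemma~\ref{lem:vap}, which gives $\lambda^2=|I_{12}|-|I_{11}\cap I_{22}|\le |I_{12}|\le |I_{12}|^2$ and hence $|\lambda|\le |I_{12}|$, and then observes that $|I_{11}|+|I_{22}|\ge 2$ by \eqref{e:i1122ge1} forces $\lambda_+>\tfrac12\sqrt{(|I_{11}|-|I_{22}|)^2+4|I_{12}|^2}\ge |I_{12}|$, a contradiction. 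You instead prove the structural fact that $\theta=0$ maximizes the top Floquet eigenvalue, via the entrywise bound $|h_{ij}(\theta)|\le |I_{ij}|$ and the standard spectral-radius comparison $\rho(A(\theta))\le\rho(A(\mathbf 1))=\lambda_+$, so that $\lambda=\lambda_+$ would make the top band identically flat, contradicting Theorem~\ref{thm:notflat}. Both are valid; the paper's route is shorter and self-contained within the coefficient computations of that section (though it leans on Lemma~\ref{lem:vap} and \eqref{e:i1122ge1}), whereas yours is independent of Lemma~\ref{lem:vap}, works verbatim for any $\nu$ (the top band of $\cA_\Gamma$ is always attained at $\theta=0$ and is never flat), and isolates a reusable fact — at the cost of invoking the heavier Theorem~\ref{thm:notflat}, whose proof is itself a Perron--Frobenius argument.
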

\begin{proof}
Consider $z=1$ in \eqref{e:char2}. This yields $\lambda^2-\lambda(|I_{11}|+|I_{22}|)+|I_{11}||I_{22}|-|I_{12}|^2=0$. So $\lambda = \frac{|I_{11}|+|I_{22}|\pm \sqrt{(|I_{11}|+|I_{22}|)^2-4|I_{11}||I_{22}|+4|I_{12}|^2}}{2} = \frac{|I_{11}|+|I_{22}|\pm \sqrt{(|I_{11}|-|I_{22}|)^2+4|I_{12}|^2}}{2}$.

Using \eqref{e:i1122ge1}, $|I_{11}|+|I_{22}|\neq 0$. If the $+$ sign occurs in $\lambda$, then $\lambda > \frac{\sqrt{(|I_{11}|-|I_{22}|)^2+4|I_{12}|^2}}{2}\ge |I_{12}|$. This contradicts Lemma~\ref{lem:vap}, so the sign must be negative.
\end{proof}

\begin{lem}\label{lem:nu2nfb}
If $\nu=2$, then $\cA_\Gamma$ has no flat bands if any of the following holds:
\begin{enumerate}[\rm(i)]
\item $k_{\max}(r)+k_{\max}'(r)\neq p_{\max}(r)-p_{\min}(r)$ for some $r$.
\item $|I_{12}|=1$.
\item $I_{11}=\emptyset$ or $I_{22}=\emptyset$.
\end{enumerate}
\end{lem}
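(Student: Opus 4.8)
The three items are of decreasing ease, so the plan is to dispatch (i) and (ii) in a couple of lines each and then concentrate the effort on (iii). Throughout I use the characterization (Lemma~\ref{lem:triv}) that $\lambda$ is flat iff $(f_1(z)-\lambda)(f_2(z)-\lambda)=g(z)g(z^{-1})$ as Laurent polynomials.

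Item (i) is simply the contrapositive of \eqref{e:lar}: a flat band would force $k_{\max}(r)+k_{\max}'(r)=p_{\max}(r)-p_{\min}(r)$ for \emph{every} $r$, so if this equality fails for one $r$, no flat band exists. For (ii), if $|I_{12}|=1$, say $I_{12}=\{p\}$, then $g(z)=z^p$ and the right-hand side collapses to the constant $g(z)g(z^{-1})=1$. The idea I would use is that the units of the Laurent ring $\C[z_1^{\pm 1},\dots,z_d^{\pm 1}]$ are exactly the monomials $cz^m$; since $(f_1-\lambda)(f_2-\lambda)=1$, each factor $f_i-\lambda$ must be a monomial. But $f_i$ has $0/1$ coefficients and no constant term (as $0\notin I_{ii}$), so $f_i-\lambda=cz^m$ forces either $\lambda=0$ (when $m\neq 0$, reading off the missing constant term) or $f_i=0$ (when $m=0$). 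The first is excluded since $\lambda^2=|I_{12}|=1$ by Lemma~\ref{lem:vap}; hence $f_1=f_2=0$, i.e. $I_{11}=I_{22}=\emptyset$, and $A(z)=\begin{pmatrix}0 & z^p\\ z^{-p} & 0\end{pmatrix}$ induces a disconnected $\Gamma$, contradicting \eqref{e:i1122ge1}.

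For (iii), by the $v_1\leftrightarrow v_2$ symmetry of the factorization I may assume $I_{11}=\emptyset$, so $f_1\equiv 0$ and the identity becomes $\lambda^2-\lambda f_2(z)=g(z)g(z^{-1})$. If $|I_{12}|=1$ I fall back on (ii), so I assume $|I_{12}|\ge 2$; then $I_{11}\cap I_{22}=\emptyset$ gives $\lambda^2=|I_{12}|\ge 2$ by Lemma~\ref{lem:vap}, whence $|\lambda|=\sqrt{|I_{12}|}>1$. The key idea is a leading-term (autocorrelation) argument: fix the lexicographic order on $\Z^d$ and let $p_+,p_-$ be the largest and smallest elements of $I_{12}$, so $p_+\neq p_-$. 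Since lex order is translation-invariant, the exponent $p-p'$ ranging over pairs in $I_{12}^2$ is maximized \emph{uniquely} at $(p_+,p_-)$, so the coefficient of $z^{p_+-p_-}$ in $g(z)g(z^{-1})=\sum_{p,p'\in I_{12}}z^{p-p'}$ is exactly $1$, with $p_+-p_-\neq 0$. On the left-hand side the only nonconstant contribution is $-\lambda f_2(z)$, whose $z^{p_+-p_-}$-coefficient is $-\lambda$ times a coefficient of $f_2$ lying in $\{0,1\}$, hence has absolute value $0$ or $|\lambda|>1$. This can never equal $1$, giving the contradiction.

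The main obstacle is item (iii): unlike (i) and (ii), it is \emph{not} a one-line consequence of the cardinality lemmas. Indeed Lemmas~\ref{lem:vap} and~\ref{lem:othervap} together leave open spurious parameter solutions (for instance $|I_{12}|$ a perfect square with $|I_{22}|=\sqrt{|I_{12}|}\,(|I_{12}|-1)$), so counting alone does not close the argument. What eliminates these is the finer polynomial structure, captured by the observation that the extreme difference of $I_{12}$ contributes coefficient exactly $1$ to the autocorrelation $g(z)g(z^{-1})$, while every available nonconstant coefficient on the other side is an integer multiple of $\lambda$ with $|\lambda|>1$. Setting up a term order so that this top coefficient is provably $1$ is the one genuinely new step; the remainder is bookkeeping.
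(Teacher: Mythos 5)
Your items (i) and (iii) follow the paper's proof essentially verbatim: (i) is the contrapositive of \eqref{e:lar}, and (iii) is exactly the paper's argument --- reduce to one of $I_{11},I_{22}$ being empty, observe that the lexicographically maximal exponent of the autocorrelation $\sum_{p,p'\in I_{12}}z^{p-p'}$ is attained uniquely and hence carries coefficient $1$, and match it against a coefficient of $-\lambda f_2$ (the paper concludes $-\lambda=1$ and then $|I_{12}|=\lambda^2=1$, contradicting (ii); you instead note $|\lambda|=\sqrt{|I_{12}|}>1$ directly --- same content). Where you genuinely diverge is (ii): the paper derives it from the degree identity \eqref{e:lar}, which with $p_{\max}(r)=p_{\min}(r)$ forces $k_{\max}(r)+k_{\max}'(r)=0$ for all $r$ and hence, by symmetry of $I_{11},I_{22}$, that both are empty, contradicting \eqref{e:i1122ge1}; you instead argue that $(f_1-\lambda)(f_2-\lambda)=1$ makes each factor a unit of $\C[z_1^{\pm1},\dots,z_d^{\pm1}]$, i.e.\ a monomial. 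Both work; yours is a clean self-contained ring-theoretic observation, the paper's reuses machinery already in place. One small repair is needed in your (ii): Lemma~\ref{lem:vap} gives $\lambda^2=|I_{12}|-|I_{11}\cap I_{22}|$, not $\lambda^2=|I_{12}|$, and at the point where you invoke it you have not yet shown $I_{11}\cap I_{22}=\emptyset$. The conclusion $\lambda\neq 0$ still holds: $I_{11}\cap I_{22}$ is a symmetric set avoiding $0$, hence of even cardinality, so $\lambda^2=1-|I_{11}\cap I_{22}|\ge 0$ forces $|I_{11}\cap I_{22}|=0$; alternatively, $\lambda=0$ would give $f_1f_2=1$, hence $I_{11}=\{m\}$ for a single $m\neq 0$, impossible for a symmetric set avoiding $0$. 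Either patch closes the gap.
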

For example, the honeycomb lattice has $I_{ii}=\emptyset$ and no flat bands. Note that this is not true for $\nu>2$. The graph in Figure~\ref{fig:deco} (right) has $I_{11}=I_{22}=\emptyset$ but has a flat band. The graph in Figure~\ref{fig:pyro2} has $I_{ii}=\emptyset$ for all $i$, but has flat bands.
\begin{proof}
We proved (i) in \eqref{e:lar}. For (ii), if $\Gamma$ has a flat band and $|I_{12}|=1$, then by \eqref{e:lar}, $k_{\max}(r)+k_{\min}(r)=0$, so $I_{11}=I_{22}=\emptyset$, contradicting \eqref{e:i1122ge1}. 
Thus, $\Gamma$ has no flat bands.

For (iii), suppose on the contrary that $\lambda$ is flat. By \eqref{e:i1122ge1}, $I_{11}$ and $I_{22}$ cannot both be empty, so we may assume, by symmetry, that $I_{11}\neq \emptyset$ and $I_{22}=\emptyset$.

Then $\lambda^2 - \lambda\sum_{k\in I_{11}} z^k = \sum_{p,p'\in I_{1,2}} z^{p-p'}$ for all $z$. On the RHS, if $d=1$, then the highest power $p_{\max}-p_{\min}$ appears only once. In general, the maximum power in lexicographic order appears only once. More precisely, we choose the power $n$ with highest $p_{\max}(1)-p_{\min}(1)$. This will in general be a set of powers. Among these elements, we choose the one with highest $p_{\max}(2)-p_{\min}(2)$, and so on. Then the coefficient of $z^n$ is $1$, and we must have $\lambda=-1$ on the LHS.

But by Lemma~\ref{lem:vap}, $\lambda^2 = |I_{12}|$, so $|I_{12}|=1$. This contradicts (ii).
\end{proof}

\subsection{Existence beyond single-cells}\label{sec:2bsc}
It is easy to generate flat bands which do not live on a single cell by taking $\Gamma$ such that $I_{11}=I_{22} = (I_{12}\setminus\{\ell\})-\ell$, for some integer $\ell$. For example, consider $I_{11}=I_{22}=\{\pm 1\}$ and $I_{12}=\{0,2\}$. This gives the graph in Figure~\ref{fig:exnonsy}. Here $\ell=1$, $A(\theta) = \begin{pmatrix} 2\cos 2\pi\theta& 1+\ee^{4\pi\ii\theta}\\ 1+\ee^{-4\pi\ii\theta}& 2\cos 2\pi\theta\end{pmatrix}$, with eigenvector $\begin{pmatrix} 1\\-\ee^{-2\pi\ii\theta}\end{pmatrix}$ for $\lambda=0$. We may construct an eigenvector for $\Gamma$ according to Proposition~\ref{prp:trigiffcomp}, here $\alpha_1(0)=1$, $\alpha_2(1)=-1$, $\alpha_k(m)=0$ otherwise.

\begin{figure}[h!]
\begin{center}
\setlength{\unitlength}{1cm}
\thicklines
\begin{picture}(1.3,1.2)(-1.1,-1.05)
   \put(-5,0){\line(1,0){10}}
	 \put(-5,-1){\line(1,0){10}}
	 \put(-5,0){\line(2,-1){2}}
	 \put(-3,-1){\line(0,1){1}}
	 \put(-3,0){\line(2,-1){2}}
	 \put(-1,-1){\line(0,1){1}}
	 \put(-1,0){\line(2,-1){2}}
	 \put(1,-1){\line(0,1){1}}
	 \put(1,0){\line(2,-1){2}}
	 \put(3,-1){\line(0,1){1}}
	 \put(3,0){\line(2,-1){2}}
	 \put(-1,-1){\circle*{.2}}
	 \put(-1,0){\circle*{.2}}
	 \put(-3,-1){\circle*{.2}}
	 \put(-3,0){\circle*{.2}}
	 \put(1,-1){\circle*{.2}}
	 \put(1,0){\circle*{.2}}
	 \put(3,-1){\circle*{.2}}
	 \put(3,0){\circle*{.2}}
	 	 \put(-2,-1){\circle*{.2}}
	 \put(-2,0){\circle*{.2}}
	 \put(-4,-1){\circle*{.2}}
	 \put(-4,0){\circle*{.2}}
	 \put(0,-1){\circle*{.2}}
	 \put(0,0){\circle*{.2}}
	 \put(2,-1){\circle*{.2}}
	 \put(2,0){\circle*{.2}}
	 \put(0.9,0.2){\small{$1$}}
	 \put(1.8,-1.5){\small{$-1$}}
	 \put(1.9,0.2){\small{$0$}}
	 \put(0.8,-1.5){\small{$0$}}
	 \put(-2,0){\line(2,-1){2}}
	 \put(0,0){\line(2,-1){2}}
	 \put(2,0){\line(2,-1){2}}
	 \put(-4,0){\line(2,-1){2}}
	  \put(-2,-1){\line(0,1){1}}
	 \put(0,-1){\line(0,1){1}}
	 \put(2,-1){\line(0,1){1}}
	 \put(-4,-1){\line(0,1){1}}
\end{picture}

\caption{Flat band $\lambda=0$ with $I_{12}$ non-symmetric and localized $\psi$.}\label{fig:exnonsy}
\end{center}
\end{figure}
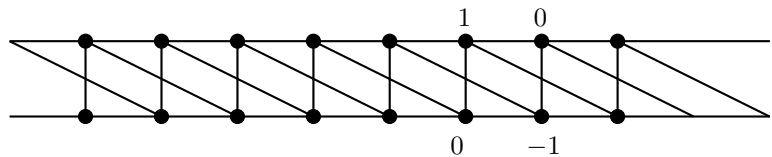

We notice however that such a graph is isomorphic to a graph $\Gamma'$ on which the flat band lives on a single cell; simply take $\phi:(v_1+k_\fa)\mapsto (v_1+\ell_\fa+k_\fa)$, $\phi:(v_2+k_\fa)\mapsto (v_2+k_\fa)$ and for edges, $\phi:\{u,u'\}\mapsto \{\phi(u),\phi(u')\}$. For the graph in Figure~\ref{fig:exnonsy}, this corresponds to a shearing of the top layer; the isomorphic graph $\Gamma'$ is the one of Figure~\ref{fig:boxnover} (right).

We may wonder if these are the only non-single-cell flat bands. Note that such $I_{12}$ are not symmetric if $\ell\neq 0$. Therefore, we take the special case where $I_{12}$ is symmetric and ask whether all flat bands are in $\cF_2^s$.

The answer is no. In fact, take $I_{11}=\{\pm 3\}$, $I_{22}=\{\pm 1\}$, $I_{12}=\{\pm 1,\pm 2\}$, so $A(z) = \begin{pmatrix} z^3+z^{-3}& z+z^{-1}+z^2+z^{-2}\\ z+z^{-1}+z^2+z^{-2}& z+z^{-1}\end{pmatrix}$. Then we have a flat band $\lambda=-2$ corresponding to the eigenvector $\begin{pmatrix} -2-z-z^{-1}\\ z+z^{-1}+z^2+z^{-2}\end{pmatrix}$. Using the recipe in Proposition~\ref{prp:trigiffcomp}, we construct a corresponding eigenvector for $\Gamma$ in Figure~\ref{fig:counterconj}.

\begin{figure}[h!]
\begin{center}
\setlength{\unitlength}{1cm}
\thicklines
\begin{picture}(1.3,1.6)(-1.1,-1.1)
	 \put(-7.5,-1){\line(1,0){13}}
	 \put(-5,0){\line(2,-1){2}}
	 \put(-5,0){\line(-2,-1){2}}
	 \put(-3,0){\line(2,-1){2}}
	 \put(-3,0){\line(-2,-1){2}}
	 \put(-1,0){\line(2,-1){2}}
	 \put(-1,0){\line(-2,-1){2}}
	 \put(1,0){\line(2,-1){2}}
	 \put(1,0){\line(-2,-1){2}}
	 \put(3,0){\line(2,-1){2}}
	 \put(3,0){\line(-2,-1){2}}
	 \put(-1,-1){\textcolor{red}{\circle*{.2}}}
	 \put(-1,0){\textcolor{red}{\circle*{.2}}}
	 \put(-3,-1){\circle*{.2}}
	 \put(-3,0){\circle*{.2}}
	 \put(1,-1){\textcolor{red}{\circle*{.2}}}
	 \put(1,0){\textcolor{red}{\circle*{.2}}}
	 \put(3,-1){\circle*{.2}}
	 \put(3,0){\circle*{.2}}
	 	 \put(-2,-1){\textcolor{red}{\circle*{.2}}}
	 \put(-2,0){\textcolor{red}{\circle*{.2}}}
	 \put(-4,-1){\circle*{.2}}
	 \put(-4,0){\circle*{.2}}
	 \put(0,-1){\textcolor{red}{\circle*{.2}}}
	 \put(0,0){\textcolor{red}{\circle*{.2}}}
	 \put(2,-1){\textcolor{red}{\circle*{.2}}}
	 \put(2,0){\textcolor{red}{\circle*{.2}}}
	    \put(4,-1){\circle*{.2}}
	 \put(4,0){\circle*{.2}}
	 \put(-5,-1){\circle*{.2}}
	 \put(-5,0){\circle*{.2}}
	 \put(2.9,0.2){\small{$0$}}
	 \put(1.9,0.2){\small{$0$}}
	 \put(0.7,0.2){\small{$-1$}}
	 \put(-0.3,0.2){\small{$-2$}}
	 \put(-1.3,0.2){\small{$-1$}}
	 \put(-2.1,0.2){\small{$0$}}
	 \put(-3.1,0.2){\small{$0$}}
	 \put(2.95,-1.5){\small{$0$}}
	 \put(1.95,-1.5){\small{$1$}}
	 \put(0.95,-1.5){\small{$1$}}
	 \put(-0.05,-1.5){\small{$0$}}
	 \put(-1.05,-1.5){\small{$1$}}
	 \put(-2.05,-1.5){\small{$1$}}
	 \put(-3.05,-1.5){\small{$0$}}
	 \put(-2,0){\line(2,-1){2}}
	 \put(0,0){\line(2,-1){2}}
	 \put(2,0){\line(2,-1){2}}
	 \put(-4,0){\line(2,-1){2}}
	 \put(-2,0){\line(-2,-1){2}}
	 \put(0,0){\line(-2,-1){2}}
	 \put(2,0){\line(-2,-1){2}}
	 \put(-4,0){\line(-2,-1){2}}
	     \put(-2,0){\line(1,-1){1}}
	 \put(0,0){\line(1,-1){1}}
	 \put(2,0){\line(1,-1){1}}
	 \put(-4,0){\line(1,-1){1}}
	 \put(-2,0){\line(-1,-1){1}}
	 \put(0,0){\line(-1,-1){1}}
	 \put(2,0){\line(-1,-1){1}}
	 \put(-4,0){\line(-1,-1){1}}
	 \put(3,0){\line(1,-1){1}}
	 \put(3,0){\line(-1,-1){1}}
	 \put(-5,0){\line(1,-1){1}}
	 \put(-5,0){\line(1,-1){1}}
	    \put(-3,0){\line(1,-1){1}}
	 \put(-3,0){\line(-1,-1){1}}
	 \put(-1,0){\line(1,-1){1}}
	 \put(-1,0){\line(-1,-1){1}}
	 \put(1,0){\line(1,-1){1}}
	 \put(1,0){\line(-1,-1){1}}
	 \put(-5,0){\line(-1,-1){1}}
	 \put(4,0){\line(-1,-1){1}}
	 \put(4,0){\line(-2,-1){2}}
	 \put(5,0){\line(-1,-1){1}}
	 \put(5,0){\line(-2,-1){2}}
	 \put(-6,0){\line(2,-1){2}}
	 \put(-6,0){\line(1,-1){1}}
	 \put(-7,0){\line(2,-1){2}}
	     \put(-5,0){\line(2,1){1}}
	     \put(-4,0.5){\line(1,0){1}}
	     \put(-3,0.5){\line(2,-1){1}}
	     \put(-4,0){\textcolor{blue}{\line(2,1){1.5}}}
	     \put(-2.5,0.75){\textcolor{blue}{\line(2,-1){1.5}}}
	  \put(-3,0){\line(2,1){1}}
	     \put(-2,0.5){\line(1,0){1}}
	     \put(-1,0.5){\line(2,-1){1}}
	       \put(-2,0){\textcolor{blue}{\line(2,1){1.5}}}
	       \put(-0.5,0.75){\textcolor{blue}{\line(2,-1){1.5}}}
	    \put(-1,0){\line(2,1){1}}
	     \put(0,0.5){\line(1,0){1}}
	     \put(1,0.5){\line(2,-1){1}}
	        \put(0,0){\textcolor{blue}{\line(2,1){1.5}}}
	        \put(1.5,0.75){\textcolor{blue}{\line(2,-1){1.5}}}
	      \put(1,0){\line(2,1){1}}
	     \put(2,0.5){\line(1,0){1}}
	     \put(3,0.5){\line(2,-1){1}}
	        \put(2,0){\textcolor{blue}{\line(2,1){1.5}}}
	        \put(3.5,0.75){\textcolor{blue}{\line(2,-1){1.5}}}
	  \put(-6,0){\textcolor{blue}{\line(2,1){1.5}}}
	        \put(-4.5,0.75){\textcolor{blue}{\line(2,-1){1.5}}}
\end{picture}

\caption{Flat band $\lambda=-2$ with localized eigenvector on $5$ cells (in red). The edge colors are only there to slightly improve visibility.}\label{fig:counterconj}
\end{center}
\end{figure}
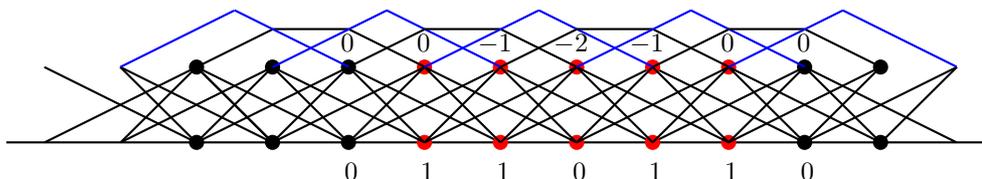

In contrast to the single-cell case, here $-2$ is not an eigenvalue of a graph on $2$ vertices.

It is then natural to ask: what are the flat bands that can be generated by $\nu=2$~? We know they are always integers, so not all flat bands can appear in this fashion. Are there other values of $\lambda$ besides $0,-1,-2$~? If not, this would induce some constraints on the possible choices of $I_{ij}$.
%

In this example, $|I_{12}|=4$. We know that if $|I_{12}|=1$, then there are no flat bands by Lemma~\ref{lem:nu2nfb}. If $|I_{12}|=2$ or $|I_{12}|=3$, then the only flat bands that can appear are those living on single-cells, up to the isomorphism explained before. This fact is not entirely obvious, but since we do not need it, let us simply mention that the main idea is to combine Lemma~\ref{lem:vap} and Lemma~\ref{lem:othervap} to deduce constraints on the sets $I_{ii}$, then analyze the degrees and coefficients of the possible characteristic polynomials.

\section{Flat bands from symmetry}\label{sec:symm}

The basic idea in \cite{RMS} is that the appearance of flat bands can sometimes be attributed to the presence of symmetries in $\Gamma$, such as in Figure~\ref{fig:boxnover}. We should be a bit careful by what we mean by symmetry, since the infinite ladder in Figure~\ref{fig:carte} is very symmetric yet features no flat bands. We explain this in some detail in this section and compare our results with \cite{RMS} at the end.

Given an infinite $\Gamma$ and its $V_f$, the symmetries that interest us are those permuting some vertices of $V_f$ while keeping those outside of $V_f$ fixed. Importantly, when we permute vertices, we exchange \emph{the whole star attached to them}. See Figures~\ref{fig:pyro2} and~\ref{fig:ladnot} for illustration. We are interested in the case when $\Gamma$ remains the same after such permutation.

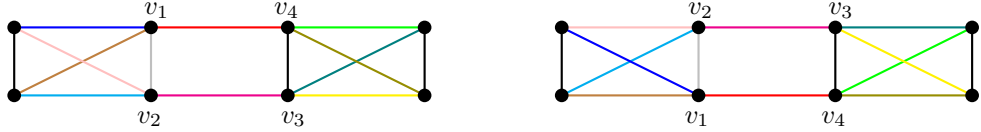
\begin{figure}[h!]
\begin{center}
\setlength{\unitlength}{0.9cm}
\thicklines
\begin{picture}(1.3,1.3)(-1.1,-1.1)
   \put(-7,0){\textcolor{blue}{\line(1,0){2}}}
   \put(-5,0){\textcolor{red}{\line(1,0){2}}}
   \put(-3,0){\textcolor{green}{\line(1,0){2}}}
   \put(-7,-1){\textcolor{cyan}{\line(1,0){2}}}
   \put(-5,-1){\textcolor{magenta}{\line(1,0){2}}}
   \put(-3,-1){\textcolor{yellow}{\line(1,0){2}}}
	 \put(-7,-1){\line(0,1){1}}
	 \put(7,-1){\line(0,1){1}}
	 \put(-7,-1){\textcolor{brown}{\line(2,1){2}}}
	 \put(-7,0){\textcolor{pink}{\line(2,-1){2}}}
	 \put(-5,-1){\textcolor{lightgray}{\line(0,1){1}}}
	 \put(-3,-1){\line(0,1){1}}
	 \put(-3,-1){\textcolor{teal}{\line(2,1){2}}}
	 \put(-3,0){\textcolor{olive}{\line(2,-1){2}}}
	 \put(-1,-1){\line(0,1){1}}
	 \put(1,-1){\line(0,1){1}}
		\put(1,-1){\textcolor{cyan}{\line(2,1){2}}}
	 \put(1,0){\textcolor{blue}{\line(2,-1){2}}}
	 \put(3,-1){\textcolor{lightgray}{\line(0,1){1}}}
	    \put(5,-1){\line(0,1){1}}
	    \put(5,-1){\textcolor{green}{\line(2,1){2}}}
	    \put(5,0){\textcolor{yellow}{\line(2,-1){2}}}
    \put(1,0){\textcolor{pink}{\line(1,0){2}}}
    \put(3,0){\textcolor{magenta}{\line(1,0){2}}}
    \put(5,0){\textcolor{teal}{\line(1,0){2}}}
    \put(1,-1){\textcolor{brown}{\line(1,0){2}}}
    \put(3,-1){\textcolor{red}{\line(1,0){2}}}
    \put(5,-1){\textcolor{olive}{\line(1,0){2}}}
	 \put(-1,-1){\circle*{.2}}
	 \put(-1,0){\circle*{.2}}
	 \put(-3,-1){\circle*{.2}}
	 \put(-3,0){\circle*{.2}}
	 \put(-5,-1){\circle*{.2}}
	 \put(-5,0){\circle*{.2}}
	 \put(1,-1){\circle*{.2}}
	 \put(1,0){\circle*{.2}}
	 \put(3,-1){\circle*{.2}}
	 \put(3,0){\circle*{.2}}
	 \put(5,-1){\circle*{.2}}
	 \put(5,0){\circle*{.2}}
	 \put(-7,-1){\circle*{.2}}
	 \put(-7,0){\circle*{.2}}
	  \put(7,0){\circle*{.2}}
	   \put(7,-1){\circle*{.2}}
	      \put(-5.1,0.2){\small{$v_1$}}
	 \put(-3.2,0.2){\small{$v_4$}}
	 \put(-5.2,-1.4){\small{$v_2$}}
	 \put(-3.1,-1.4){\small{$v_3$}}
	     \put(2.85,0.2){\small{$v_2$}}
	 \put(4.9,0.2){\small{$v_3$}}
	 \put(2.8,-1.4){\small{$v_1$}}
	 \put(4.8,-1.4){\small{$v_4$}}
\end{picture}
\caption{The $1d$ pyrochlore is preserved by the symmetry $v_1\leftrightarrow v_2$, $v_3\leftrightarrow v_4$. Vertices outside $V_f$ are kept fixed. We colored the edges to show their position after performing the symmetry, but they carry no weight.}\label{fig:pyro2}
\end{center}
\end{figure}

\begin{figure}[h!]
\begin{center}
\setlength{\unitlength}{0.9cm}
\thicklines
\begin{picture}(1.3,1.3)(-0.8,-1.1)
   \put(-7,0){\line(1,0){1}}
   \put(-6,0){\textcolor{blue}{\line(1,0){2}}}
   \put(-4,0){\textcolor{red}{\line(1,0){2}}}
   \put(-2,0){\line(1,0){1}}
     \put(-7,-1){\line(1,0){1}}
	 \put(-6,-1){\textcolor{cyan}{\line(1,0){2}}}
	 \put(-4,-1){\textcolor{green}{\line(1,0){2}}}
	 \put(-2,-1){\line(1,0){1}}
	 \put(-6,-1){\line(0,1){1}}
	 \put(-4,-1){\textcolor{brown}{\line(0,1){1}}}
     \put(-2,-1){\line(0,1){1}}
     \put(6,-1){\line(1,0){1}}
     \put(6,0){\line(1,0){1}}
     \put(4,-1){\textcolor{brown}{\line(0,1){1}}}
     \put(4,-1){\textcolor{red}{\line(2,1){2}}}
     \put(4,-1){\textcolor{blue}{\line(-2,1){2}}}
     \put(4,0){\textcolor{green}{\line(2,-1){2}}}
     \put(4,0){\textcolor{cyan}{\line(-2,-1){2}}}
       \put(6,-1){\line(0,1){1}}
       \put(2,-1){\line(0,1){1}}
	 \put(-4,-1){\circle*{.2}}
	 \put(-4,0){\circle*{.2}}
	 \put(-6,-1){\circle*{.2}}
	 \put(-6,0){\circle*{.2}}
	 \put(-2,-1){\circle*{.2}}
	 \put(-2,0){\circle*{.2}}
	 \put(-4.1,0.2){\small{$v_1$}}
	 \put(-4.2,-1.45){\small{$v_2$}}
	 \put(1,0){\line(1,0){1}}
	 \put(1,-1){\line(1,0){1}}
	 \put(4,-1){\circle*{.2}}
	 \put(4,0){\circle*{.2}}
	 \put(2,-1){\circle*{.2}}
	 \put(2,0){\circle*{.2}}
	 \put(6,-1){\circle*{.2}}
	 \put(6,0){\circle*{.2}}
	 \put(3.9,0.2){\small{$v_2$}}
	 \put(3.8,-1.5){\small{$v_1$}}
\end{picture}
\caption{The ladder is not preserved by exchanging $v_1$ and $v_2$.}\label{fig:ladnot}
\end{center}
\end{figure}
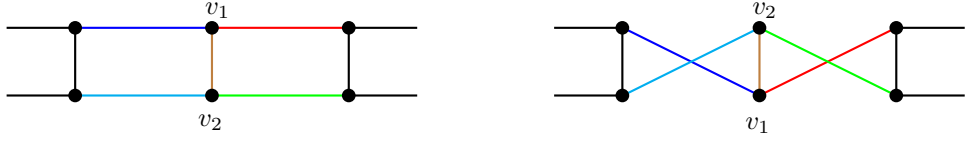

The figures motivate the following definition. Let $\epsilon_{i,j}=1$ if $v_i\sim v_j$ and $\epsilon_{i,j}=0$ otherwise. We say that $\Gamma$ has a \emph{local symmetry} if there exists a nontrivial permutation $\phi$ of $V_f$ such that for any $i,j=1,\dots,\nu$,
\begin{equation}\label{e:locsym}
I_{i,j}\setminus\{0\}=I_{\phi(i),j}\setminus\{0\} \qquad \text{and} \qquad  \epsilon_{i,j}=\epsilon_{\phi(i),\phi(j)}\,.
\end{equation}

If $\Gamma$ has a symmetry preserving it as in the figures, then \eqref{e:locsym} must be satisfied, because if $k\in I_{ij}\setminus\{0\}$, then $v_i\sim v_j+k_\fa$, so after the symmetry, since $v_j+k_\fa$ stays fixed, we must have $v_{\phi(i)}\sim v_j+k_\fa$, i.e. $k\in I_{\phi(i),j}\setminus\{0\}$. Similarly $I_{\phi(i),j}\setminus\{0\}\subseteq I_{ij}\setminus\{0\}$ since $\phi^m=\id$ for some $m$. On the other hand, if $\epsilon_{i,j}=1$, then $v_i\sim v_j$, so after performing the symmetry, we should have $v_{\phi(i)}\sim v_{\phi(j)}$, and vice versa.

In particular, the ladder does not satisfy this condition under $v_1\leftrightarrow v_2$ since $I_{11}=I_{22}=\{\pm 1\}$ while $I_{12}=\{0\} = I_{21}$. However, the pyrochlore satisfies it under $\phi(v_1)=v_2$, $\phi(v_2)=v_1$, $\phi(v_3)=v_4$, $\phi(v_4)=v_3$. Here $I_{ii}=\emptyset$ for all $i$, $I_{12}=I_{34}=\{0\}$, $I_{13}=I_{24}=\{-1\}$, $I_{14}=I_{23}=\{0,-1\}$, and $I_{ji}=-I_{ij}$ for the rest as usual. Also, $\epsilon_{12}=\epsilon_{14}=\epsilon_{21}=\epsilon_{23}=\epsilon_{32}=\epsilon_{34}=\epsilon_{43}=1$, the rest are zero. One checks that indeed \eqref{e:locsym} is satisfied.

We have shown that if $\Gamma$ is preserved by a symmetry, then it is necessary for \eqref{e:locsym} to be satisfied. We now show that \eqref{e:locsym} is enough to get some flat bands. Actually, a weaker condition on $\epsilon_{ij}$ suffices, see Remark~\ref{rem:epsicirc}.

We first note that this symmetry concept is a generalization of the neighborhood condition that we discussed before:

\begin{lem}\label{lem:implies2}
We have $\cN_{v_i}\setminus\{v_j\} = \cN_{v_j}\setminus\{v_i\}$ iff \eqref{e:locsym} is satisfied for the permutation $\phi(v_i)=v_j$, $\phi(v_j)=v_i$, $\phi(v_r)=v_r$ for $r\neq i,j$.
\end{lem}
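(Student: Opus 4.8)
The plan is to reduce both sides of the asserted equivalence to explicit conditions on the index sets and to match them. On one side, Lemma~\ref{lem:equi2} already rewrites $\cN_{v_i}\setminus\{v_j\}=\cN_{v_j}\setminus\{v_i\}$ as the conjunction of (a) $I_{ii}=I_{jj}=I_{ji}\setminus\{0\}=I_{ij}\setminus\{0\}$ and (b) $I_{ir}=I_{jr}$ for all $r\neq i,j$. On the other side, I would unfold the local symmetry condition \eqref{e:locsym} for the transposition $\phi$ with $\phi(v_i)=v_j$, $\phi(v_j)=v_i$ and $\phi(v_r)=v_r$ otherwise, into its two requirements and compare against (a)--(b).

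First I would specialize $I_{a,b}\setminus\{0\}=I_{\phi(a),b}\setminus\{0\}$. For $a\notin\{i,j\}$ it is vacuous since $\phi(a)=a$. For $a=i$ it reads $I_{i,b}\setminus\{0\}=I_{j,b}\setminus\{0\}$ for every $b$; taking $b=i$, $b=j$, and $b=r\neq i,j$, and recalling $0\notin I_{ii},I_{jj}$, this yields respectively $I_{ii}=I_{ji}\setminus\{0\}$, $I_{ij}\setminus\{0\}=I_{jj}$, and $I_{ir}\setminus\{0\}=I_{jr}\setminus\{0\}$; the case $a=j$ produces nothing new by symmetry. Next I would unfold $\epsilon_{a,b}=\epsilon_{\phi(a),\phi(b)}$. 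Since $\epsilon_{ab}=1$ exactly when $0\in I_{ab}$, the only nontrivial instances are $a\in\{i,j\}$ with $b=r\neq i,j$, giving $\epsilon_{ir}=\epsilon_{jr}$, that is $0\in I_{ir}\iff 0\in I_{jr}$. Combining, the $\epsilon$-equations upgrade $I_{ir}\setminus\{0\}=I_{jr}\setminus\{0\}$ to the full equality $I_{ir}=I_{jr}$, which is exactly (b). To reach the four-way chain (a) I would invoke the index symmetries \eqref{e:indicessym}: from $I_{ji}=-I_{ij}$ one gets $I_{ji}\setminus\{0\}=-(I_{ij}\setminus\{0\})$, so $I_{ii}=I_{ji}\setminus\{0\}=-(I_{ij}\setminus\{0\})$, and the self-symmetry $I_{ii}=-I_{ii}$ then forces $I_{ii}=I_{ij}\setminus\{0\}$. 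Together with $I_{jj}=I_{ij}\setminus\{0\}$ and $I_{ii}=I_{ji}\setminus\{0\}$ this gives the whole chain (a), establishing the forward implication. The converse runs the same computation backwards: assuming (a) and (b), the displayed relations verify $I_{i,b}\setminus\{0\}=I_{j,b}\setminus\{0\}$ for all $b$ and $\epsilon_{ir}=\epsilon_{jr}$, i.e. \eqref{e:locsym} for $\phi$.

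The one point that genuinely needs care — and the main obstacle to a naive ``obvious'' argument — is that the $I$-part of \eqref{e:locsym} only constrains the sets after deleting $0$, so by itself it cannot detect whether $v_i\sim v_r$ within the base cell. That missing information is precisely what the $\epsilon$-part supplies, and it is exactly the gap between $I_{ir}\setminus\{0\}=I_{jr}\setminus\{0\}$ and the full equality $I_{ir}=I_{jr}$ demanded by Lemma~\ref{lem:equi2}. Tracking this $0$-membership bookkeeping, and using \eqref{e:indicessym} to promote the two ``half'' relations $I_{ii}=I_{ji}\setminus\{0\}$ and $I_{jj}=I_{ij}\setminus\{0\}$ into the symmetric four-way chain (a), are the only steps that require attention; everything else is a routine case check over the indices $a,b\in\{1,\dots,\nu\}$.
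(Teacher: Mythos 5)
Your proposal is correct and takes the same route as the paper: the paper's proof is the one-line remark that the lemma ``follows from Lemma~\ref{lem:equi2}'', and what you have written is precisely the unfolding of that reduction — translating both sides into conditions on the $I_{ab}$ via Lemma~\ref{lem:equi2} and \eqref{e:indicessym}, with the $\epsilon$-part of \eqref{e:locsym} supplying the $0$-membership that the $I_{ab}\setminus\{0\}$ equalities omit. No gaps; you have simply made explicit the bookkeeping the authors left to the reader.
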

\begin{proof}
This follows from Lemma~\ref{lem:equi2}.
\end{proof}

We next make the following observation.

\begin{lem}\label{lem:athetablocs}
Suppose \eqref{e:locsym} is satisfied for a permutation $\phi$. Write $\phi$ as a product of $r$ cycles of sizes $k_j$ (we allow $k_j=1$ i.e. fixed elements and allow $k_i=k_j$). Then up to renumbering the vertices of $V_f$, the Floquet matrix $A(\theta)$ acquires an $r\times r$ block structure, with blocks $D_{st}(\theta)$ of size $k_s\times k_t$. Moreover, $D_{st}(\theta) = b_{st}(\theta)J_{st} + C_{st}$, where $J_{st}$ is the all-one matrix of size $k_s\times k_t$ and $C_{st}$ is circulant, independent of $\theta$, for $s,t\le r$.
\end{lem}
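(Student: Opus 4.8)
The plan is to start from the decomposition $A(\theta) = \cA_{V_f} + B(\theta)$ already used in the proof of Theorem~\ref{thm:singlecellcriter}, where $B(\theta) = (b_{ij}(\theta))$ has entries $b_{ij}(\theta) = \sum_{k \in I_{ij}\setminus\{0\}} \ee^{2\pi\ii\theta\cdot k}$ and $\cA_{V_f} = (\epsilon_{ij})$, so that $h_{ij}(\theta) = \epsilon_{ij} + b_{ij}(\theta)$. Both halves of \eqref{e:locsym} enter: the statement about $I_{ij}\setminus\{0\}$ controls $B(\theta)$, while the statement about $\epsilon_{ij}$ controls $\cA_{V_f}$. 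First I would renumber the vertices of $V_f$ so that each cycle of $\phi$ occupies a contiguous block of indices, and within the $s$-th cycle the vertices $v_{s,0},\dots,v_{s,k_s-1}$ are listed in cyclic order, i.e. $\phi(v_{s,a}) = v_{s,a+1}$ with indices read mod $k_s$. With this ordering $A(\theta)$ splits into blocks $D_{st}(\theta)$ of size $k_s\times k_t$, and it remains to identify each block.

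The key algebraic step is to promote the one-sided invariance in \eqref{e:locsym} to a two-sided one for $B(\theta)$. The first relation in \eqref{e:locsym} reads $I_{ij}\setminus\{0\} = I_{\phi(i),j}\setminus\{0\}$, which immediately gives $b_{ij} = b_{\phi(i),j}$, i.e. invariance of $b$ under shifting the row index along a cycle. To obtain the companion identity $b_{ij} = b_{i,\phi(j)}$, I would apply the same relation to the pair $(j,i)$, getting $I_{j,i}\setminus\{0\} = I_{\phi(j),i}\setminus\{0\}$, and then invoke the index symmetry $I_{ji} = -I_{ij}$ from \eqref{e:indicessym} (negation fixes $0$) to rewrite both sides, yielding $I_{ij}\setminus\{0\} = I_{i,\phi(j)}\setminus\{0\}$ and hence $b_{ij} = b_{i,\phi(j)}$. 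Combining the two invariances shows $b_{ij}(\theta)$ is constant whenever $i$ stays in one cycle and $j$ in another; calling this common value $b_{st}(\theta)$ for $i$ in cycle $s$ and $j$ in cycle $t$ exhibits the $B$-part of $D_{st}(\theta)$ as $b_{st}(\theta) J_{st}$.

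For the remaining piece $C_{st} := (\epsilon_{v_{s,a},v_{t,b}})_{a,b}$ I would use the second relation $\epsilon_{ij} = \epsilon_{\phi(i),\phi(j)}$. In the chosen cyclic ordering this reads $(C_{st})_{ab} = (C_{st})_{a+1,b+1}$, with $a$ taken mod $k_s$ and $b$ mod $k_t$; this is precisely the (possibly rectangular) circulant property, since the entry then depends only on the orbit of $(a,b)$ under simultaneous cyclic shift, equivalently on $(b-a)\bmod\gcd(k_s,k_t)$, which reduces to the usual circulant structure when $k_s=k_t$. On the diagonal blocks $C_{ss}$ has vanishing diagonal because $\epsilon_{ii}=0$ (no self-loops), consistently with $0\notin I_{ii}$. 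Assembling the two contributions gives $D_{st}(\theta) = b_{st}(\theta) J_{st} + C_{st}$, as claimed.

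The only genuinely delicate point is the derivation of the right-invariance $b_{ij}=b_{i,\phi(j)}$: the hypothesis \eqref{e:locsym} is phrased one-sidedly, and a naive reading yields only that the $B$-part is constant down each column within a row-cycle, not across the whole block, so the all-one structure $J_{st}$ would fail without additional input. The index-negation symmetry \eqref{e:indicessym} is exactly what supplies the missing half; once both invariances are in hand the block decomposition is routine. I would also take care to state the circulant notion for rectangular blocks precisely, as above, so that the case $k_s\neq k_t$ is unambiguous.
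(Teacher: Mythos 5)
Your proposal is correct and follows essentially the same route as the paper: decompose $h_{ij}=\epsilon_{ij}+b_{ij}$, renumber along the cycles, get column-invariance of $b$ directly from \eqref{e:locsym}, recover row-invariance from the index symmetry \eqref{e:indicessym}, and read off the circulant structure of $(\epsilon_{ij})$ with the wrap-around. The only cosmetic difference is that you derive $b_{ij}=b_{i,\phi(j)}$ by negating index sets, whereas the paper passes through the equivalent Hermitian relation $b_{ji}=\overline{b_{ij}}$.
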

\begin{proof}
Let $\phi=(i_1^{(1)}i_2^{(1)}\cdots i_{k_1}^{(1)})(i_1^{(2)}\cdots i_{k_2}^{(2)})\cdots(i_1^{(r)}\cdots i_{k_r}^{(r)})$, and $k_0:=0$.

By renumbering the vertices of $V_f$, we can replace this cumbersome permutation by $(1,2,\dots, k_1)(k_1+1,\dots,k_2)\cdots(k_{r-1}+1,\dots,k_{r})$. The renumbering is simply $v_m'= v_{i_m^{(1)}}$ and $v_{k_n+m}' = v_{i_m^{(n+1)}}$ for $n\geq 1$. These cycles induce the block structure of $A(\theta)$.

Let $b_{ij}(\theta) = \sum_{I_{ij}\setminus \{0\}} \ee^{2\pi\ii k\cdot \theta}$. Then $h_{ij}(\theta) = \epsilon_{ij}+b_{ij}(\theta)$. Condition \eqref{e:locsym} entails that the column of each block $D_{st}(\theta)$ has a fixed dependence on $\theta$. Indeed, $I_{ij}\setminus \{0\} = I_{\phi(i),j}\setminus\{0\} = I_{i+1,j}\setminus \{0\}$ by the cycle structure, where $i+1 := k_{s-1}+1$ if $i=k_s$. So $b_{ij}(\theta) = b_{i+1,j}(\theta)$ as asserted, for $i=k_{s-1}+1,\dots,k_s$ and $j=k_{t-1}+1,\dots,k_t$. The same holds for $D_{ts}(\theta)$, so we get $b_{ji}(\theta) = b_{j+1,i}(\theta)$ for $j=k_{t-1}+1,\dots,k_t$ and $i = k_{s-1}+1,\dots,k_s$. But $h_{ji} = \overline{h_{ij}}$, so $b_{ji} = \overline{b_{ij}}$. Thus, $b_{ij} = \overline{b_{ji}} = \overline{b_{j+1,i}} = b_{i,j+1}$. This shows that the rows have the same $\theta$ dependence. Thus, all entries of $D_{st}(\theta)$ take the form $b_{i_0j_0}(\theta)+\epsilon_{ij}$ for some fixed $i_0,j_0$.

It remains to show that $C_{st} = (\epsilon_{ij})$ is circulant. Condition \eqref{e:locsym} says that $\epsilon_{ij} = \epsilon_{i+1,j+1}$ in each block, that is, diagonals are constant and we have a Toeplitz structure. But $C_{st}$ is in fact circulant using $\epsilon_{i,k_t} = \epsilon_{i+1,k_{t-1}+1}$ and $\epsilon_{k_s,j} = \epsilon_{k_{s-1}+1,j+1}$.
\end{proof}

\begin{thm}\label{thm:symflat}
Under the assumptions of Lemma~\ref{lem:athetablocs}, $\Gamma$ has at least $\nu-r$ single-cell flat bands.
\end{thm}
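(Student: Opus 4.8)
The plan is to produce a $(\nu-r)$-dimensional subspace of $\theta$-independent eigenvectors of $A(\theta)$; by the last assertion of Proposition~\ref{prp:trigiffcomp}, each $\theta$-independent eigenvector yields a single-cell flat band, and having $\nu-r$ linearly independent such vectors forces at least $\nu-r$ of the bands $E_j(\theta)$ to be constant.

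I would first set up notation adapted to Lemma~\ref{lem:athetablocs}. Splitting $\C^\nu=\bigoplus_{s=1}^r\C^{k_s}$ along the $r$ cycles of $\phi$ and writing $f=(f^{(1)},\dots,f^{(r)})$, define
\[
W=\Big\{f:\ \mathbf{1}_{k_t}^{\top}f^{(t)}=0\ \text{for all } t=1,\dots,r\Big\},
\]
where $\mathbf{1}_{k_t}$ is the all-one vector. Then $\dim W=\sum_{s=1}^r(k_s-1)=\nu-r$, and $W^{\perp}=\mathrm{span}\{e_1,\dots,e_r\}$, with $e_s$ the indicator of the $s$-th block.

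The central computation is that $A(\theta)$ restricted to $W$ loses its $\theta$-dependence. By Lemma~\ref{lem:athetablocs}, $D_{st}(\theta)=b_{st}(\theta)J_{st}+C_{st}$ with $J_{st}=\mathbf{1}_{k_s}\mathbf{1}_{k_t}^{\top}$, so for $f\in W$ one has $J_{st}f^{(t)}=(\mathbf{1}_{k_t}^{\top}f^{(t)})\mathbf{1}_{k_s}=0$, whence
\[
(A(\theta)f)^{(s)}=\sum_{t=1}^r C_{st}f^{(t)}=(Cf)^{(s)},\qquad C:=(C_{st})=\cA_{V_f},
\]
independently of $\theta$. I then need that $W$ is $C$-invariant. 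For this I would use that the second condition in \eqref{e:locsym} says exactly that $\phi$ is an automorphism of $(V_f,\cA_{V_f})$, so $C$ commutes with the permutation matrix $P_{\phi}$. Since $W^{\perp}$ is precisely the fixed space of $P_{\phi}$, it is $C$-invariant, and as $C$ is real symmetric, its orthogonal complement $W$ is $C$-invariant too.

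To conclude, since $C|_W$ is symmetric I would pick an orthonormal eigenbasis $f_1,\dots,f_{\nu-r}$ of $W$ with $Cf_m=\lambda_m f_m$. Each $f_m\in W$ is independent of $\theta$ and satisfies $A(\theta)f_m=Cf_m=\lambda_m f_m$ for every $\theta$, so by the special case $\Lambda_p=\{0\}$ of Proposition~\ref{prp:trigiffcomp} the value $\lambda_m$ is a single-cell flat band with eigenvector supported in $V_f$. These $\nu-r$ linearly independent $\theta$-free eigenvectors give at least $\nu-r$ flat bands (with multiplicity), as claimed. The only genuinely delicate point is the $C$-invariance of $W$; the rest is bookkeeping on top of Lemma~\ref{lem:athetablocs}.
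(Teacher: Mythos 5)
Your proof is correct and follows essentially the same route as the paper: the same decomposition of $\C^\nu$ into the span of the block indicators and its orthogonal complement, with the $\theta$-dependence killed on the zero-sum space because each block of $A(\theta)$ is a $\theta$-dependent multiple of $J_{st}$ plus a constant circulant part. The only cosmetic differences are that you work directly on the $(\nu-r)$-dimensional complement and obtain its invariance from the commutation of $\cA_{V_f}$ with $P_\phi$, whereas the paper first shows the $r$-dimensional indicator space is $A(\theta)$-invariant via constant row sums and then passes to the complement; both are valid instances of the same idea.
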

\begin{proof}
Let $A(\theta) = (D_{ij}(\theta))_{i,j\le r}$ be in the block structure provided by Lemma~\ref{lem:athetablocs}. Consider the partition $\{X_1,\dots,X_r\}$ of $V_f$ induced by $\phi$, where $X_j$ contains the elements of the $j$-th cycle. It follows from Lemma~\ref{lem:athetablocs} that
\begin{equation}\label{e:cols}
\mathrm{Col}_k(A(\theta)) = \sum_{i=1}^r b_{ik}(\theta)\mathbf{1}_{X_i} + C_k \,,
\end{equation}
where $C_k$ is a column vector independent of $\theta$ and $b_{ik}=b_{ik'}$ if $k,k'$ are in the same block.

Let $Y_i = \frac{1}{\sqrt{|X_i|}}\mathbf{1}_{X_i}$ for $i=1,\dots,r$. Clearly the $Y_i$'s are orthonormal. Complete this with $Y_i$, $i=r+1,\dots,\nu$, as an orthonormal basis of $\ell^2(V_f)$. We now write $A(\theta)$ in the basis $\{Y_i\}_{i=1}^\nu$, in other words, the matrix elements become $a_{ij} = \langle Y_i, A(\theta) Y_j\rangle$. 

Let $W$ be the subspace generated by $Y_1,\dots,Y_r$. Then $W$ is invariant under $A(\theta)$. In fact, if $v\in X_i$, then $(A(\theta)\mathbf{1}_{X_j})(v) = \sum_{p\in X_j} A(\theta)_{v,p}$ is the $v$-th row sum of the block $D_{ij}(\theta)$, which is equal to some $d_{ij}(\theta)$ independent of $v$ since $D_{ij}(\theta)$ is circulant.
Thus, $A(\theta)\mathbf{1}_{X_j} = \sum_{i=1}^r d_{ij}(\theta)\mathbf{1}_{X_i}\in W$ as required.

It follows that $W^\bot$ is invariant as well. So $\langle Y_i,A(\theta)Y_j\rangle = 0$ for ($i\le r\wedge j>r$) or ($i>r\wedge j\le r$) i.e. $A(\theta)$ takes a block-diagonal form of sizes $r$ and $\nu-r$, respectively in the basis $\{Y_k\}_{k\le\nu}$ and $\sigma(A(\theta))$ is the union of the respective spectra. The eigenvectors $\phi_p$ of the submatrices are also eigenvectors $\tilde{\phi}_p$ for $A(\theta)$ if we extend $\phi_p$ by zero.

Since $Mf = \sum_{k=1}^m f(k)\mathrm{Col}_k(M)$ for an $m\times m$ matrix $M$, then for $i,j>r$ we have by \eqref{e:cols}, $\langle Y_i, A(\theta)Y_j\rangle = \sum_{k=1}^\nu Y_j(k) \langle Y_i, \mathrm{Col}_k(A(\theta))\rangle = \sum_{k=1}^\nu Y_j(k)\langle Y_i,C_k\rangle = \langle Y_i, CY_j\rangle$.
As the submatrix $(\langle Y_i,CY_j\rangle)_{i,j>r}$ is independent of $\theta$, so are its eigenvalues and eigenvectors $\{\mu_p,\phi_p\}_{p=1}^{\nu-r}$. Now the matrix $A(\theta)$ in the basis $\{Y_k\}$ is $\tilde{A}(\theta) = Y^\ast A(\theta)Y$, where $Y$ is the unitary matrix of columns $Y_k$. We showed that $\tilde{A}(\theta)\tilde{\phi}_p=\mu_p\tilde{\phi}_p$, so $A(\theta)Y \tilde{\phi}_p = \mu_p Y \tilde{\phi}_p$ for all $\theta$, and $Y \tilde{\phi}_p=\sum_{k>r}\tilde{\phi}_p(k)Y_k$ is independent of $\theta$. So $\mu_p$ is of single-cell type.
\end{proof}

\begin{exa}\label{exa:pyro}
The graph in Figure~\ref{fig:pyro2} has $\phi=(12)(34)$, $Y_1=\frac{1}{\sqrt{2}}(1,1,0,0)$, $Y_2=\frac{1}{\sqrt{2}}(0,0,1,1)$, $Y_3=\frac{1}{\sqrt{2}}(1,-1,0,0)$, $Y_4=\frac{1}{\sqrt{2}}(0,0,1,-1)$, so 
\begin{scriptsize}
\[
A(\theta) = \begin{pmatrix} 0&1&\ee^{-2\pi\ii\theta}&1+\ee^{-2\pi\ii\theta}\\ 1&0&1+\ee^{-2\pi\ii\theta}&\ee^{-2\pi\ii\theta}\\ \ee^{2\pi\ii\theta}&1+\ee^{2\pi\ii\theta}&0&1\\ 1+\ee^{2\pi\ii\theta}&\ee^{2\pi\ii\theta}&1&0\end{pmatrix}, \quad \tilde{A}(\theta) = \begin{pmatrix} 1&1+2\ee^{-2\pi\ii\theta}&0&0\\ 1+2\ee^{2\pi\ii\theta}&1&0&0\\ 0&0&-1&-1\\ 0&0&-1&-1\end{pmatrix}
\]
\end{scriptsize}
and we get the flat bands $\{-2,0\}$. The eigenvectors are given by $Y\tilde{\phi}_1 = Y_3+ Y_4$, $Y\tilde{\phi}_2=Y_3-Y_4$, for $\phi_1 = \binom{1}{1}$, $\phi_2=\binom{1}{-1}$. The estimate $\nu-r=2$ is exact here.
\end{exa}

\begin{exa}
The graph corresponding to the Floquet matrix at the end of Example~\ref{exa:singlenu3} has a single-cell flat band $\lambda=0$ which is not induced by any symmetry discussed in this section. In fact, since we showed it does not satisfy the neighborhood condition, this excludes the permutations of two vertices fixing the third. The remaining permutation would be $(123)$, but this is also not valid either since $A(\theta)$ would then have the form of a single block $h(\theta)J + C$ by Lemma~\ref{lem:athetablocs}, with $J$ all one, which is not the case here.
\end{exa}

\begin{rem}\label{rem:epsicirc}
Condition \eqref{e:locsym} on $\epsilon_{ij}$ means that $\phi:V_f\to V_f$ is a graph automorphism. We saw in Lemma~\ref{lem:athetablocs} that this implies the blocks are circulant. However in Theorem~\ref{thm:symflat}, we only used the weaker property that each block has a constant row sum. In particular, we can recover Proposition~\ref{prp:reggen} by considering $\phi$ to be cyclic on $G_F$ and fixing $o$, so that $\nu=|G_F|+1$ and $r=2$. In general, keeping the condition on $I_{ij}\setminus\{0\}$, for $\epsilon_{ij}$ we only need the cycles of $\phi$ to induce an \emph{equitable partition} on $V_f$. That is, each cell forms a regular subgraph ($C_{ss}$ have constant row sum) and all vertices in a given cell $s$ have the same number of edges to the cell $t$ (so $C_{st}$ have constant row sum). There can be no edges at all, whether within the cell or between different cells.
\end{rem}

Compared to \cite{RMS}, our cells can be of arbitrary sizes; we do not need $\phi$ to be ``basic of order $k$''. Also, \cite{RMS} need some sites in $V_f$ to be fixed by $\phi$, as the corresponding eigenvectors vanish on them, and this allows them to connect the copies of $V_f$ using these points \cite[eq. (19) and \S III.B]{RMS}. Our argument does not need any vertex to be fixed by $\phi$.

We saw in Lemma~\ref{lem:implies2} that the permutation invariance in this section generalizes the concept of having two vertices sharing the same neighbors, a criterion we had before. A different generalization is to continue to look at only two vertices, which can now be far apart, but have the same number of walks to a given subset of $V_f$. This idea is explored in \cite{MRPS} and is shown to yield flat bands as well.

\section{Destroying flat bands}\label{sec:destro}

The aim of this section is to take small steps towards answering Problem 2, that is, showing that flat bands are rare and disappear under generic perturbations. 

\begin{lem}\label{thm:enfin}
The set $\cP_\Gamma = \{Q\in \mathbb{R}^\nu : \mathcal{A}_\Gamma+Q \text{ has a flat band}\}$ is semialgebraic.
\end{lem}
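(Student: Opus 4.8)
The plan is to realize $\cP_\Gamma$ as the projection of a real algebraic set and then invoke the Tarski--Seidenberg theorem, which guarantees that projections of semialgebraic sets are semialgebraic.

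First I would fix the algebraic description of a flat band. Writing $z_j=\ee^{2\pi\ii\theta_j}$ and $H(\theta)=A(\theta)+Q$ (here $w\equiv 1$, so $A(\theta)$ is the adjacency Floquet matrix), the characteristic polynomial is $p(z;\lambda)=\det(A(z)+Q-\lambda I)$. Combining Lemma~\ref{lem:flatchar} with Lemma~\ref{lem:triv} together with its converse (a Laurent polynomial that vanishes identically in particular vanishes on $\mathbb{S}^d$), one obtains the clean equivalence: $\lambda$ is a flat band of $\cH=\cA_\Gamma+Q$ if and only if $p(z;\lambda)$ is the zero Laurent polynomial in $z$.

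Next I would record the structural point that makes the problem genuinely algebraic. Only the diagonal entries of $A(z)+Q-\lambda I$ involve $Q$ and $\lambda$, through the scalars $Q_i-\lambda$; the off-diagonal entries $h_{ij}(z)$ depend on $z$ alone and have coefficients in $\{0,1\}$. Expanding the determinant therefore gives
\[
p(z;\lambda)=\sum_{k\in K} c_k(Q,\lambda)\, z^k,
\]
where $K\subset\Z^d$ is a finite set determined by the index sets $I_{ij}$ (hence independent of $Q$ and $\lambda$), and each coefficient $c_k(Q,\lambda)$ is a \emph{real} polynomial in the $\nu+1$ variables $Q_1,\dots,Q_\nu,\lambda$. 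Consequently, $\cH$ has a flat band precisely when there exists $\lambda\in\R$ such that $c_k(Q,\lambda)=0$ for every $k\in K$.

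The final step is the projection argument. Consider
\[
V=\{(Q,\lambda)\in\R^{\nu+1} : c_k(Q,\lambda)=0 \ \text{for all } k\in K\},
\]
a real algebraic variety, and in particular a semialgebraic set. By the previous step, $\cP_\Gamma$ is exactly the image of $V$ under the coordinate projection $\pi\colon(Q,\lambda)\mapsto Q$, so the Tarski--Seidenberg theorem yields that $\cP_\Gamma=\pi(V)$ is semialgebraic. I expect the only delicate point to be the logical bookkeeping rather than any computation: one must verify that ``having a flat band'' is genuinely equivalent to the simultaneous vanishing of finitely many polynomials in $(Q,\lambda)$ — so that the existential quantifier over the real parameter $\lambda$ really is a projection — and confirm that these coefficients are real polynomials, so that $V$ is an honest real algebraic set to which real quantifier elimination applies.
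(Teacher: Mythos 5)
Your proposal is correct and follows essentially the same route as the paper: expand $\det(A(z)+Q-\lambda I)$ as a Laurent polynomial in $z$ with coefficients that are polynomials in $(Q,\lambda)$, use Lemma~\ref{lem:triv} to replace ``flat band'' by the simultaneous vanishing of these finitely many coefficients, and conclude that $\cP_\Gamma$ is the projection of a real algebraic variety, hence semialgebraic. The paper leaves Tarski--Seidenberg implicit in the phrase ``is the projection of an algebraic variety and is thus semialgebraic,'' but the argument is the same.
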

\begin{proof}
Let $p(z;Q,\lambda) := \det(A(z)+Q-\lambda I)$. We have by Lemma~\ref{lem:triv},
\begin{align*}
\mathcal{P}_\Gamma &= \{Q\in \R^\nu : \exists \lambda\in \R,\ p(z;Q,\lambda)=0\quad \forall z\neq 0\}\\
&= \pi_1\{(Q,\lambda)\in \R^{\nu+1} : p(z;Q,\lambda)=0\quad \forall z\neq 0\},
\end{align*}
where $\pi_1$ is the projection $\pi_1(Q,\lambda)=Q$.

Now $p(z;Q,\lambda)$ is a Laurent polynomial in $z$ and it is also polynomial in $Q_1,\dots,Q_\nu,\lambda$. We expand $p(z;Q,\lambda) = \sum_{r\in \Lambda} p_r(Q,\lambda) z^r$, where $\Lambda$ is a finite subset of $\Z^d$ determined by $\nu$, $d$ and $I_{ij}$, $i,j\le \nu$. Then the Laurent polynomial $p(z;Q,\lambda)$ vanishes for all $z$ iff each coefficient of $z^r$ vanishes. Each $p_r(Q,\lambda)$ is a polynomial in $(Q_1,\dots,Q_\nu,\lambda)$. Hence,
\[
\mathcal{P}_\Gamma = \pi_1 \{(Q,\lambda)\in \R^{\nu+1}: p_r(Q,\lambda)=0\quad \forall r\in \Lambda\}
\]
is the projection of an algebraic variety and is thus semialgebraic.
\end{proof}

We now observe that 
\begin{enumerate}
\item There exists some $r\neq 0$ in $\Lambda$ such that $p_r(Q,\lambda)$ is not the trivial polynomial. Indeed, if $p_r(Q,\lambda)$ was trivial for all $r\neq 0$, we would have $p(z;Q,\lambda) = p_0(Q,\lambda)$ for all $z$, $Q$, $\lambda$. This would mean that the characteristic polynomial of $A(z)+Q$ is independent of $z$, and so are its roots, which would imply that all bands of $\mathcal{A}_\Gamma+Q$ are flat, contradicting Theorem~\ref{thm:notflat}.
\item The polynomial $p_0(Q,\lambda)$ is nontrivial and distinct from the polynomials $p_r(Q,\lambda)$ for $r\neq 0$. In fact, $p_0(Q,\lambda)$ contains the term $(-1)^\nu \lambda^\nu$ which does not appear in any other $p_r(Q,\lambda)$, since $p(z;Q,\lambda) = (-1)^\nu\lambda^\nu + q(z;Q,\lambda)$ with $q(z;Q,\lambda)$ a polynomial of lower order in $\lambda$
\end{enumerate}

It follows from (1--2) that $\{(Q,\lambda)\in \R^{\nu+1}:p_r(Q,\lambda)=0 \quad \forall r\}$ is given by the vanishing of at least two distinct nontrivial polynomials. We would like to conclude that the algebraic variety has dimension at most $\nu-1$, and so does its projection $\cP_\Gamma$. The problem is that $p_0$ and $p_r$ might share a common factor $q$, i.e. $p_r=qf$ and $p_0=qg$ for some polynomials $q,f,g$. In that case, $\{(Q,\lambda):p_0(Q,\lambda)=p_r(Q,\lambda)=0\} \supseteq \{(Q,\lambda):q(Q,\lambda)=0\}$ could have dimension $\nu$.
Let us show this does not happen for $\nu=2$.

\begin{lem}\label{lem:nu=2}
If $\nu=2$, then $\cP_\Gamma$ has dimension at most one.
\end{lem}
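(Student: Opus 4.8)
The plan is to pass to shifted variables that absorb the eigenvalue. For $\nu=2$, set $u=Q_1-\lambda$ and $v=Q_2-\lambda$; then, from the form of the characteristic polynomial in \eqref{e:char2},
\[
p(z;Q,\lambda)=(f_1(z)+u)(f_2(z)+v)-g(z)g(z^{-1}),
\]
which depends on the three parameters $(Q_1,Q_2,\lambda)$ only through the pair $(u,v)$. By Lemma~\ref{lem:triv}, $\cA_\Gamma+Q$ has the flat band $\lambda$ iff $(u,v)$ lies in
\[
S=\{(u,v)\in\R^2 : (f_1(z)+u)(f_2(z)+v)=g(z)g(z^{-1})\ \text{as Laurent polynomials}\}.
\]
Since $Q_1=u+\lambda$ and $Q_2=v+\lambda$, a fixed $(u,v)\in S$ produces, as $\lambda$ ranges over $\R$, exactly the line $\{Q_1-Q_2=u-v\}$; hence $\cP_\Gamma=\bigcup_{(u,v)\in S}\{Q_1-Q_2=u-v\}$. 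Each such line is one-dimensional, so it suffices to prove that $S$ is finite: then the set of differences $u-v$ is finite and $\cP_\Gamma$ is a finite union of lines, of dimension at most one.

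To locate $S$, I would extract the coefficient of $z^0$ exactly as in the proof of Lemma~\ref{lem:vap}: since $0\notin I_{ii}$, matching constant terms forces $uv=N$, where $N:=|I_{12}|-|I_{11}\cap I_{22}|$. Thus $S$ is contained in the conic $Z=\{uv=N\}$, which is already one-dimensional, so the remaining task is to rule out that $S$ fills an entire branch of $Z$.

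Suppose, for contradiction, that $S$ is infinite; in particular $S\neq\emptyset$, so $\cP_\Gamma\neq\emptyset$. The top-degree comparison yielding \eqref{e:lar}--\eqref{e:i1122ge1} is unaffected by adding the constants $u,v$, so $|I_{11}|+|I_{22}|\ge 2$, and \eqref{e:i12ge1} gives $|I_{12}|\ge 1$. When $N\neq0$, $Z$ is the irreducible hyperbola parametrized by $(u,v)=(t,N/t)$, and infinitely many points of $S$ on it mean that $(f_1(z)+t)(f_2(z)+N/t)=g(z)g(z^{-1})$ holds for infinitely many $t$. Clearing the denominator, the expression $t^2f_2(z)+t\bigl(f_1(z)f_2(z)+N-g(z)g(z^{-1})\bigr)+Nf_1(z)$ is, for each fixed value of $z$, a polynomial of degree at most $2$ in $t$ with infinitely many roots, hence identically zero in $t$; its $t^2$- and $t^0$-coefficients give $f_2\equiv0$ and $f_1\equiv0$, i.e. $I_{11}=I_{22}=\emptyset$, contradicting $|I_{11}|+|I_{22}|\ge2$. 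When $N=0$, $Z$ is the union of the two axes, so $S$ contains infinitely many points on $\{u=0\}$ (say); then $f_1(z)f_2(z)+vf_1(z)=g(z)g(z^{-1})$ for infinitely many $v$, forcing $f_1\equiv0$ and then $g\equiv0$, contradicting $|I_{12}|\ge1$ (the case $\{v=0\}$ is symmetric). Hence $S$ is finite, which completes the proof.

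The step I expect to be most delicate is the passage from ``$S$ infinite'' to a polynomial identity in the auxiliary parameter $t$: one must argue that an infinite algebraic subset of the irreducible conic $Z$ cannot be proper. I would keep this self-contained by the univariate device above---reducing, for each fixed $z$, to a polynomial of bounded degree in $t$ with infinitely many roots---rather than invoking general real-algebraic dimension theory.
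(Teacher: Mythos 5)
Your proof is correct. It shares its first half with the paper's: both isolate the $z^0$ coefficient of $\det(A(z)+Q-\lambda I)$, which in your variables is $uv=N$ and in the paper's is $p_0(Q,\lambda)=n+(Q_1-\lambda)(Q_2-\lambda)$ with $n=-N$. The second half is genuinely different. The paper takes one further coefficient $p_r(Q,\lambda)=m+\delta_1(Q_1-\lambda)+\delta_2(Q_2-\lambda)$ with $r\neq 0$ that is a nontrivial polynomial (such an $r$ exists by Theorem~\ref{thm:notflat}), solves $p_r=0$ for $\lambda$, and substitutes into $p_0$ to force $Q_1-Q_2$ to be constant. You instead observe that the whole problem depends on $(Q,\lambda)$ only through $(u,v)=(Q_1-\lambda,Q_2-\lambda)$, so that $\cP_\Gamma$ is the union over the solution set $S$ of the lines $\{Q_1-Q_2=u-v\}$, and you prove $S$ finite by restricting the full Laurent-polynomial identity to the conic $uv=N$ and using a degree-at-most-two-in-$t$ argument together with the connectivity facts \eqref{e:i12ge1} and \eqref{e:i1122ge1}. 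Your route buys two things: a slightly more explicit conclusion ($\cP_\Gamma$ is a \emph{finite} union of parallel lines), and uniformity --- the paper's substitution degenerates when $m=0$ and $\delta_1\delta_2=0$, where $p_0$ reduces to the constant $n$ and imposes no constraint on $Q_1-Q_2$, whereas your argument needs no such case analysis. The one step you should spell out is the transfer of \eqref{e:i1122ge1}, which Section~5 states for $Q\equiv 0$, to the perturbed operator: if $I_{11}=I_{22}=\emptyset$ the flat-band identity becomes $uv=g(z)g(z^{-1})$, forcing $g(z)g(z^{-1})$ to be constant, hence $|I_{12}|\le 1$ and $\Gamma$ disconnected; this is exactly the paper's argument with $u,v$ in place of $-\lambda$, so your claim that the comparison is ``unaffected'' is justified.
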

\begin{proof}
Here
\begin{multline*}
p(z;Q,\lambda) = \sum_{k\in I_{11}}\sum_{k'\in I_{22}} z^{k+k'} + (Q_1-\lambda) \sum_{k'\in I_{22}} z^{k'} + (Q_2-\lambda)\sum_{k\in I_{11}} z^{k} \\
+ (Q_1-\lambda)(Q_2-\lambda) - \sum_{p,p'\in I_{12}} z^{p-p'}\,,
\end{multline*}
so  $p_0(Q,\lambda) = n +(Q_1-\lambda)(Q_2-\lambda)$ for $n=|I_{11}\cap I_{22}| - |I_{12}|$. The nontrivial $p_r(Q,\lambda)$ must have the form $p_r(Q,\lambda) = m + \delta_1(Q_1-\lambda) + \delta_2(Q_2-\lambda)$ for some $m\in \Z$ and $\delta_i\in \{0,1\}$, the coefficient $m$ coming from the double sums.

If $\delta_1=\delta_2=0$, then either $p_r=m\neq 0$ never vanishes and so $\cP_\Gamma=\emptyset$, or $m=0$ and $p_r$ is trivial, contradicting the assumption. So let $\delta_1+\delta_2\neq 0$. Then $p_r=0$ implies $\lambda = \frac{m+\delta_1Q_1 + \delta_2 Q_2}{\delta_1+\delta_2}$. Substituting into $p_0$, we get $p_0(Q,\lambda) = n+\frac{(\delta_2(Q_1-Q_2)-m)(\delta_1(Q_2-Q_1)-m)}{(\delta_1+\delta_2)^2}$. This reduces to $p_0(Q,\lambda) = n-\frac{(Q_1-Q_2)^2-m^2}{4}$ or $p_0(Q,\lambda) = n+m^2\pm m(Q_1-Q_2)$. Either way, for $p_0$ to vanish, we must have $Q_2 = Q_1 + c$ for some $c\in \R$. This proves the result.
\end{proof}

\begin{rem}
It is worthwhile to note that if there exists a coefficient $p_r(Q,\lambda) = m\neq 0$ independent of $(Q,\lambda)$, then $\cP_\Gamma=\emptyset$. This was mentioned in the previous proof, and holds for any $\nu$. A simple example is the honeycomb lattice $H(z) = \begin{pmatrix} Q_1& 1+z_1^{-1}+z_2^{-1}\\ 1+z_1+z_2& Q_2\end{pmatrix}$.
\end{rem}

Let us turn back to general $\nu>1$ and prove a couple of results.

\begin{lem}\label{lem:pgclo}
For any $d$, $\nu>1$, the set $\cP_{\Gamma}$ is closed.
\end{lem}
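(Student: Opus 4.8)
The plan is to leverage the description of $\cP_\Gamma$ obtained in the proof of Lemma~\ref{thm:enfin}, namely
\[
\cP_\Gamma = \pi_1\{(Q,\lambda)\in \R^{\nu+1} : p_r(Q,\lambda)=0 \ \ \forall r\in \Lambda\},
\]
where $\pi_1(Q,\lambda)=Q$. The set being projected is closed, being the common zero set of finitely many continuous (polynomial) functions $p_r$. The difficulty is that projections of closed sets are not closed in general: the fibers may escape to infinity (as for the hyperbola $\{xy=1\}$). So the crux will be an a priori bound confining the admissible $\lambda$ whenever $Q$ ranges over a bounded set.

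First I would establish this bound. If $Q\in \cP_\Gamma$ has flat band $\lambda$, then by Lemma~\ref{lem:flatchar} $\lambda$ is an eigenvalue of the Hermitian matrix $H(0)=A(0)+Q$; hence $|\lambda|\le \|A(0)+Q\| \le \|A(0)\| + \max_i |Q_i|$. Since the entries of $A(0)$ are the fixed nonnegative integers $|I_{ij}|$, we have $\|A(0)\|\le C$ for a constant $C=C(\Gamma)$. Thus every flat band of $\cA_\Gamma+Q$ satisfies $|\lambda|\le C+\max_i |Q_i|$.

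Next I would run a sequential compactness argument. Take $Q^{(n)}\in \cP_\Gamma$ with $Q^{(n)}\to Q$ in $\R^\nu$, and for each $n$ choose $\lambda_n\in\R$ with $p_r(Q^{(n)},\lambda_n)=0$ for all $r\in\Lambda$. The convergent sequence $(Q^{(n)})$ is bounded, say $\max_i |Q^{(n)}_i|\le M$ for all $n$, so the bound above gives $|\lambda_n|\le C+M$. By Bolzano--Weierstrass there is a subsequence $\lambda_{n_k}\to\lambda\in\R$. Each $p_r$ is a polynomial, hence continuous, so
\[
p_r(Q,\lambda)=\lim_{k\to\infty} p_r\bigl(Q^{(n_k)},\lambda_{n_k}\bigr)=0 \qquad \text{for every } r\in\Lambda.
\]
Therefore $(Q,\lambda)$ lies in the variety, whence $Q=\pi_1(Q,\lambda)\in\cP_\Gamma$. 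This shows $\cP_\Gamma$ is closed.

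The only delicate ingredient is the a priori estimate $|\lambda|\le C+\max_i|Q_i|$; once this is in hand, confining the $\lambda_n$ to a compact interval, the remainder is a routine compactness-plus-continuity argument. I expect no further obstacle, as the assumption $\nu>1$ is not even needed here (it merely guarantees, via Theorem~\ref{thm:notflat}, the nontriviality statements used elsewhere in the section).
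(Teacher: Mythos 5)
Your proof is correct and follows essentially the same route as the paper: bound the flat-band values $\lambda_n$ uniformly (you via $\lambda_n\in\sigma(H(0))$ with $H(0)=A(0)+Q^{(n)}$, the paper via $\lambda_n\in\sigma(\cA_\Gamma+Q^{(n)})$ and the uniform operator-norm bound), extract a convergent subsequence by Bolzano--Weierstrass, and pass to the limit using continuity of the relevant polynomials (you use the coefficients $p_r$, the paper uses $\det(A(\theta)+Q-\lambda)$ at each fixed $\theta$; these are equivalent formulations of the flat-band condition). The argument is sound and no gap remains.
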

This does not follow from Lemma~\ref{thm:enfin} as the projection of an algebraic variety may not be closed, e.g. $\pi_1\{(x,y)\in \R^2:xy-1=0\} = \R\setminus\{0\}$.
\begin{proof}
Let $\{Q^{(n)}\}\subset \cP_\Gamma$, $Q^{(n)}\to Q\in \R^\nu$. Then for each $n$, $\cA_\Gamma+Q^{(n)}$ has a flat band. So there exists $\lambda_n=\lambda(\Gamma,Q^{(n)})$ such that $\det(A(\theta)+Q^{(n)}-\lambda_n)=0$ for all $\theta\in \T_\ast^d$. 

As $Q^{(n)}$ converges in $\R^\nu$, it is bounded, $\|Q^{(n)}\|_\infty \le c$, so $\cA_\Gamma+Q^{(n)}$ is uniformly bounded, in particular $\{\lambda_n\}$ is a bounded sequence. Let $(\lambda_{n_k})$ be a subsequence converging to some $\lambda_\ast$. Since $Q^{(n_k)}\to Q$ and $\lambda_{n_k}\to \lambda_\ast$, we have for any fixed $\theta$: $\det(A(\theta)+Q^{(n_k)}-\lambda_{n_k})\to \det(A(\theta)+Q-\lambda_\ast)$, because the determinant is a polynomial in $Q^{(n_k)}$ and $\lambda_{n_k}$. But we know $\det(A(\theta)+Q^{(n_k)}-\lambda_{n_k})=0$ for all $n_k$ and all $\theta$. So for any $\theta\in \T_\ast^d$, we have $\det(A(\theta)+Q-\lambda_\ast)=0$. This shows that $\cA_\Gamma+Q$ has a flat band, i.e. $Q\in \cP_{\Gamma}$.
\end{proof}


\begin{prp}\label{prp:pertukorsa}
For any $d$, $\nu>1$, there exists $\Gamma$ such that $\mathcal{P}_\Gamma \supseteq \{Q\in \R^\nu: Q_1=Q_2\}$, and thus $\dim \cP_\Gamma\ge \nu-1$.
\end{prp}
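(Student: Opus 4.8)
The plan is to realize the entire hyperplane $\{Q_1=Q_2\}$ inside $\cP_\Gamma$ by choosing $\Gamma$ so that the neighborhood condition of Proposition~\ref{prp:neighgenflat} holds for the pair $v_1,v_2$, and then observing that the associated flat-band eigenvector, being supported only on the first two coordinates, is insensitive to any diagonal perturbation $Q$ that takes the \emph{same} value on those two coordinates. Concretely, I would first fix, for the given $\nu\ge 2$ and $d\ge 1$, a connected $\Z^d$-periodic graph $\Gamma$ with $|V_f|=\nu$ satisfying $\cN_{v_1}\setminus\{v_2\}=\cN_{v_2}\setminus\{v_1\}$.

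For the construction one explicit choice is as follows. Set $I_{11}=I_{22}=\{\pm e_1,\dots,\pm e_d\}$ and $I_{12}=\{0\}\cup\{\pm e_1,\dots,\pm e_d\}$ (so $v_1\sim v_2$), which gives $I_{11}=I_{22}=I_{12}\setminus\{0\}$, and attach the remaining vertices by a path: let $v_3$ be joined to both $v_1$ and $v_2$ inside the cell ($I_{13}=I_{23}=\{0\}$), set $I_{r,r+1}=\{0\}$ for $3\le r\le \nu-1$, and let all other $I_{ij}$ be empty. By Lemma~\ref{lem:equi2} this data satisfies $\cN_{v_1}\setminus\{v_2\}=\cN_{v_2}\setminus\{v_1\}$; the hopping in the $d$ coordinate directions carried by $I_{11}$ makes $\Gamma$ genuinely $\Z^d$-periodic, and the in-cell edges $v_1v_2$, $v_1v_3$, $v_3v_4,\dots$ attach everything to the connected $v_1$-sublattice, so $\Gamma$ is connected. (For $\nu=2$ one simply drops the path and takes $\Gamma$ as in Figure~\ref{fig:boxnover}.)

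Now I would re-run the computation of Proposition~\ref{prp:neighgenflat} in the presence of $Q$. Writing $f=(1,-1,0,\dots,0)^\intercal$, that proposition gives $A(\theta)f=-\epsilon f$ for all $\theta$, where $\epsilon\in\{0,1\}$ records whether $v_1\sim v_2$. Since $Q=\mathrm{Diag}(Q_1,\dots,Q_\nu)$ acts diagonally and $f$ is supported on the first two coordinates, $Qf=(Q_1,-Q_2,0,\dots,0)^\intercal$; hence when $Q_1=Q_2=:q$ we get $Qf=qf$ and therefore
\[
(A(\theta)+Q)f = (q-\epsilon)f \qquad \text{for all } \theta\in\T_\ast^d.
\]
Thus $q-\epsilon$ is an eigenvalue of $H(\theta)=A(\theta)+Q$ for every $\theta$, with the $\theta$-independent eigenvector $f$, so by Lemma~\ref{lem:flatchar} (equivalently the single-cell case of Proposition~\ref{prp:trigiffcomp}) it is a flat band of $\cA_\Gamma+Q$. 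Consequently every $Q$ with $Q_1=Q_2$ lies in $\cP_\Gamma$, i.e. $\cP_\Gamma\supseteq\{Q\in\R^\nu:Q_1=Q_2\}$, an affine hyperplane of dimension $\nu-1$, whence $\dim\cP_\Gamma\ge \nu-1$.

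The only genuinely delicate point is the construction step: one must produce a graph that simultaneously satisfies the neighborhood condition on $v_1,v_2$, is connected, and is honestly $d$-dimensional (the $I_{11}$-hopping is exactly what guarantees the last property, while the in-cell path guarantees connectedness). Once such a $\Gamma$ is in hand the eigenvector computation is immediate, because a diagonal $Q$ that is equal on the two active coordinates merely shifts the eigenvalue by $q$ without disturbing the eigenvector $f$; note that we only need the inclusion $\supseteq$, so we do not have to rule out flat bands for $Q$ off the hyperplane.
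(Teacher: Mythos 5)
Your proof is correct and follows essentially the same route as the paper: pick $\Gamma$ with $\cN_{v_1}\setminus\{v_2\}=\cN_{v_2}\setminus\{v_1\}$ and observe that $(1,-1,0,\dots,0)^\intercal$ remains a $\theta$-independent eigenvector of $A(\theta)+Q$ with eigenvalue $Q_1-\epsilon$ whenever $Q_1=Q_2$. The only difference is that you spell out an explicit family of such graphs for every $d$ and $\nu$, which the paper leaves implicit; this is a welcome but inessential addition.
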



\begin{proof}
Take $\Gamma$ with $\cN_{v_1}\setminus\{v_2\} = \cN_{v_2}\setminus\{v_1\}$ and add $Q=(Q_2,Q_2,Q_3,\dots,Q_\nu)$ with $Q_i$ arbitrary for $i\ge 2$. Then \eqref{e:hthe} takes the form $H(\theta) = \begin{pmatrix} h_{11}(\theta)+Q_2 & h_{11}(\theta)+\epsilon & g_\theta \\ h_{11}(\theta)+\epsilon& h_{11}(\theta)+Q_2& g_\theta\\ g^\ast_\theta& g^\ast_\theta& H'(\theta)\end{pmatrix}$ for $g_\theta=(h_{13}(\theta),\dots,h_{1\nu}(\theta))$. Clearly $(1,-1,0,\dots,0)^\intercal$ is an eigenvector with eigenvalue $\lambda=Q_2-\epsilon$, for all $\theta$.
\end{proof}


It was conjectured in \cite[p. 590-591]{KorSa} that if we perturb $\cA_\Gamma$ by a potential $Q$ such that $Q_1<Q_2<\dots<Q_\nu$, then $\cA_\Gamma+Q$ has no flat bands. This is not true because such potentials can create flat bands if $\cA_{\Gamma}$ has no flat bands, as the example in Figure~\ref{fig:flbmd}, taken from \cite{FLBMD}, shows. Here $H(\theta) = \begin{pmatrix} -1& 1+\ee^{-2\pi\ii\theta}\\ 1+\ee^{2\pi\ii\theta}& 2\cos 2\pi\theta\end{pmatrix}$.
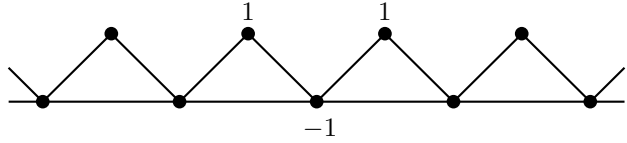
\begin{figure}[h!]
\begin{center}
\setlength{\unitlength}{0.9cm}
\thicklines
\begin{picture}(1.3,1.3)(-0.7,-0.3)
	 \put(-5,0){\line(1,0){9}}
	 \put(-4.5,0){\line(1,1){1}}
         \put(-2.5,0){\line(-1,1){1}}
	 \put(-2.5,0){\line(1,1){1}}
         \put(-0.5,0){\line(-1,1){1}}
         \put(-0.5,0){\line(1,1){1}}
         \put(1.5,0){\line(-1,1){1}}
         \put(1.5,0){\line(1,1){1}}
         \put(3.5,0){\line(-1,1){1}}
         \put(3.5,0){\line(1,1){0.5}}
         \put(-4.5,0){\line(-1,1){0.5}}
         \put(-3.5,1){\circle*{.2}}
	 \put(-4.5,0){\circle*{.2}}
	 \put(-2.5,0){\circle*{.2}}
	 \put(-1.5,1){\circle*{.2}}
	 \put(-0.5,0){\circle*{.2}}
	 \put(0.5,1){\circle*{.2}}
	 \put(1.5,0){\circle*{.2}}
	 \put(2.5,1){\circle*{.2}}
	 \put(3.5,0){\circle*{.2}}
	 \put(-0.7,-0.5){\small{$-1$}}
	 \put(0.4,1.2){\small{$1$}}
	 \put(-1.6,1.2){\small{$1$}}
\end{picture}
\caption{Here $\nu=2$, $\cA_{\Gamma}$ has no flat bands but $\cA_{\Gamma}+Q$ has the flat band $\lambda=-2$ if $Q=-1$ on the upper row and $Q=0$ on the lower row.}\label{fig:flbmd}
\end{center}
\end{figure} 

Still, this conjecture is almost true for $\nu=2$: we showed in Lemma~\ref{lem:nu=2} that for any $\Gamma$, there exists $c$ such that $\cP_\Gamma \subseteq \{Q\in \R^2:Q_1=Q_2+c\}$. But it is \emph{wrong} for higher $\nu$. For example, for $A(\theta) = \begin{pmatrix} c&c&c\\ c&c&c\\ c&c&0\end{pmatrix}$ with $c = 2\cos 2\pi\theta$, we have flat bands for all perturbations of the form $(Q_1,Q_2,\frac{Q_1+Q_2}{2})$. The same holds for the Laplacian $D - A(\theta)$, with $D=\mathrm{Diag}(6,6,4)$ the degree matrix, where flat bands occur for all $(Q_1,Q_2,\frac{Q_1+Q_2}{2}+2)$.

\appendix

\section{Eigenvalues and eigenvectors of small graphs}\label{app:45}

We give here the details for \eqref{e:f45}. The tables were produced by Wolfram Alpha.

\subsection{Case of 4 vertices}
Given $\cF_3^s$, by Theorem~\ref{thm:moredetailsonfnu+1}, the only $V_f$ that can produce more bands for $\cF_4^s$ are those in Figure~\ref{fig:4graphs}.

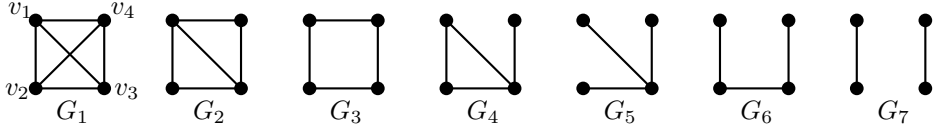
\begin{figure}[h!]
\begin{center}
\setlength{\unitlength}{0.9cm}
\thicklines
\begin{picture}(2.2,1.4)(-1.5,-1.3)
	 \put(-7,-1){\line(1,0){1}}
	 \put(-7,0){\line(1,0){1}}
	 \put(-7,-1){\line(0,1){1}}
	 \put(-6,-1){\line(0,1){1}}
	 \put(-7,-1){\line(1,1){1}}
	 \put(-7,0){\line(1,-1){1}}
	    \put(-5,-1){\line(1,0){1}}
	 \put(-5,0){\line(1,0){1}}
	 \put(-5,-1){\line(0,1){1}}
	 \put(-4,-1){\line(0,1){1}}
	 \put(-5,0){\line(1,-1){1}}
	    \put(-3,-1){\line(1,0){1}}
	 \put(-3,0){\line(1,0){1}}
	 \put(-3,-1){\line(0,1){1}}
	 \put(-2,-1){\line(0,1){1}}
	    \put(-1,-1){\line(1,0){1}}
	 \put(-1,-1){\line(0,1){1}}
	 \put(0,-1){\line(0,1){1}}
	 \put(-1,0){\line(1,-1){1}}
	    \put(1,-1){\line(1,0){1}}
	 \put(2,-1){\line(0,1){1}}
	 \put(1,0){\line(1,-1){1}}
	   \put(3,-1){\line(1,0){1}}
	 \put(4,-1){\line(0,1){1}}
	 \put(3,-1){\line(0,1){1}}
	    \put(5,-1){\line(0,1){1}}
	    \put(6,-1){\line(0,1){1}}
	    \put(-7,-1){\circle*{.2}}
	 \put(-7,0){\circle*{.2}}
	 \put(-6,-1){\circle*{.2}}   
	 \put(-6,0){\circle*{.2}}
	    \put(-5,-1){\circle*{.2}}
	 \put(-5,0){\circle*{.2}}
	 \put(-4,-1){\circle*{.2}}   
	 \put(-4,0){\circle*{.2}}
	     \put(-3,-1){\circle*{.2}}
	 \put(-3,0){\circle*{.2}}
	 \put(-2,-1){\circle*{.2}}   
	 \put(-2,0){\circle*{.2}}
	     \put(-1,-1){\circle*{.2}}
	 \put(-1,0){\circle*{.2}}
	 \put(0,-1){\circle*{.2}}   
	 \put(0,0){\circle*{.2}}
	    \put(1,-1){\circle*{.2}}
	 \put(1,0){\circle*{.2}}
	 \put(2,-1){\circle*{.2}}   
	 \put(2,0){\circle*{.2}}
	    \put(3,-1){\circle*{.2}}
	 \put(3,0){\circle*{.2}}
	 \put(4,-1){\circle*{.2}}   
	 \put(4,0){\circle*{.2}}
	    \put(5,-1){\circle*{.2}}
	 \put(5,0){\circle*{.2}}
	 \put(6,-1){\circle*{.2}}   
	 \put(6,0){\circle*{.2}}
	   \put(-7.4,0.1){\small{$v_1$}}
	   \put(-7.45,-1.1){\small{$v_2$}}
	   \put(-5.85,-1.1){\small{$v_3$}}
	   \put(-5.9,0.1){\small{$v_4$}}
	    \put(-6.7,-1.45){\small{$G_1$}}
	    \put(-4.7,-1.45){\small{$G_2$}}
	    \put(-2.7,-1.45){\small{$G_3$}}
	    \put(-0.7,-1.45){\small{$G_4$}}
	    \put(1.3,-1.45){\small{$G_5$}}
	    \put(3.3,-1.45){\small{$G_6$}}
	    \put(5.3,-1.45){\small{$G_7$}}
\end{picture}
\caption{All connected graphs and the relevant disconnected graph.}\label{fig:4graphs}
\end{center}
\end{figure}

We compute the eigenvalues and eigenvectors of these graphs in the following table.

\begin{small}
\begin{alignat*}{3}
& G_1   \quad && 3,-1,-1,-1            \quad && (1,1,1,1),(-1,0,0,1),(-1,0,1,0),(-1,1,0,0)\\
& G_2   \quad && \alpha,0,-1,\alpha'    \quad && \textstyle{\big(\frac{\alpha}{2},1,\frac{\alpha}{2},1\big),(0,-1,0,1),(-1,0,1,0),\big(\frac{\alpha'}{2},1,\frac{\alpha'}{2},1\big)}\\
& G_3   \quad && 2,0,0,-2 \quad && (1,1,1,1),(0,-1,0,1),(-1,0,1,0),(-1,1,-1,1)\\
& G_4 \quad && \beta_1,\beta_2,-1,\beta_3     \quad && (\kappa_1,\kappa_1,\beta_1,1),(\kappa_2,\kappa_2,\beta_2,1),(-1,1,0,0),(\kappa_3,\kappa_3,\beta_3,1)\\
& G_5   \quad && \sqrt{3},0,0,-\sqrt{3}            \quad && (1,1,\sqrt{3},1),(-1,0,0,1),(-1,1,0,0),(1,1,-\sqrt{3},1) \\
& G_6   \quad && \sigma_1,\sigma_2,\sigma_3,\sigma_4            \quad && (1,\sigma_1,\sigma_1,1), (-1,\sigma_3,\sigma_2,1),(1,\sigma_3,\sigma_3,1),(-1,\sigma_1,\sigma_4,1)\\
& G_7   \quad && 1,1,-1,-1            \quad && (0,0,1,1),(1,1,0,0),(0,0,-1,1),(-1,1,0,0)
\end{alignat*}
\end{small}

Here $\alpha = \frac{1+\sqrt{17}}{2}$, $\alpha'=\frac{1-\sqrt{17}}{2}$, $\beta_i$ are the roots of $\beta^3-\beta^2-3\beta+1=0$, with $\beta_1\approx 2.17009$, $\beta_2\approx 0.31111$, $\beta_3\approx -1.48119$ and $\kappa_1\approx 1.85464$, $\kappa_2\approx -0.451606$, $\kappa_3\approx 0.596968$. Finally $\sigma_1=\frac{1+\sqrt{5}}{2}$, $\sigma_2 = \frac{-1+\sqrt{5}}{2}$, $\sigma_3 = \frac{1-\sqrt{5}}{2}$, $\sigma_4=\frac{-1-\sqrt{5}}{2}$.

We know $\cF_3^s=\{0,-1\}\subseteq\cF_4^s$. By looking at the eigenvectors, we may deduce from Theorem~\ref{thm:singlecellcriter} that $-2,\sigma_2,\sigma_4\in \cF_4^s$ (because $\sigma_3=-\sigma_2$ and $\sigma_4=-\sigma_1$). We can also see this more explicitly from Figure~\ref{fig:irratflat4}, which shows that $\sigma_2=\frac{-1+\sqrt{5}}{2}$, $\sigma_4=\frac{-1-\sqrt{5}}{2}$ can appear together, and Example~\ref{exa:pyro}, which showed that $-2$ arises in $\cF_4^s$ from the $1d$ pyrochlore.

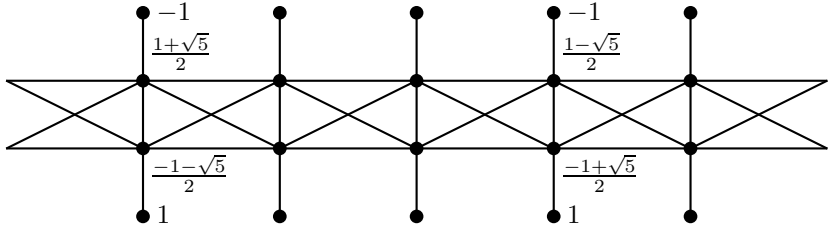
\begin{figure}[h!]
\begin{center}
\setlength{\unitlength}{0.9cm}
\thicklines
\begin{picture}(1.3,2.7)(-1.1,-1.95)
         \put(-7,0){\line(1,0){12}}
	 \put(-7,-1){\line(1,0){12}}
	 \put(-5,-1){\line(2,1){2}}
	 \put(-5,0){\line(2,-1){2}}
	 \put(-3,-1){\line(0,1){1}}
	 \put(-3,-1){\line(2,1){2}}
	 \put(-3,0){\line(2,-1){2}}
	 \put(-1,-1){\line(0,1){1}}
	 \put(-1,-1){\line(2,1){2}}
	 \put(-1,0){\line(2,-1){2}}
	 \put(1,-1){\line(0,1){1}}
		\put(1,-1){\line(2,1){2}}
	 \put(1,0){\line(2,-1){2}}
	 \put(3,-1){\line(0,1){1}}
		\put(3,-1){\line(2,1){2}}
	 \put(3,0){\line(2,-1){2}}
	 \put(-1,-1){\circle*{.2}}
	 \put(-1,0){\circle*{.2}}
	 \put(-3,-1){\circle*{.2}}
	 \put(-3,0){\circle*{.2}}
	 \put(1,-1){\circle*{.2}}
	 \put(1,0){\circle*{.2}}
	 \put(3,-1){\circle*{.2}}
	 \put(3,0){\circle*{.2}}
	\put(-3,-2){\line(0,1){1}}
	\put(-1,-2){\line(0,1){1}}
	\put(1,-2){\line(0,1){1}}
	\put(3,-2){\line(0,1){1}}
	\put(-3,0){\line(0,1){1}}
	\put(-1,0){\line(0,1){1}}
	\put(1,0){\line(0,1){1}}
	\put(3,0){\line(0,1){1}}
	\put(-5,-2){\line(0,1){1}}
	\put(-5,-1){\line(0,1){1}}
	\put(-5,0){\line(0,1){1}}
	\put(-5,0){\line(-2,-1){2}}
	\put(-5,-1){\line(-2,1){2}}
	\put(-5,-1){\circle*{.2}}
	\put(-5,0){\circle*{.2}}
	\put(-5,-2){\circle*{.2}}
	\put(-3,-2){\circle*{.2}}
	\put(-1,-2){\circle*{.2}}
	\put(1,-2){\circle*{.2}}
	\put(3,-2){\circle*{.2}}
	\put(-5,1){\circle*{.2}}
	\put(-3,1){\circle*{.2}}
	\put(-1,1){\circle*{.2}}
	\put(1,1){\circle*{.2}}
	\put(3,1){\circle*{.2}}
	\put(1.2,0.9){\small{$-1$}}
	\put(1.1,0.3){\small{$\frac{1-\sqrt{5}}{2}$}}
	\put(1.1,-1.5){\small{$\frac{-1+\sqrt{5}}{2}$}}
	 \put(1.2,-2.1){\small{$1$}}
	  \put(-4.8,0.9){\small{$-1$}}
	\put(-4.9,0.3){\small{$\frac{1+\sqrt{5}}{2}$}}
	\put(-4.9,-1.5){\small{$\frac{-1-\sqrt{5}}{2}$}}
	 \put(-4.8,-2.1){\small{$1$}}
	\end{picture}
\caption{Flat bands $\sigma_4=\frac{-1-\sqrt{5}}{2}$ (left) and $\sigma_2=\frac{\sqrt{5}-1}{2}$ (right).}\label{fig:irratflat4}
\end{center}
\end{figure}

If we remove the middle vertical links in Figure~\ref{fig:irratflat4} between $v_2$ and $v_3$, we obtain the flat bands $-1$ and $1$, corresponding to $(-1,1,-1,1)^\intercal$ and $(-1,-1,1,1)^\intercal$. Here $V_f=G_7$.

We now exclude the remaining eigenvalues. In view of Lemma~\ref{lem:gensincell}, we know we can remove the top eigenvalues of connected graphs $G_k$. Next, we exclude $\alpha'$ as there are no $\delta_i\in \{0,1\}$ such that $\frac{\alpha'}{2}(\delta_1+\delta_3)=-(\delta_2+\delta_4)$ except all $\delta_j=0$, the LHS being irrational. Similarly we see that $-\sqrt{3},\sigma_3\notin \cF_4^s$. From the values of $\beta_i,\kappa_i$, we see there are no $\delta_k\in\{0,1\}$ such that $\kappa_j(\delta_1+\delta_2)+\delta_3\beta_j+\delta_4=0$ except all $\delta_j=0$,  so $\beta_i\notin\cF_4^s$.

\subsection{Case of 5 vertices}\label{app:5ver}
By Corollary~\ref{cor:increaflat} and the case $\nu=4$, all the listed values are in $\cF_5^s$. We provide some explicit graphs in Figures~\ref{fig:deco2}  and \ref{fig:deco3} for illustration.

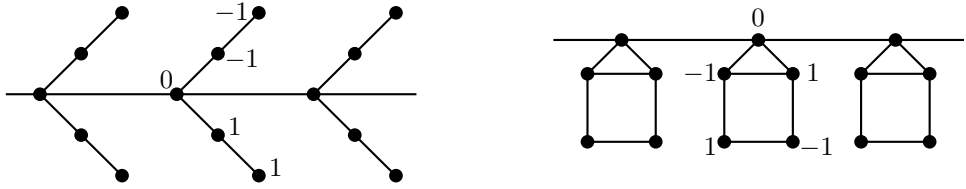
\begin{figure}[h!]
\begin{center}
\setlength{\unitlength}{0.9cm}
\thicklines
\begin{picture}(1.3,2)(-0.7,-1.8)
           \put(-7,-0.8){\line(1,0){6}}
           \put(-6.5,-0.8){\line(1,1){1.2}}
           \put(-6.5,-0.8){\line(1,-1){1.2}}
           \put(-4.5,-0.8){\line(1,1){1.2}}
           \put(-4.5,-0.8){\line(1,-1){1.2}}
           \put(-2.5,-0.8){\line(1,1){1.2}}
           \put(-2.5,-0.8){\line(1,-1){1.2}}
       \put(-6.5,-0.8){\circle*{.2}}
       \put(-4.5,-0.8){\circle*{.2}}
       \put(-2.5,-0.8){\circle*{.2}}
           \put(-5.9,-0.2){\circle*{.2}}
           \put(-3.9,-0.2){\circle*{.2}}
           \put(-1.9,-0.2){\circle*{.2}}
       \put(-5.3,0.4){\circle*{.2}}
       \put(-3.3,0.4){\circle*{.2}}
       \put(-1.3,0.4){\circle*{.2}}
           \put(-5.9,-1.4){\circle*{.2}}
           \put(-3.9,-1.4){\circle*{.2}}
           \put(-1.9,-1.4){\circle*{.2}}
       \put(-5.3,-2){\circle*{.2}}
       \put(-3.3,-2){\circle*{.2}}
       \put(-1.3,-2){\circle*{.2}}
           \put(-4.75,-0.7){\small{$0$}}
           \put(-3.8,-0.4){\small{$-1$}}
           \put(-3.95,0.3){\small{$-1$}}
           \put(-3.75,-1.4){\small{$1$}}
           \put(-3.15,-2){\small{$1$}}
   \put(1,0){\line(1,0){6}}
	    \put(1.5,-1.5){\line(1,0){1}}
	    \put(1.5,-1.5){\line(0,1){1}}
	    \put(2.5,-1.5){\line(0,1){1}}
	    \put(1.5,-0.5){\line(1,1){0.5}}
	    \put(2.5,-0.5){\line(-1,1){0.5}}
	    \put(1.5,-0.5){\line(1,0){1}}
	    \put(3.5,-0.5){\line(1,0){1}}
	    \put(5.5,-0.5){\line(1,0){1}}
	 \put(3.5,-1.5){\line(1,0){1}}
	    \put(3.5,-1.5){\line(0,1){1}}
	    \put(4.5,-1.5){\line(0,1){1}}
	    \put(3.5,-0.5){\line(1,1){0.5}}
	    \put(4.5,-0.5){\line(-1,1){0.5}}  
	 \put(5.5,-1.5){\line(1,0){1}}
	    \put(5.5,-1.5){\line(0,1){1}}
	    \put(6.5,-1.5){\line(0,1){1}}
	    \put(5.5,-0.5){\line(1,1){0.5}}
	    \put(6.5,-0.5){\line(-1,1){0.5}}    
	\put(2,0){\circle*{.2}}
	\put(4,0){\circle*{.2}}
	\put(6,0){\circle*{.2}}
	    \put(1.5,-0.5){\circle*{.2}}
	    \put(3.5,-0.5){\circle*{.2}}
	    \put(5.5,-0.5){\circle*{.2}}
	\put(2.5,-0.5){\circle*{.2}}
	\put(4.5,-0.5){\circle*{.2}}
	\put(6.5,-0.5){\circle*{.2}}
	     \put(1.5,-1.5){\circle*{.2}}
	     \put(3.5,-1.5){\circle*{.2}}
	     \put(5.5,-1.5){\circle*{.2}}
	 \put(2.5,-1.5){\circle*{.2}}
	 \put(4.5,-1.5){\circle*{.2}}
	 \put(6.5,-1.5){\circle*{.2}}
	 \put(3.2,-1.7){\small{$1$}}
	 \put(4.6,-1.7){\small{$-1$}}
	 \put(2.9,-0.6){\small{$-1$}}
	 \put(4.7,-0.6){\small{$1$}}
	 \put(3.9,0.2){\small{$0$}}
\end{picture}
\caption{Flat bands $\lambda=1$ (left) and $\lambda=-2$ (right).}\label{fig:deco2}
\end{center}
\end{figure}

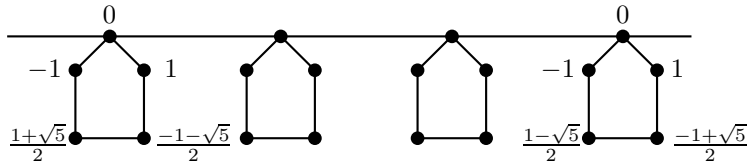
\begin{figure}[h!]
\begin{center}
\setlength{\unitlength}{0.9cm}
\thicklines
\begin{picture}(1.3,1.5)(-0.8,-1.5)
   \put(-6,0){\line(1,0){10}}
	 \put(-5,-1.5){\line(1,0){1}}
	 \put(-5,-1.5){\line(0,1){1}}
	 \put(-4,-1.5){\line(0,1){1}}
	 \put(-5,-0.5){\line(1,1){0.5}}
	 \put(-4,-0.5){\line(-1,1){0.5}}
	    \put(-2.5,-1.5){\line(1,0){1}}
	    \put(-2.5,-1.5){\line(0,1){1}}
	    \put(-1.5,-1.5){\line(0,1){1}}
	    \put(-2.5,-0.5){\line(1,1){0.5}}
	    \put(-1.5,-0.5){\line(-1,1){0.5}}
	 \put(0,-1.5){\line(1,0){1}}
	    \put(0,-1.5){\line(0,1){1}}
	    \put(1,-1.5){\line(0,1){1}}
	    \put(0,-0.5){\line(1,1){0.5}}
	    \put(1,-0.5){\line(-1,1){0.5}}  
	 \put(2.5,-1.5){\line(1,0){1}}
	    \put(2.5,-1.5){\line(0,1){1}}
	    \put(3.5,-1.5){\line(0,1){1}}
	    \put(2.5,-0.5){\line(1,1){0.5}}
	    \put(3.5,-0.5){\line(-1,1){0.5}}    
	\put(-4.5,0){\circle*{.2}}
	\put(-2,0){\circle*{.2}}
	\put(0.5,0){\circle*{.2}}
	\put(3,0){\circle*{.2}}
	    \put(-5,-0.5){\circle*{.2}}
	    \put(-2.5,-0.5){\circle*{.2}}
	    \put(0,-0.5){\circle*{.2}}
	    \put(2.5,-0.5){\circle*{.2}}
	\put(-4,-0.5){\circle*{.2}}
	\put(-1.5,-0.5){\circle*{.2}}
	\put(1,-0.5){\circle*{.2}}
	\put(3.5,-0.5){\circle*{.2}}
	     \put(-5,-1.5){\circle*{.2}}
	     \put(-2.5,-1.5){\circle*{.2}}
	     \put(0,-1.5){\circle*{.2}}
	     \put(2.5,-1.5){\circle*{.2}}
	 \put(-4,-1.5){\circle*{.2}}
	 \put(-1.5,-1.5){\circle*{.2}}
	 \put(1,-1.5){\circle*{.2}}
	 \put(3.5,-1.5){\circle*{.2}}
	 \put(-6,-1.7){\small{$\frac{1+\sqrt{5}}{2}$}}
	 \put(-3.85,-1.7){\small{$\frac{-1-\sqrt{5}}{2}$}}
	 \put(-5.7,-0.6){\small{$-1$}}
	 \put(-3.7,-0.6){\small{$1$}}
	 \put(-4.6,0.2){\small{$0$}}
	 \put(1.5,-1.7){\small{$\frac{1-\sqrt{5}}{2}$}}
	 \put(3.7,-1.7){\small{$\frac{-1+\sqrt{5}}{2}$}}
	 \put(1.8,-0.6){\small{$-1$}}
	 \put(3.7,-0.6){\small{$1$}}
	 \put(2.9,0.2){\small{$0$}}
\end{picture}
\caption{Flat bands $\lambda=\frac{-1-\sqrt{5}}{2}$ (left) and $\lambda=\frac{-1+\sqrt{5}}{2}$ (right).}\label{fig:deco3}
\end{center}
\end{figure}

We now show there are no more flat bands. By Theorem~\ref{thm:moredetailsonfnu+1} it suffices to investigate the $21$ connected graphs on $5$ vertices, listed in Figure~\ref{fig:5graphs}.
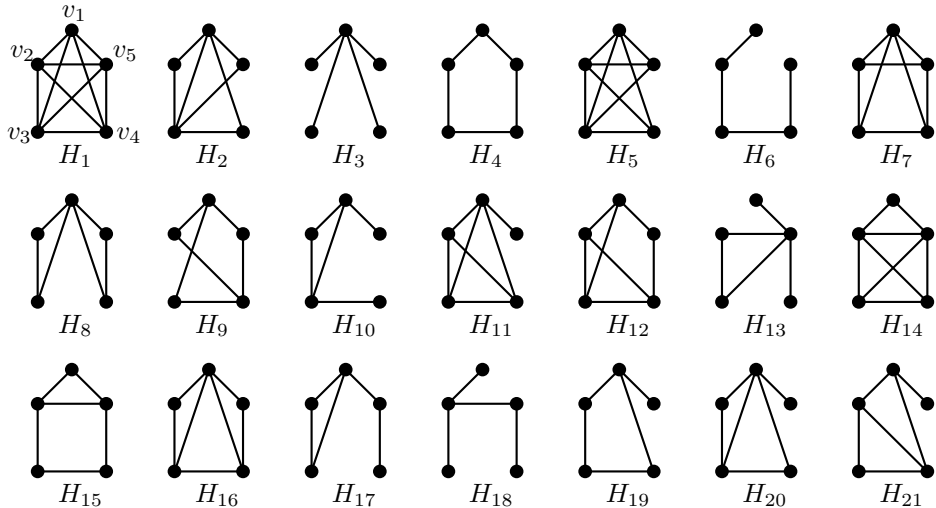
\begin{figure}[h!]
\begin{center}
\setlength{\unitlength}{0.9cm}
\thicklines
\begin{picture}(2.2,6.5)(-1.5,-6.2)
	 \put(-7,-1){\line(1,0){1}}
	 \put(-7,0){\line(1,0){1}}
	 \put(-7,-1){\line(0,1){1}}
	 \put(-6,-1){\line(0,1){1}}
	 \put(-7,-1){\line(1,1){1}}
	 \put(-7,0){\line(1,-1){1}}
	 \put(-7,0){\line(1,1){0.5}}
	 \put(-6,0){\line(-1,1){0.5}}
	 \put(-6.5,0.5){\line(-1,-3){0.5}}
	 \put(-6.5,0.5){\line(1,-3){0.5}}
	    \put(-5,-1){\line(1,0){1}}
	 \put(-5,-1){\line(0,1){1}}
	 \put(-4,0){\line(-1,-1){1}}
	 \put(-5,0){\line(1,1){0.5}}
	 \put(-4,0){\line(-1,1){0.5}}
	 \put(-4.5,0.5){\line(-1,-3){0.5}}
	 \put(-4.5,0.5){\line(1,-3){0.5}}
	 \put(-3,0){\line(1,1){0.5}}
	 \put(-2,0){\line(-1,1){0.5}}
	 \put(-2.5,0.5){\line(-1,-3){0.5}}
	 \put(-2.5,0.5){\line(1,-3){0.5}}
	      \put(-1,-1){\line(1,0){1}}
	 \put(0,-1){\line(0,1){1}}
	 \put(-1,-1){\line(0,1){1}}
	 \put(-1,0){\line(1,1){0.5}}
	 \put(0,0){\line(-1,1){0.5}}
	      \put(1,-1){\line(1,0){1}}
	 \put(1,0){\line(1,0){1}}
	 \put(1,-1){\line(0,1){1}}
	 \put(1,0){\line(1,-1){1}}
	 \put(1,0){\line(1,1){0.5}}
	 \put(2,0){\line(-1,1){0.5}}
	 \put(1,-1){\line(1,1){1}}
	 \put(1.5,0.5){\line(-1,-3){0.5}}
	 \put(1.5,0.5){\line(1,-3){0.5}}
	    \put(3,-1){\line(1,0){1}}
	 \put(4,-1){\line(0,1){1}}
	 \put(3,-1){\line(0,1){1}}
	 \put(3,0){\line(1,1){0.5}}
	    \put(5,-1){\line(1,0){1}}
	 \put(5,-1){\line(0,1){1}}
	 \put(6,-1){\line(0,1){1}}
	 \put(5,0){\line(1,0){1}}
	 \put(5,0){\line(1,1){0.5}}
	 \put(6,0){\line(-1,1){0.5}}
	 \put(5.5,0.5){\line(-1,-3){0.5}}
	 \put(5.5,0.5){\line(1,-3){0.5}}
	    \put(-7,-1){\circle*{.2}}
	 \put(-7,0){\circle*{.2}}
	 \put(-6.5,0.5){\circle*{.2}}
	 \put(-6,-1){\circle*{.2}}   
	 \put(-6,0){\circle*{.2}}
	    \put(-5,-1){\circle*{.2}}
	 \put(-5,0){\circle*{.2}}
	 \put(-4,-1){\circle*{.2}}   
	 \put(-4,0){\circle*{.2}}
	 \put(-4.5,0.5){\circle*{.2}}
	     \put(-3,-1){\circle*{.2}}
	 \put(-3,0){\circle*{.2}}
	 \put(-2,-1){\circle*{.2}}   
	 \put(-2,0){\circle*{.2}}
	 \put(-2.5,0.5){\circle*{.2}}
	     \put(-1,-1){\circle*{.2}}
	 \put(-1,0){\circle*{.2}}
	 \put(0,-1){\circle*{.2}}   
	 \put(0,0){\circle*{.2}}
	 \put(-0.5,0.5){\circle*{.2}}
	    \put(1,-1){\circle*{.2}}
	 \put(1,0){\circle*{.2}}
	 \put(2,-1){\circle*{.2}}   
	 \put(2,0){\circle*{.2}}
	 \put(1.5,0.5){\circle*{.2}}
	    \put(3,-1){\circle*{.2}}
	 \put(3,0){\circle*{.2}}
	 \put(4,-1){\circle*{.2}}   
	 \put(4,0){\circle*{.2}}
	 \put(3.5,0.5){\circle*{.2}}
	    \put(5,-1){\circle*{.2}}
	 \put(5,0){\circle*{.2}}
	 \put(6,-1){\circle*{.2}}   
	 \put(6,0){\circle*{.2}}
	 \put(5.5,0.5){\circle*{.2}}
	   \put(-6.65,0.7){\small{$v_1$}}
	   \put(-7.4,0.1){\small{$v_2$}}
	   \put(-7.45,-1.1){\small{$v_3$}}
	   \put(-5.85,-1.1){\small{$v_4$}}
	   \put(-5.9,0.1){\small{$v_5$}}
	 \put(-7,-3.5){\line(0,1){1}}
	 \put(-6,-3.5){\line(0,1){1}}
	 \put(-6.5,-2){\line(-1,-1){0.5}} 
	 \put(-6.5,-2){\line(1,-1){0.5}} 
	 \put(-6.5,-2){\line(-1,-3){0.5}}  
	 \put(-6.5,-2){\line(1,-3){0.5}} 
	    \put(-5,-3.5){\line(1,0){1}}
	    \put(-4,-3.5){\line(0,1){1}} 
	    \put(-4.5,-2){\line(-1,-1){0.5}}
	    \put(-4.5,-2){\line(1,-1){0.5}}
	    \put(-4.5,-2){\line(-1,-3){0.5}}
	    \put(-4,-3.5){\line(-1,1){1}}
	\put(-3,-3.5){\line(0,1){1}}
	\put(-3,-3.5){\line(1,0){1}}
	\put(-2.5,-2){\line(-1,-1){0.5}} 
	\put(-2.5,-2){\line(1,-1){0.5}} 
	\put(-2.5,-2){\line(-1,-3){0.5}}
	     \put(-1,-3.5){\line(0,1){1}}
	     \put(-1,-3.5){\line(1,0){1}}
	     \put(-1,-2.5){\line(1,-1){1}}
	     \put(-0.5,-2){\line(-1,-1){0.5}}   
	     \put(-0.5,-2){\line(1,-1){0.5}} 
	     \put(-0.5,-2){\line(-1,-3){0.5}} 
	     \put(-0.5,-2){\line(1,-3){0.5}} 
	\put(1,-3.5){\line(0,1){1}}
	\put(1,-3.5){\line(1,0){1}}
	\put(2,-3.5){\line(0,1){1}}
	\put(1,-2.5){\line(1,-1){1}}
	\put(1.5,-2){\line(-1,-1){0.5}}
	\put(1.5,-2){\line(1,-1){0.5}}
	\put(1.5,-2){\line(-1,-3){0.5}}
	    \put(3,-3.5){\line(0,1){1}}
	    \put(3,-2.5){\line(1,0){1}}
	    \put(3,-3.5){\line(1,1){1}}
	    \put(4,-3.5){\line(0,1){1}}
	    \put(3.5,-2){\line(1,-1){0.5}}
	\put(5,-3.5){\line(0,1){1}}
	\put(6,-3.5){\line(0,1){1}}
	\put(5,-3.5){\line(1,0){1}}
	\put(5,-2.5){\line(1,0){1}}
	\put(5,-3.5){\line(1,1){1}}
	\put(5,-2.5){\line(1,-1){1}}
	\put(5.5,-2){\line(-1,-1){0.5}}
	\put(5.5,-2){\line(1,-1){0.5}}
	     \put(-7,-6){\line(0,1){1}}
	     \put(-6,-6){\line(0,1){1}}
	     \put(-7,-6){\line(1,0){1}}
	     \put(-7,-5){\line(1,0){1}}
	     \put(-6.5,-4.5){\line(-1,-1){0.5}}
	     \put(-6.5,-4.5){\line(1,-1){0.5}}
	 \put(-5,-6){\line(0,1){1}}
	 \put(-4,-6){\line(0,1){1}}
	 \put(-5,-6){\line(1,0){1}}
	 \put(-4.5,-4.5){\line(-1,-1){0.5}}
	 \put(-4.5,-4.5){\line(1,-1){0.5}}
	 \put(-4.5,-4.5){\line(-1,-3){0.5}}
	 \put(-4.5,-4.5){\line(1,-3){0.5}}
	      \put(-3,-6){\line(0,1){1}}
	      \put(-2,-6){\line(0,1){1}}
	      \put(-2.5,-4.5){\line(-1,-1){0.5}}
	      \put(-2.5,-4.5){\line(1,-1){0.5}}
	      \put(-2.5,-4.5){\line(-1,-3){0.5}}
	 \put(-1,-6){\line(0,1){1}}
	 \put(0,-6){\line(0,1){1}}
	 \put(-1,-5){\line(1,0){1}}
	 \put(-0.5,-4.5){\line(-1,-1){0.5}}
	      \put(1,-6){\line(0,1){1}}
	      \put(1,-6){\line(1,0){1}}
	      \put(1.5,-4.5){\line(-1,-1){0.5}}
	       \put(1.5,-4.5){\line(1,-1){0.5}}
	        \put(1.5,-4.5){\line(1,-3){0.5}}
	  \put(3,-6){\line(0,1){1}}
	  \put(3,-6){\line(1,0){1}}
	  \put(3.5,-4.5){\line(-1,-1){0.5}}
	  \put(3.5,-4.5){\line(1,-1){0.5}}
	  \put(3.5,-4.5){\line(1,-3){0.5}}
	  \put(3.5,-4.5){\line(-1,-3){0.5}}
	       \put(5,-6){\line(0,1){1}} 
	       \put(5,-6){\line(1,0){1}}
	       \put(5,-5){\line(1,-1){1}}
	       \put(5.5,-4.5){\line(-1,-1){0.5}}
	       \put(5.5,-4.5){\line(1,-1){0.5}}
	       \put(5.5,-4.5){\line(1,-3){0.5}} 
	    \put(-7,-3.5){\circle*{.2}}
	 \put(-7,-2.5){\circle*{.2}}
	 \put(-6,-3.5){\circle*{.2}}   
	 \put(-6,-2.5){\circle*{.2}}
	 \put(-6.5,-2){\circle*{.2}}
	    \put(-5,-3.5){\circle*{.2}}
	 \put(-5,-2.5){\circle*{.2}}
	 \put(-4,-3.5){\circle*{.2}}   
	 \put(-4,-2.5){\circle*{.2}}
	 \put(-4.5,-2){\circle*{.2}}
	     \put(-3,-3.5){\circle*{.2}}
	 \put(-3,-2.5){\circle*{.2}}
	 \put(-2,-3.5){\circle*{.2}}   
	 \put(-2,-2.5){\circle*{.2}}
	 \put(-2.5,-2){\circle*{.2}}
	     \put(-1,-3.5){\circle*{.2}}
	 \put(-1,-2.5){\circle*{.2}}
	 \put(0,-3.5){\circle*{.2}}   
	 \put(0,-2.5){\circle*{.2}}
	 \put(-0.5,-2){\circle*{.2}}
	      \put(1,-3.5){\circle*{.2}}
	 \put(1,-2.5){\circle*{.2}}
	 \put(2,-3.5){\circle*{.2}}   
	 \put(2,-2.5){\circle*{.2}}
	 \put(1.5,-2){\circle*{.2}}
	       \put(3,-3.5){\circle*{.2}}
	 \put(3,-2.5){\circle*{.2}}
	 \put(4,-3.5){\circle*{.2}}   
	 \put(4,-2.5){\circle*{.2}}
	 \put(3.5,-2){\circle*{.2}}
	        \put(5,-3.5){\circle*{.2}}
	 \put(5,-2.5){\circle*{.2}}
	 \put(6,-3.5){\circle*{.2}}   
	 \put(6,-2.5){\circle*{.2}}
	 \put(5.5,-2){\circle*{.2}}
	    \put(-6.7,-1.45){\small{$H_1$}}
	    \put(-4.7,-1.45){\small{$H_2$}}
	    \put(-2.7,-1.45){\small{$H_3$}}
	    \put(-0.7,-1.45){\small{$H_4$}}
	    \put(1.3,-1.45){\small{$H_5$}}
	    \put(3.3,-1.45){\small{$H_6$}}
	    \put(5.3,-1.45){\small{$H_7$}}
	    \put(-6.7,-3.95){\small{$H_8$}}
	    \put(-4.7,-3.95){\small{$H_9$}}
	    \put(-2.7,-3.95){\small{$H_{10}$}}
	    \put(-0.7,-3.95){\small{$H_{11}$}}
	    \put(1.3,-3.95){\small{$H_{12}$}}
	    \put(3.3,-3.95){\small{$H_{13}$}}
	    \put(5.3,-3.95){\small{$H_{14}$}}
	\put(-7,-6){\circle*{.2}}
	 \put(-7,-5){\circle*{.2}}
	 \put(-6,-6){\circle*{.2}}   
	 \put(-6,-5){\circle*{.2}}
	 \put(-6.5,-4.5){\circle*{.2}}
	      \put(-5,-6){\circle*{.2}}
	 \put(-5,-5){\circle*{.2}}
	 \put(-4,-6){\circle*{.2}}   
	 \put(-4,-5){\circle*{.2}}
	 \put(-4.5,-4.5){\circle*{.2}}
	      \put(-3,-6){\circle*{.2}}
	 \put(-3,-5){\circle*{.2}}
	 \put(-2,-6){\circle*{.2}}   
	 \put(-2,-5){\circle*{.2}}
	 \put(-2.5,-4.5){\circle*{.2}}
	      \put(-1,-6){\circle*{.2}}
	 \put(-1,-5){\circle*{.2}}
	 \put(0,-6){\circle*{.2}}   
	 \put(0,-5){\circle*{.2}}
	 \put(-0.5,-4.5){\circle*{.2}}
	       \put(1,-6){\circle*{.2}}
	 \put(1,-5){\circle*{.2}}
	 \put(2,-6){\circle*{.2}}   
	 \put(2,-5){\circle*{.2}}
	 \put(1.5,-4.5){\circle*{.2}}
	      \put(3,-6){\circle*{.2}}
	 \put(3,-5){\circle*{.2}}
	 \put(4,-6){\circle*{.2}}   
	 \put(4,-5){\circle*{.2}}
	 \put(3.5,-4.5){\circle*{.2}}
	        \put(5,-6){\circle*{.2}}
	 \put(5,-5){\circle*{.2}}
	 \put(6,-6){\circle*{.2}}   
	 \put(6,-5){\circle*{.2}}
	 \put(5.5,-4.5){\circle*{.2}}
	       \put(-6.7,-6.45){\small{$H_{15}$}}
	    \put(-4.7,-6.45){\small{$H_{16}$}}
	    \put(-2.7,-6.45){\small{$H_{17}$}}
	    \put(-0.7,-6.45){\small{$H_{18}$}}
	    \put(1.3,-6.45){\small{$H_{19}$}}
	    \put(3.3,-6.45){\small{$H_{20}$}}
	    \put(5.3,-6.45){\small{$H_{21}$}}
\end{picture}
\caption{All \emph{connected} graphs on $5$ vertices.}\label{fig:5graphs}
\end{center}
\end{figure}
We compute the eigenvalues and eigenvectors in the following table.

\begin{scriptsize}
\begin{alignat*}{3}
& H_1   \quad && 4,-1,-1,-1,-1            \quad && (1,1,1,1,1),(-1,0,0,0,1),(-1,0,0,1,0),(-1,0,1,0,0),(-1,1,0,0,0)\\
& H_2   \quad && 3,0,0,-1,-2   \quad && (3,2,3,2,2),(0,-1,0,0,1),(0,-1,0,1,0),(-1,0,1,0,0),(-1,1,-1,1,1)\\
& H_3   \quad && 2,0,0,0,-2 \quad && (2,1,1,1,1),(0,-1,0,0,1),(0,-1,0,1,0),(0,-1,1,0,0),(-2,1,1,1,1)\\
& H_4   \quad && 2,\sigma_2,\sigma_2,\sigma_4,\sigma_4            \quad && (1,1,1,1,1),(\sigma_2,\sigma_3,-1,0,1),(-1,\sigma_3,\sigma_2,1,0),(-1,\sigma_1,\sigma_4,1,0),(\sigma_4,\sigma_1,-1,0,1) \\
& H_5 \quad && a,0,a',-1,-1    \quad && (\tilde{a},\tilde{a},\tilde{a},1,1),(0,0,0,-1,1),(\tilde{a}',\tilde{a}',\tilde{a}',1,1),(-1,0,1,0,0),(-1,1,0,0,0)\\
& H_6   \quad && \sqrt{3},1,0,-1,-\sqrt{3}   \quad && (1,\sqrt{3},2,\sqrt{3},1),(-1,-1,0,1,1),(1,0,-1,0,1),(-1,1,0,-1,1),(1,-\sqrt{3},2,-\sqrt{3},1)\\
& H_7   \quad && b,0,0,b',-2            \quad && (-b',1,1,1,1),(0,0,-1,0,1),(0,-1,0,1,0),(-b,1,1,1,1),(0,-1,1,-1,1)\\
& H_8   \quad && \alpha,1,-1,-1,\alpha'           \quad && (-\alpha',1,1,1,1),(0,-1,-1,1,1),(0,0,0,-1,1),(0,-1,1,0,0),(-\alpha,1,1,1,1)\\
& H_9   \quad && \sqrt{6},0,0,0,-\sqrt{6}            \quad && \textstyle{(\frac{\sqrt{6}}{2},1,1,\frac{\sqrt{6}}{2},1),(0,-1,0,0,1),(-1,0,0,1,0),(0,-1,1,0,0),(\frac{-\sqrt{6}}{2},1,1,\frac{-\sqrt{6}}{2},1)}\\
& H_{10}   \quad && c,0,\sigma_2,c',\sigma_4            \quad && (c,2,c,1,1),(0,-1,0,1,1),(\sigma_2,0,\sigma_3,-1,1),(c',2,c',1,1),(\sigma_4,0,\sigma_1,-1,1)\\
& H_{11}   \quad && d_1,d_2,-1,-1,d_3           \quad && (d_1,d_1',d_1',d_1',1),(d_2,d_2',d_2',d_2',1),(0,-1,0,1,0),(0,-1,1,0,0),(d_3,d_3',d_3',d_3',1)\\
& H_{12}   \quad && e_1,e_2,0,-1,e_3           \quad && (h_1,i_1,i_1,h_1,1),(h_2,i_2,i_2,h_2,1),(-1,0,0,1,0),(0,-1,1,0,0),(h_3,i_3,i_3,h_3,1)\\
& H_{13}   \quad && j_1,j_2,0,-1,j_3           \quad && (k_1,\ell_1,\ell_1,k_1,1),(k_2,\ell_2,\ell_2,k_2,1),(-1,0,0,1,0),(0,-1,1,0,0),(k_3,\ell_3,\ell_3,k_3,1)\\
& H_{14}   \quad && m_1,m_2,-1,-1,m_3           \quad && (n_1,1,p_1,p_1,1),(n_2,1,p_2,p_2,1),(0,-1,0,0,1),(0,0,-1,1,0),(n_3,1,p_3,p_3,1)\\
& H_{15}   \quad && q_1,q_2,0,q_3,-2           \quad && (r_1,1,s_1,s_1,1),(r_2,1,s_2,s_2,1),(0,-1,-1,1,1),(r_3,1,s_3,s_3,1),(0,-1,1,-1,1)\\
& H_{16}   \quad && t_1,\sigma_2,t_2,t_3,\sigma_4           \quad && (u_1,1,v_1,v_1,1),(0,-1,\sigma_3,\sigma_2,1),(u_2,1,v_2,v_2,1),(u_3,1,v_3,v_3,1),(0,-1,\sigma_1,\sigma_4,1)\\
& H_{17}   \quad && w_1,1,w_2,-1,w_3           \quad && (x_1,y_1,y_1,z_1,1),(0,-1,-1,2,2),(x_2,y_2,y_2,z_2,1),(0,-1,1,0,0),(x_3,y_3,y_3,z_3,1)\\
& H_{18}   \quad && \zeta_1,\zeta_2,0,\zeta_3,\zeta_4           \quad && (1,\zeta_1,1,\zeta_2,\sqrt{2}),(1,\zeta_2,1,\zeta_4,-\sqrt{2}),(-1,0,1,0,0),(1,\zeta_3,1,\zeta_1,-\sqrt{2}),(1,\zeta_4,1,\zeta_3,\sqrt{2})\\
& H_{19}   \quad && \eta_1,\eta_2,0,\eta_3,\eta_4           \quad && (\eta_1,\mu_1,\xi_1,\mu_1,1),(\eta_2,\mu_2,\xi_2,\mu_2,1),(0,-1,0,1,0),(\eta_3,\mu_2,\xi_3,\mu_2,1),(\eta_4,\mu_1,\xi_4,\mu_1,1)\\
& H_{20}   \quad && \varsigma_1,\varsigma_2,0,\varsigma_3,\varsigma_4           \quad && (\varsigma_1,\phi_1,\tau_1,\phi_1,1),(\varsigma_2,\phi_2,\tau_2,\phi_2,1),(0,-1,0,1,0),(\varsigma_3,\phi_3,\tau_1,\phi_3,1),(\varsigma_4,\phi_4,\tau_2,\phi_4,1)\\
& H_{21}   \quad && \chi_1,\chi_2,\chi_3,-1,\chi_4           \quad && (\chi_1,\psi_1,\omega_1,\psi_1,1),(\chi_2,\psi_2,\omega_2,\psi_2,1),(\chi_3,\psi_3,\omega_3,\psi_3,1),(0,-1,0,1,0),(\chi_4,\psi_4,\omega_4,\psi_4,1)
\end{alignat*}
\end{scriptsize}
where as before $\sigma_1=\frac{1+\sqrt{5}}{2}$, $\sigma_2 = \frac{-1+\sqrt{5}}{2}$, $\sigma_3 = \frac{1-\sqrt{5}}{2}$, $\sigma_4=\frac{-1-\sqrt{5}}{2}$, $\alpha = \frac{1+\sqrt{17}}{2}$, $\alpha'=\frac{1-\sqrt{17}}{2}$. Also, $a=1+\sqrt{7}$, $a'=1-\sqrt{7}$, $\tilde{a}=\frac{a}{3}$, $\tilde{a}'=\frac{a'}{3}$, $b=1+\sqrt{5}$, $b'=1-\sqrt{5}$, $c=\frac{1+\sqrt{13}}{2}$, $c'=\frac{1-\sqrt{13}}{2}$. Next, $\zeta_1=\sqrt{2+\sqrt{2}}$, $\zeta_2=\sqrt{2-\sqrt{2}}$, $\zeta_3=-\zeta_2$, $\zeta_4=-\zeta_1$. Next, $\eta_1=\sqrt{\frac{5+\sqrt{17}}{2}}$, $\eta_2=\sqrt{\frac{5-\sqrt{17}}{2}}$, $\eta_3=-\eta_2$, $\eta_4=-\eta_1$, $\mu_1=\frac{3+\sqrt{17}}{4}$, $\mu_2=\frac{3-\sqrt{17}}{4}$, $\xi_1=\frac{\sqrt{7+\sqrt{17}}}{2}$, $\xi_2=\frac{-\sqrt{7-\sqrt{17}}}{2}$, $\xi_3=-\xi_2$, $\xi_4=-\xi_1$. Next, $\varsigma_1=\frac{1+\sqrt{5+2\sqrt{2}}}{\sqrt{2}}$, $\varsigma_2=\frac{-1+\sqrt{5-2\sqrt{2}}}{\sqrt{2}}$, $\varsigma_3=\frac{1-\sqrt{5+2\sqrt{2}}}{\sqrt{2}}$, $\varsigma_4 = \frac{-1-\sqrt{5-2\sqrt{2}}}{\sqrt{2}}$, $\phi_1=\frac{\varsigma_1}{\sqrt{2}}$, $\phi_2=\frac{-\varsigma_2}{\sqrt{2}}$, $\phi_3=\frac{\sqrt{\varsigma_3}}{\sqrt{2}}$, $\phi_4=\frac{-\varsigma_4}{\sqrt{2}}$, $\tau_1=1+\sqrt{2}$, $\tau_2=1-\sqrt{2}$.

For the remaining values, we give the equations they solve and numerical approximations at the end of the section.

The only new eigenvalues that $H_1$--$H_4$ can offer are top ones, which cannot be in $\cF_5^s$. Ignoring the top eigenvalue, each of $H_5$--$H_{10}$ may only offer one possible flat band, namely $a',-\sqrt{3},b',\alpha',-\sqrt{6},c'$, respectively. By looking at the eigenvectors, we see that there are no $\delta_j\in \{0,1\}$ such that $\sum_{j=1}^5\delta_j\psi_j=0$, except all $\delta_j=0$. Hence, these values are not in $\cF_5^s$.

The same argument shows in fact that none of the graphs $H_j$, $j>5$ offers any flat band. Again, one looks at the eigenvalues outside the list in \eqref{e:f45}, considers the corresponding eigenvectors $\psi$, and checks by hand that $\sum_{j=1}^5\delta_j\psi_j=0$ for $\delta_j\in \{0,1\}$ implies $\delta_j=0$ for all $j$. For this one can use the numerical approximations of the different quantities.

\begin{footnotesize}
\begin{alignat*}{2}
&\qquad\quad \text{Equations}    \qquad &&\qquad\qquad\qquad \text{Approximate values}\\
& \bullet\ d^3-2d^2-4d+2=0,  \qquad && d_1\approx 3.08613,\, d_2\approx 0.428007,\, d_3\approx -1.51414,\\
& dd'=d+2d', d^2=3d'+1 \qquad && d_1'\approx 2.8414,\, d_2'\approx -0.27227,\, d_3'\approx 0.43087\\
& \bullet\ e^3-e^2-6e+2=0,  \qquad && e_1\approx 2.85577,\, e_2\approx 0.321637,\, e_3\approx -2.17741,\\
& \textstyle{h=\frac{e}{2},\ eh=2i+1,} \qquad && h_1\approx 1.42789,\, h_2\approx 0.160819,\,h_3\approx -1.0887  \\
& ei=e+i \qquad && i_1\approx 1.53886,\,i_2\approx -0.474137,\, i_3\approx 0.685278\\
& \bullet\ j^3-j^2-4j+2=0 \qquad && j_1\approx 2.34292,\, j_2\approx 0.470683,\, j_3\approx -1.81361,\\
& \textstyle{k= \frac{1}{j}, \ \ell+1=j\ell} \qquad&& k_1\approx 0.426817,\,k_2\approx 2.12457,\,k_3\approx -0.551388\\
& \qquad && \ell_1\approx 0.744644,\, \ell_2\approx -1.88923, \,\ell_3\approx -0.355416\\
& \bullet\ m^3-2m^2-5m+2=0 \qquad && m_1\approx 3.3234,\, m_2\approx 0.357926,\, m_3\approx -1.68133\\
& \textstyle{n = \frac{2}{m}, \ p+2=mp,} \qquad && n_1\approx 0.601793,\,n_2\approx 5.58774,\,n_3\approx -1.18953\\
& n+2p+1=m\qquad && p_1\approx 0.860806,\,p_2\approx -3.11491,\, p_3\approx -0.745898\\
& \bullet q^3-2q^2-2q+2=0 \qquad && q_1\approx 2.48119,\, q_2\approx 0.688892,\, q_3\approx -1.17009,\\
& \textstyle{r=\frac{2}{q}, \ qs=s+1} \qquad && r_1\approx 0.806063,\,r_2\approx 2.90321,\,r_3\approx -1.70928,\\
& q=r+s+1\qquad && s_1\approx 0.675131, s_2\approx -3.21432, s_3\approx -0.460811\\
& \bullet\ t^3-t^2-5t-2=0 \qquad && t_1\approx 2.93543,\, t_2\approx -0.462598,\, t_3\approx -1.47283\\
&t=u+v, \ tu=2+2v, \qquad && u_1\approx 1.59477,\, u_2\approx 0.699104,\, u_3\approx -1.79387\\
&  tv=u+v+1 \qquad && v_1\approx 1.34067, \, v_2\approx -1.1617,\, v_3\approx 0.321037\\
& \bullet\ w^3-4w-2=0\qquad && w_1\approx 2.21432,\, w_2\approx -0.539189,\, w_3\approx -1.67513\\
& \textstyle{z=\frac{1}{w}, \ wx=2y+1,} \qquad && x_1\approx 1.76271,\, x_2\approx 1.31545,\, x_3\approx -1.07816\\
& wy=x+y, \ w=x+z \qquad && y_1\approx 1.45161,\, y_2\approx -0.854638,\, y_3\approx 0.403032\\
& \qquad && z_1\approx 0.451606,\, z_2\approx -1.85464,\, z_3\approx -0.596968 \\ 
& \bullet\ \chi^4-\chi^3-5\chi^2+\chi+2=0, \qquad && \chi_1\approx 2.64119,\, \chi_2\approx 0.723742,\, \chi_3\approx -0.589216,\\
& \chi\omega=2\psi, \ \chi^2=2\psi+1 \qquad && \chi_4\approx -1.77571, \, \psi_1\approx 2.98793,\, \psi_2\approx -0.238099,\\
& \chi\psi=\chi+\psi+\omega \qquad && \psi_3\approx -0.326412,\, \psi_4\approx 1.07658,\, \omega_1\approx 2.26257,\\
& \qquad && \omega_2\approx -0.657965,\, \omega_3\approx 1.10796,\, \omega_4\approx -1.21256
\end{alignat*}
\end{footnotesize}

\providecommand{\bysame}{\leavevmode\hbox to3em{\hrulefill}\thinspace}
\providecommand{\MR}{\relax\ifhmode\unskip\space\fi MR }
\providecommand{\MRhref}[2]{%
  \href{http://www.ams.org/mathscinet-getitem?mr=#1}{#2}
}
\providecommand{\href}[2]{#2}

\end{document}